\newtheorem{thm}{Theorem}[section]
\newtheorem{lemma}[thm]{Lemma}
\newtheorem{corollary}[thm]{Corollary}
\newtheorem{claim}[thm]{Claim}
\newtheorem{proposition}[thm]{Proposition}
\newtheorem{definition}[thm]{Definition}
\newtheorem{remark}[thm]{Remark}
\newcommand\E{\mathop{\mathbb{E}}}
\newcommand\card[1]{\left| {#1} \right|}
\newcommand\sett[2]{\left\{ \left. #1 \;\right\vert #2 \right\}}
\newcommand\set[1]{{\left\{ #1 \right\}}}
\newcommand\Prob[2]{{\Pr_{#1}\left[ {#2} \right]}}
\newcommand\cProb[3]{{\Pr_{#1}\left[ \left. #3 \;\right\vert #2 \right]}}
\newcommand\Expect[2]{{\mathop{\mathbb{E}}_{#1}\left[ {#2} \right]}}
\newcommand\norm[1]{\| #1 \|}
\newcommand\defeq{\stackrel{def}{=}}
\newcommand\skipi{{\vskip 10pt}}
\newcommand\inner[2]{\langle{#1},{#2}\rangle}
\newcommand\eps{\varepsilon}
\renewcommand\geq{\geqslant}
\renewcommand\leq{\leqslant}
\newcommand{\M}{\mathcal{M}}
\newcommand{\rom}[1]{\uppercase\expandafter{\romannumeral #1\relax}}
\title{Improved Optimal Testing Results from Global Hypercontractivity}
\author{Tali Kaufman
\thanks{Department of Computer Science, Bar-Ilan University.}
\and Dor Minzer \thanks{Department of Mathematics, Massachusetts Institute of Technology, Cambridge, USA. Supported by a Sloan Research
Fellowship.}}
\date{\vspace{-5ex}}
\begin{document}
\maketitle
\begin{abstract}
The problem of testing low-degree polynomials has received significant attention over the years due to its importance in theoretical computer science,
and in particular in complexity theory. The problem is specified by three parameters: field size $q$, degree $d$ and proximity parameter $\delta$,
and the goal is to design a tester making as few as possible queries to a given function, which is able to distinguish between the case the given function has
degree at most $d$, and the case the given function is $\delta$-far from any degree $d$ function.

With respect to these parameters, we say that a tester is {\it optimal} if it makes $O(q^d+1/\delta)$ queries (which are known to be necessary).
For the field of size $q$, such tester was first given by Bhattacharyya et al.~for $q=2$, and later by Haramaty et al.~\cite{HSS} for all prime powers $q$.
In fact, they showed that the natural $t$-flat tester is an optimal tester for the Reed-Muller code, for an appropriate $t$.
Here, the $t$-flat tester is the tester that picks a uniformly random affine subspace $A$ of dimension $t$, and checks that
${\sf deg}(f|_{A})\leq d$. Their analysis proves that the dependency of the $t$-flat tester on $\delta$ and $d$ is optimal,
however the dependency on the field size, i.e. the hidden constant in the $O$, is a tower-type function in $q$.

We improve the result of Haramaty et al., showing that the dependency on the field size is polynomial. Our technique also
applies in the more general setting of lifted affine invariant codes, and gives the same polynomial dependency on the field size. This
answers a problem raised in~\cite{HRZS}.

Our approach significantly deviates from the strategy taken in earlier works~\cite{BKSSZ,HSS,HRZS}, and is based on studying the structure of the collection of erroneous subspaces,
i.e. subspaces $A$ such that $f|_{A}$ has degree greater than $d$. Towards this end, we observe that these sets are poorly expanding in the affine version of
the Grassmann graph and use that to establish structural results on them via global hypercontractivity. We then use this structure to perform local correction on $f$.
\end{abstract}
\section{Introduction}
The Reed-Muller code is one of the most basic and useful codes in theoretical computer science. A key aspect of the Reed-Muller code, which plays
a significant role in its applications to complexity theory and in particular in the construction of probabilistically checkable proofs, is the
its local testability. Namely, given a truth table of a function over a field, we wish to be able to distinguish between the case that this truth table
represents a Reed-Muller codeword, i.e. a low degree function, and the case it is far from any Reed-Muller codeword.\footnote{Variations of this problems exists,
such as when instead of giving the truth table of a function, one is given a table of supposed restrictions of the function to higher dimensional objects
such as lines or planes; see for example~\cite{RazSafra}.}

Usually, the notion of local-testability of the Reed-Muller codes asserts that when the degree $d$, the field size $q$ and proximity parameter $\delta$ are all
thought of as constants, then there is a tester whose query complexity is constant. With regards to this definition, earlier works~\cite{AKKLR,KR,JPRZ} showed that the
Reed-Muller code is testable. The current work is mainly concerned with the stronger notion of {\it optimal testers} for the Reed-Muller codes.
Here, we wish to get a tester whose query complexity is tight with respect to $d$, $q$ and $\delta$ when they are not thought of as constant.
A typical setting to think about is when the proximity parameter is fairly small, $0<\delta\leq q^{-d}$, in which case it is clear that any tester for the corresponding Reed-Muller code must make at least $\Omega(1/\delta)$
queries.

With respect to this notion, it was shown that the Reed-Muller codes are optimally testable: first in~\cite{BKSSZ} for $\mathbb{F}_2$,
and then to general fields~\cite{HSS}. These works get an optimal dependency of the query complexity on the degree parameter
and the proximity parameter, however they only apply in the case the field size is relatively small. Indeed, the dependency of the rejection probability on
the field size is inverse tower-type, which stems from the fact that their proof utilizes the Density Hales-Jewett theorem.

The main result of this paper is an improvement of the above mentioned results, getting an optimal dependency on the degree parameter while simultaneously
getting a polynomial dependency of the field size. Our approach significantly deviates from the strategy taken in~\cite{BKSSZ,HSS},
and is based on studying expansion properties in the associated affine Grassmann graph. As a side contribution, we prove versions of expansion theorems
that were previously shown for the Grassmann graph~\cite{DKKMS2,KMS2} to the affine Grassmann graph, which turns out to include slight additional complications.

We hope the approach presented herein could be useful in proving optimal testing results for other codes. Indeed,
while our argument does use specific properties of polynomials, it does so ``minimally" and the structure of the underlying
structure of the queries plays a more important role.

\subsection{Local testability of Reed-Muller codes}
Throughout this paper, $p$ denotes a prime number and $q$ denotes a power of $p$.
\begin{definition}
  For a function $f\colon \mathbb{F}_q^n\to\mathbb{F}_q$, we denote
  \[
  \delta_d(f) = \min_{\substack{g\colon\mathbb{F}_q^n\to\mathbb{F}_q \\ \text{ of degree $d$}}} \Prob{x\in \mathbb{F}_q^n}{f(x)\neq g(x)}.
  \]
\end{definition}

In this paper, we consider the $t$-flat tester which is parameterized by a dimension $t$. Here and throughout, a $t$-flat of a given vector
space $W$ (say $W = \mathbb{F}_q^n$) is a $t$-dimensional affine subspace of it.
The $t$-flat tester works by sampling a random $t$-flat $T\subseteq \mathbb{F}_q^n$, and checking that $f|_{T}$ has degree at most $d$.
The $t$ that we pick is the minimal one that makes sense -- i.e. the minimal $t$ such that each $f\colon\mathbb{F}_q^n\to\mathbb{F}_q$
of degree larger than $d$ fails the test with positive probability, which turns out to be $t = \lceil\frac{d+1}{q-q/p}\rceil$~\cite{JPRZ,KR}.

\begin{definition}
  Given a function $f\colon \mathbb{F}_q^n\to\mathbb{F}_q$, and $t,d\in\mathbb{N}$,
  the $t$-flat test proceeds by picking an affine subspace $T$ of dimension $t$, and testing if $f|_{T}$
  is a degree $d$ polynomial. The rejection probability of this test is denoted by $\eps_{t,d}(f)$.
\end{definition}

Let us focus, for a moment, on the case that $q=2$. In this case the $t$-flat test was first analyzed in~\cite{AKKLR}, who proved that $\eps_{t,d}(f)\geq q^{-t} \delta_d(f)$.
An improved analysis of the tester was given in~\cite{BKSSZ}, who showed that the $t$-flat tester is in fact an optimal tester, and in particular that
$\eps_{t,d}(f)\geq \min(c, q^{t}\delta_d(f))$ for some absolute constant $c>0$.
The result was later generalized to general fields in~\cite{HSS}, which reads:
\begin{thm}\label{thm:HSS}
  For all primes $p$ and $q$ powers of $p$, there is $c(q)>0$ such that for all $d\in\mathbb{N}$ and $f\colon\mathbb{F}_q^n\to\mathbb{F}_q$ it holds that
  for $t=\lceil\frac{d+1}{q-q/p}\rceil$ we have
  \[
  \eps_{t,d}(f)\geq c(q)\min(1,q^{t}\delta_{d}(f)).
  \]
\end{thm}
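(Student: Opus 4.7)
The plan is to prove the contrapositive: assuming $\eta := \eps_{t,d}(f)$ is much smaller than $c(q) q^{t}\delta_d(f)$, I derive a contradiction with the definition of $\delta_d(f)$. Let $\mathcal{E}$ be the family of \emph{erroneous} $t$-flats, i.e.\ those $T\subseteq\mathbb{F}_q^n$ with ${\sf deg}(f|_T)>d$; then $|\mathcal{E}|$ is an $\eta$-fraction of all $t$-flats. The goal is to extract strong structural information about $\mathcal{E}$ and use it to build a global degree-$d$ approximant of $f$ witnessing that $\delta_d(f)$ must be small.

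Step 1 (expansion in the affine Grassmann graph). I would work with the graph $G$ whose vertices are $t$-flats and whose edges join pairs $T_1,T_2$ meeting in a $(t-1)$-flat, and argue that $\mathcal{E}$ is poorly expanding in $G$. The heuristic is that if $T_1,T_2$ are both good and share a $(t-1)$-flat $F$, then $f|_{T_1}$ and $f|_{T_2}$ are degree-$\leq d$ polynomials agreeing on $F$; such consistency propagates, so the good set is roughly closed under taking $G$-neighbours. Equivalently, a random edge out of $\mathcal{E}$ tends to stay inside $\mathcal{E}$, and the edge expansion of $\mathcal{E}$ is much smaller than for a generic set of the same density.

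Step 2 (structural theorem via global hypercontractivity). Here the new technology enters: I would prove and invoke an affine analogue of the Grassmann expansion-implies-structure theorems of~\cite{DKKMS2,KMS2}, asserting that any family of $t$-flats with edge expansion $o(1)$ must be localized in a zoom-in/zoom-out sense --- there exist affine subspaces $U\subseteq V$ of dimensions $u<t<v$ such that $\mathcal{E}$ contains a $q^{-O(1)}$ fraction of the $t$-flats $T$ with $U\subseteq T\subseteq V$. The proof templates on the linear-Grassmann argument but replaces the underlying hypercontractive noise operators with affine versions; operationally, an affine $t$-flat is a pair (direction, translate), and the resulting two-parameter Fourier/noise analysis is where the bulk of the technical work lives.

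Step 3 (local correction and iteration). Given such a zoom $(U,V)$, I would restrict to $V$ and iterate: either the restricted instance is dense enough that a direct density argument produces a degree-$d$ function matching $f$ on all but an $O(\eta/q^t)$ fraction of $V$, or the structural theorem applies again with strictly tighter parameters until we reach the base case. At the base case the local degree-$d$ models on overlapping good flats are forced to agree, since they agree on very large common sub-flats; gluing them yields a global $g$ of degree $\leq d$ with $\Prob{x}{f(x)\neq g(x)}\lesssim \eta/q^t$, contradicting the assumed lower bound on $\delta_d(f)$. The main obstacle is Step 2: transporting the DKKMS/KMS structural machinery from the linear to the affine Grassmann. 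Affine flats carry extra degrees of freedom (parallel translation, richer intersection patterns), so one must redefine the relevant noise operators and re-run the global hypercontractivity inequalities on this enlarged object; once that is in place, Steps 1 and 3 follow a broadly standard template, with the twist that the zooms driving the correction of $f$ are affine rather than linear.
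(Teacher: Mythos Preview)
Your high-level architecture (poor expansion of the error set $\to$ structural theorem $\to$ local correction) matches the paper's, but several load-bearing steps are either incorrectly justified or too coarse to close the argument.

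\textbf{Step 1.} Your heuristic for why $\mathcal{E}$ is poorly expanding is wrong. That two good $t$-flats restrict to degree-$d$ polynomials agreeing on their common $(t-1)$-flat says nothing about whether a neighbour of a \emph{bad} flat is bad; consistency of the good set does not constrain the expansion of its complement. The paper's mechanism is entirely algebraic: if a $(t+1)$-flat $B$ contains some $T\in\mathcal{E}$, then at least a $1/q$ fraction of the $t$-flats inside $B$ lie in $\mathcal{E}$ (Lemma~\ref{lem:HSS}). This yields the shadow bound $\mu(\mathcal{E}\!\uparrow)\le q\,\mu(\mathcal{E})$, and a Cauchy--Schwarz argument (Lemma~\ref{lem:sharp_thresh}) converts it to $1-\Phi(\mathcal{E})\ge 1/q$. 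Without this algebraic input there is no reason to expect poor expansion at all.

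\textbf{Steps 2--3.} The paper's structural output is much sharper than a generic zoom pair $(U,V)$: it is a \emph{single point} $x^\star$ such that almost every $t$-flat through $x^\star$ is erroneous. To reach this one must first prove that $\mathcal{E}$ is pseudo-random with respect to the other three zoom-types in the affine Grassmann (affine zoom-outs, linear-part zoom-outs, and linear-part zoom-ins); the last of these (Claim~\ref{claim:pseduo_zoom_in_linear}) is itself nontrivial and requires a separate bootstrapping argument. Only after these are ruled out does the expansion theorem force a point zoom-in. Correspondingly, the correction step is not ``restrict to $V$ and glue'' but: change the single value $f(x^\star)$ so that the rejection probability drops by $\Theta(q^{t-n})$, then iterate (Propositions~\ref{prop:fix_x_star} and~\ref{prop:single_iteration}). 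Establishing that such a value exists uses the second item of Lemma~\ref{lem:HSS} together with another pass of the expansion theorem inside a random $(t+100)$-flat through $x^\star$. Your restrict-and-glue sketch is closer in spirit to the inductive hyperplane approach of~\cite{HSS} than to this paper's argument, and without the single-point structure it is unclear how you would control either the number of correction rounds or the distance of the final approximant.
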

The analysis of both~\cite{BKSSZ} and~\cite{HSS} follows the same high-level inductive approach on the dimension $n$.
Assuming the rejection probability of a given function $f\colon \mathbb{F}_q^n\to\mathbb{F}_q$ is small,
one considers the restriction of $f$ to hyperplanes (i.e., to subspaces of $\mathbb{F}_q^n$ of co-dimension $1$),
on which the inductive hypothesis gives, for each hyperplane $W$, a candidate degree $d$ function that is close to $f|_W$.
The main task then (both in~\cite{BKSSZ} and in~\cite{HSS}) is to ``sew'' together these candidate functions.
Towards this end, a careful choice of the collection of hyperplanes that are most convenient for the task must be made.
This choice is rather simple for $\mathbb{F}_2$, but becomes much more complex in $\mathbb{F}_q$, and to do so
the authors use Ramsey-type results, more specifically the Density Hales-Jewett theorem.
This ultimately leads to an inverse tower-type bound dependency of $c(q)$ on $q$.

Our main result is an improved quantitative version of Theorem~\ref{thm:HSS}. Namely, we prove:
\begin{thm}\label{thm:main}
  For all primes $p$ a prime power $q$ of $p$, there is $c(q) = q^{-O(1)}$ such that for all $d\in\mathbb{N}$ and $f\colon\mathbb{F}_q^n\to\mathbb{F}_q$ it holds that
  for $t=\lceil\frac{d+1}{q-q/p}\rceil$ we have
  \[
  \eps_{t,d}(f)\geq c(q)\min(1,q^{t}\delta_{d}(f)).
  \]
\end{thm}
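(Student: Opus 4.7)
My plan follows the blueprint sketched in the abstract: I will analyze the bad set of flats through expansion in an affine Grassmann graph, apply a global hypercontractivity theorem to extract structure, and then use this structure to locally correct $f$ into a degree-$d$ polynomial. Concretely, assume $\eps_{t,d}(f) \le c(q)$ for a suitably small $c(q)=q^{-O(1)}$, and let $F$ denote the family of $t$-flats $T \subseteq \mathbb{F}_q^n$ on which $f|_T$ has degree strictly greater than $d$; by definition the density of $F$ is exactly $\eps_{t,d}(f)$. The goal is to produce a degree-$d$ polynomial $g$ with $\Pr[f \ne g] \le q^{-t}\eps_{t,d}(f)/c(q)$.

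\textbf{Step 1: Poor expansion of $F$.} I would define the affine Grassmann graph on $t$-flats whose edges connect flats sharing a common $(t-1)$-subflat (or, equivalently, sitting inside a common $(t+1)$-flat). The crucial observation is that if $T_1$ and $T_2$ meet in a large common sub-flat $T'$ and $f|_{T_1}$ already has degree exceeding $d$, then $f|_{T'}$ is correlated with a degree-$(>d)$ restriction and $f|_{T_2}$ is noticeably more likely than chance to also lie in $F$. Quantifying this yields an edge-expansion lower bound for $\overline{F}$, hence an upper bound on the expansion of $F$ in terms of $\eps_{t,d}(f)$.

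\textbf{Step 2: Affine global hypercontractivity.} I would port the global hypercontractivity / expansion theorem proved for the linear Grassmann graph in~\cite{KMS2,DKKMS2} to the affine setting. The output should be of the following form: any set $F$ of density $\eps$ whose edge expansion is $o(1)$ must place mass at least $q^{-O(1)}$ on some \emph{zoom-in/zoom-out} family, i.e.\ on the set of $t$-flats that contain a fixed small affine subspace $A$ and lie inside a fixed affine subspace $B$ of codimension $O(1)$. As noted in the introduction, carrying out the affine version introduces genuine technical complications (the affine direction has to be tracked in the level decomposition), but the dependence on $q$ should remain polynomial since the linear proof is already polynomial and the affine lift only multiplies constants.

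\textbf{Step 3: From structure to local correction.} The zoom-in/zoom-out conclusion says there exist $A \subseteq B \subseteq \mathbb{F}_q^n$ with $\dim A, \mathrm{codim}\, B = O(1)$ such that $f|_T$ fails the test with noticeable probability for $T$ ranging over $t$-flats between $A$ and $B$. Translating this back, $f$ restricted to $B$ must agree with a degree-$d$ polynomial on a large fraction of $B$ (applying an inductive statement on the lower-dimensional space $B$ with the savings coming from the fact that we are now in a bounded codimension). I then define a candidate $g\colon \mathbb{F}_q^n \to \mathbb{F}_q$ by a plurality vote over $t$-flats through each point $x$, verify that $g$ is well defined and has degree $d$ using the standard self-correction machinery for Reed-Muller codes (as in~\cite{AKKLR,HSS}), and bound $\Pr[f \neq g]$ by $q^{-t}\eps_{t,d}(f)/c(q)$ by relating disagreements to rejecting flats.

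The main obstacle, to my mind, is Step 2: establishing a clean affine global hypercontractivity theorem with $q^{O(1)}$ constants. The linear Grassmann case is already delicate, and the affine setting brings in a shift parameter that must be integrated into the level decomposition and into the definition of the zoom-in/zoom-out structures, without breaking the polynomial dependence on $q$. A secondary difficulty is in Step 3: the structure only localizes errors to a codimension-$O(1)$ subspace, and translating this into a global correction while losing only a $q^{O(1)}$ factor (rather than reintroducing a tower after iteration) requires a careful argument that avoids repeatedly paying for the structural extraction.
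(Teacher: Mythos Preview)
Your proposal has the right high-level shape but contains genuine gaps in each step, and Step~3 in particular misses the paper's key idea.

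\textbf{Step 1.} Your mechanism for poor expansion is incorrect. Restricting a bad $T_1$ to a $(t-1)$-subflat $T'$ tells you nothing useful: $T'$ has dimension below $t$, so $f|_{T'}$ need not witness any high-degree behaviour. The actual argument goes through the \emph{upper} shadow: Lemma~\ref{lem:HSS} (from~\cite{HSS}) says that inside any $(t+1)$-flat $B$ with ${\sf deg}(f|_B)>d$, at least a $1/q$-fraction of $t$-subflats are bad, hence $\mu(S\uparrow)\le q\,\mu(S)$. Combining this with the sharp-threshold inequality $\mu(S\uparrow)\ge \mu(S)/(1-\Phi(S))$ (Lemma~\ref{lem:sharp_thresh}) gives $1-\Phi(S)\ge 1/q$ directly.

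\textbf{Step 2.} A generic ``density $\ge q^{-O(1)}$ on some zoom-in/zoom-out'' is far too weak for what follows. The paper needs, and proves, a much sharper conclusion: after first showing that $S$ is $O(\eps)$-pseudo-random with respect to zoom-outs (both kinds) and with respect to zoom-ins on the linear part (Claims~\ref{claim:pseudo_zoom_out} and~\ref{claim:pseduo_zoom_in_linear}), the expansion theorem forces the existence of a single \emph{point} $x^\star$ with $\mu(S_{x^\star})\ge 1-O(q^2\eps^{1/4})$. The ``almost-full'' density $1-o(1)$, not merely $q^{-O(1)}$, is essential for the correction step.

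\textbf{Step 3.} This is the main gap. Defining $g$ by a global plurality vote and invoking the self-correction machinery of~\cite{AKKLR,HSS} is precisely the route that incurs the tower-type loss you are trying to avoid; you yourself flag this as a worry but offer no way around it. The paper does something entirely different: it performs an \emph{iterative one-point correction}. Given $x^\star$ as above, one shows (via a bootstrapping argument inside a random $(t+100)$-flat through $x^\star$, Claim~\ref{claim:is_empty} and Proposition~\ref{prop:fix_x_star}) that a single value change at $x^\star$ raises the fraction of passing $t$-flats through $x^\star$ from $o(1)$ to at least $1/(2q)$, hence drops $\eps_{t,d}(f)$ by at least $\Omega(q^{t-n-1})$. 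Iterating until $\eps_{t,d}=0$ costs at most $O(q\cdot q^{n-t}\eps_{t,d}(f))$ point changes, giving $\delta_d(f)\le O(q^{1-t})\eps_{t,d}(f)$ with no induction on $n$ and no tower. Your plurality-vote plan does not achieve this and, as stated, would not yield $c(q)=q^{-O(1)}$.
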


Our approach is significantly different from the previously mentioned inductive hypothesis.
At a high level, we consider the set of erroneous $t$-flats, i.e.
$S = \sett{T}{{\sf dim}(T) = t, f|_{T}\text{ is not degree $d$}}$ and establish a structural result on it. Interestingly, our starting point is a lemma from~\cite{HSS}
(which is a variant of a lemma already appearing in~\cite{BKSSZ}), which in our language upper bounds the measure of the upper shadow of $S$ as a function of the measure
of $S$. Here, the upper shadow of $S$ is
\[
S\uparrow = \sett{B}{{\sf dim}(B) = t+1, \exists T\in S\text{ such that }T\subseteq B},
\]
and the lemma from~\cite{HSS} asserts that $\mu(S\uparrow)\leq q\cdot\mu(S)$. In~\cite{BKSSZ,HSS} this lemma is used to relate
the rejection probability of the $t$-flat tester and the $(t+k)$-flat tester; in particular it implies that the
rejection probability of the $t$-flat tester is at least $q^{-k}$ times the rejection probability of the $(t+k)$-flat tester.

We use this lemma in a different way. The point here is that as $S$ is a small set, the condition that $\mu(S\uparrow)\leq q\cdot\mu(S)$ is
already itself very restrictive. Examples of $S$ that exhibit such behaviours can be thought of as the subspace analog of
collections of subsets that are nearly tight for the classical Kruskal-Katona theorem. Indeed, a natural type of such small set is
\[
H_x = \sett{T}{{\sf dim}(T) = t, T\ni x}.
\]
We show (simplifying matters somewhat) that indeed, any small $S$ such that $\mu(S\uparrow)\leq q\mu(S)$ must almost contain a copy of $H_x$ for some $x$.
This suggests that an error occurs at $x$ and
that we should change the value of $f(x)$. Indeed, this is the high level strategy we pursue, and we defer
a more detailed description to Section~\ref{sec:techniques}.

\subsection{Lifted affine invariant linear codes}
Our argument in the proof of Theorem~\ref{thm:main} also applies in the more general setting of lifted affine invariant codes. In this case, an analogous result
to Theorem~\ref{thm:HSS} was proved in~\cite{HRZS} with the same type of inverse-type tower dependency on the field size.
Our proof gives a polynomial dependency on the field size making progress along an open problem raised in~\cite{HRZS}.

To present our result for lifted affine invariant codes, we quickly recall the setting. Let $q$ be a prime power, $t\in\mathbb{N}$ and suppose
$\mathcal{B} \subseteq\set{g\colon\mathbb{F}_q^t\to\mathbb{F}_q}$ is
an affine invariant set of function. By that, we mean that $\mathcal{B}$ is closed under composition with affine transformations. Given
$n\geq t$, the $n$-lift of $\mathcal{B}$ denoted by $\mathcal{F} = {\sf Lift}_n(\mathcal{B})$ is defined as
\[
\mathcal{F} = \sett{f\colon\mathbb{F}_q^n\to\mathbb{F}_q}{\forall\text{ $t$-flats } A\subseteq \mathbb{F}_q^n, f|_{A}\in\mathcal{B}}.
\]
For $k\geq t$, the $k$-flat tester proceeds by taking a $k$-flat $A\subseteq\mathbb{F}_q^n$ randomly, and checking that $f|_{A}\in {\sf Lift}_k(\mathcal{B})$, in
which case we say the test accepts. We denote by $\eps_k(f)$ the probability that the $k$-flat tester rejects on $f$.
\begin{thm}\label{thm:main2}
  For all prime powers $q$, there is $c(q) = q^{-O(1)}$ such that for $t\in\mathbb{N}$,
  if $\mathcal{B}\subseteq\set{g\colon\mathbb{F}_q^t\to\mathbb{F}_q}$
  is an affine invariant linear code, and $f\colon\mathbb{F}_q^n\to\mathbb{F}_q$, then
  \[
  \eps_{t}(f)\geq c(q)\min(1,q^{t}\Delta(f,\mathcal{F})),
  \]
  where $\Delta(f,\mathcal{F})$ is the relative Hamming distance between $f$ and $\mathcal{F}$.
\end{thm}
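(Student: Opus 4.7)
The plan is to adapt the strategy used for the Reed-Muller case (Theorem~\ref{thm:main}), leveraging the observation, explicit in the introduction, that the argument there uses the polynomial structure only ``minimally''. Concretely, define the set of erroneous $t$-flats
\[
S = \sett{T}{{\sf dim}(T) = t,\ f|_T \notin \mathcal{B}},
\]
so that $\mu(S) = \eps_t(f)$. We may assume $\mu(S) \leq c_0/q$ for a sufficiently small absolute constant $c_0 > 0$, else the conclusion of the theorem is immediate.

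The first step is to establish the upper-shadow inequality $\mu(S\uparrow) \leq q\mu(S)$ in the affine Grassmann. The analogous inequality proved in~\cite{HSS} relies only on the monotonicity property ``$f|_T \notin \mathcal{B}$ and $T \subseteq B$ imply $f|_B \notin {\sf Lift}_{{\sf dim}(B)}(\mathcal{B})$'', which is immediate from the definition of the lift, together with a combinatorial double-counting on the affine Grassmann; both ingredients transfer verbatim to the present setting.

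The second step is to invoke the structural theorem developed in the body of the paper via global hypercontractivity on the affine Grassmann graph: any small set $S$ of $t$-flats with $\mu(S\uparrow) \leq q\mu(S)$ is, up to a small error, a union of stars $H_x = \sett{T}{{\sf dim}(T) = t,\ x \in T}$ over points $x$ in a set $X \subseteq \mathbb{F}_q^n$ of density $|X|/q^n \lll q^{-t}\mu(S)$. This conclusion is code-independent: it is a property of sets in the affine Grassmann with small upper shadow, and so applies identically when $\mathcal{F} = {\sf Lift}_n(\mathcal{B})$.

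The third step is local correction. For each $x \in X$, since $\mathcal{B}$ is a linear affine-invariant code, for every $t$-flat $T \ni x$ there is at most one scalar $v \in \mathbb{F}_q$ whose substitution at $x$ makes the resulting $f'|_T$ lie in $\mathcal{B}$; define $f'(x)$ as the plurality of these candidate scalars. The structural theorem guarantees that almost every $T \in H_x$ lies in $S$ and contributes a candidate; consistency across candidates follows by a pairwise argument on two $t$-flats through $x$ intersecting in a $(t-1)$-flat, using linearity of $\mathcal{B}$ so that the difference of two corrected restrictions lies in $\mathcal{B}$ and is therefore determined by the common intersection. The resulting $f'$ differs from $f$ only on $X$, and thus $\Delta(f,\mathcal{F}) \leq |X|/q^n \lll q^{-t}\mu(S)$, as required. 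The principal obstacle I anticipate is precisely this last step: in the Reed-Muller case the canonical corrected value at $x$ arises transparently from polynomial interpolation, whereas here one must derive it purely from code-membership tests on flats through $x$ and verify that simultaneously correcting at all $x \in X$ places $f'$ into $\mathcal{F}$ rather than merely close to it. This is where the generalization genuinely needs new input compared to Theorem~\ref{thm:main}, though the groundwork laid by the code-independent structural theorem on the affine Grassmann reduces it to a tractable local problem.
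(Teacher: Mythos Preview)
Two genuine gaps. First, the upper-shadow inequality $\mu(S\uparrow)\leq q\mu(S)$ does \emph{not} follow from the monotonicity you cite; monotonicity is the upward direction (``$T\subseteq B$ and $T$ fails $\Rightarrow$ $B$ fails''), whereas what is needed is the downward statement that if a $(t+1)$-flat $B$ fails then at least a $1/q$ fraction of the $t$-flats inside it fail. For lifted affine-invariant codes this is item~1 of Lemma~\ref{lem:HSS2}, whose proof (following~\cite{HRZS}) goes through the monomial-spread lemma of Kaufman--Sudan and is not a combinatorial double-count. Second, the structural result you invoke is not what Theorem~\ref{thm:expansion} delivers: it does not decompose $S$ as a union of stars $\bigcup_{x\in X}H_x$, but produces a \emph{single} $x^{\star}$ with $\mu(S_{x^{\star}})\geq 1-o(1)$, and only after one has verified that $S$ is pseudo-random with respect to zoom-outs (both kinds) and with respect to zoom-ins on the linear part. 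The last of these verifications (Claim~\ref{claim:pseduo_zoom_in_linear}) is itself a nontrivial bootstrap that re-applies the expansion theorem on smaller flats, and you have skipped it entirely.

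Consequently the paper does not perform a global correction on a set $X$; it corrects one point $x^{\star}$ at a time, shows the rejection probability drops by $\Omega(q^{t-n}/q)$, and iterates (Proposition~\ref{prop:single_iteration}). The single-point correction for affine-invariant codes is also not the plurality-vote-plus-pairwise-consistency you sketch: the paper first shows (Claim~\ref{claim:is_empty}) that on a well-chosen $(t+100)$-flat $A\ni x^{\star}$ every $t$-flat not through $x^{\star}$ already passes, so that on any $(t+1)$-flat $A'\subseteq A$ through $x^{\star}$ the failing hyperplanes are exactly those through $x^{\star}$, and then invokes item~2 of Lemma~\ref{lem:HSS2} to change $f(x^{\star})$ so that $f|_{A'}\in\mathcal{F}$. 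That second item of Lemma~\ref{lem:HSS2} is the new algebraic ingredient specific to the affine-invariant setting, and your ``difference of two corrected restrictions lies in $\mathcal{B}$'' argument does not supply a replacement for it.
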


Using a reduction from~\cite{HRZS}, one may use Theorem~\ref{thm:main2} in order to get the following slightly more general result.
The proof is exactly as in~\cite[Section 7]{HRZS}, and is hence omitted.
\begin{thm}\label{thm:main3}
  For all primes $p$, a power $q$ of $p$, and $Q$ a power of $q$, there is $c(Q) = Q^{-O(1)}$
  such that for $t\in\mathbb{N}$, if $\mathcal{B}\subseteq\set{g\colon\mathbb{F}_Q^t\to\mathbb{F}_q}$
  is an affine invariant linear code, and $f\colon\mathbb{F}_Q^n\to\mathbb{F}_q$, then
  \[
  \eps_{t}(f)\geq c(Q)\min(1,Q^{t}\Delta(f,\mathcal{F})),
  \]
  where $\Delta(f,\mathcal{F})$ is the relative Hamming distance between $f$ and $\mathcal{F}$.
\end{thm}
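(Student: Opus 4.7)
The plan is to deduce Theorem~\ref{thm:main3} from Theorem~\ref{thm:main2} by the same reduction as in~\cite[Section 7]{HRZS}. Write $Q = q^s$ and fix an $\mathbb{F}_q$-basis of $\mathbb{F}_Q$, identifying $\mathbb{F}_Q^n$ with $\mathbb{F}_q^{sn}$ and $\mathbb{F}_Q^t$ with $\mathbb{F}_q^{st}$ as $\mathbb{F}_q$-vector spaces. Under this identification, a function $f\colon\mathbb{F}_Q^n\to\mathbb{F}_q$ becomes a function $\tilde f\colon\mathbb{F}_q^{sn}\to\mathbb{F}_q$, and every $\mathbb{F}_Q$-affine $t$-flat $A\subseteq\mathbb{F}_Q^n$ becomes a very particular kind of $\mathbb{F}_q$-affine $st$-flat $\tilde A \subseteq \mathbb{F}_q^{sn}$ (namely those whose direction subspace is an $\mathbb{F}_Q$-subspace). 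Call such $\tilde A$ \emph{$\mathbb{F}_Q$-structured}. The goal is to pass from the original $\mathbb{F}_Q$-test on $\mathcal{B}$ to an $\mathbb{F}_q$-test on a suitable $\mathbb{F}_q$-affine invariant linear code on $\mathbb{F}_q^{st}$.

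The key construction is an auxiliary code $\mathcal{B}^{\star}\subseteq\set{g\colon\mathbb{F}_q^{st}\to\mathbb{F}_q}$, defined as the set of all $g$ such that $g\circ L\in\mathcal{B}$ (under the identification $\mathbb{F}_q^{st}\cong\mathbb{F}_Q^t$) for every $\mathbb{F}_q$-affine embedding $L\colon\mathbb{F}_q^{st}\to\mathbb{F}_q^{st}$ whose image is $\mathbb{F}_Q$-structured. By construction $\mathcal{B}^{\star}$ is $\mathbb{F}_q$-affine invariant and $\mathbb{F}_q$-linear, and the crucial algebraic lemma one must verify (this is where the reduction really does its work) is that ${\sf Lift}_{sn}(\mathcal{B}^{\star})$, viewed as a family of functions on $\mathbb{F}_Q^n$, coincides with ${\sf Lift}_n(\mathcal{B}) = \mathcal{F}$. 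Granting this, $\Delta(\tilde f,{\sf Lift}_{sn}(\mathcal{B}^\star)) = \Delta(f,\mathcal{F})$, and applying Theorem~\ref{thm:main2} to $\tilde f$ with code $\mathcal{B}^\star$ at dimension $st$ gives
\[
\eps_{st}(\tilde f)\;\geq\;c(q)\min\!\bigl(1,\,q^{st}\Delta(\tilde f,{\sf Lift}_{sn}(\mathcal{B}^\star))\bigr)\;=\;c(q)\min\!\bigl(1,\,Q^{t}\Delta(f,\mathcal{F})\bigr).
\]

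The last step is to pass from the $\mathbb{F}_q$-test rejection probability $\eps_{st}(\tilde f)$ back to the $\mathbb{F}_Q$-test rejection probability $\eps_t(f)$. Conditional on being $\mathbb{F}_Q$-structured, a uniform $\mathbb{F}_q$-affine $st$-flat is a uniform $\mathbb{F}_Q$-affine $t$-flat, so it suffices to show that a uniformly random $\mathbb{F}_q$-affine $st$-flat is $\mathbb{F}_Q$-structured with probability at least $Q^{-O(1)}$. This is a standard counting of $\mathbb{F}_q$-subspaces of $\mathbb{F}_q^{sn}$ that happen to be $\mathbb{F}_Q$-subspaces, and yields the bound $\eps_t(f)\geq Q^{-O(1)}\eps_{st}(\tilde f)$, absorbing the factor into the constant $c(Q)=Q^{-O(1)}$.

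The main obstacle, and the reason the argument is nontrivial rather than purely formal, is verifying the identity ${\sf Lift}_{sn}(\mathcal{B}^{\star})\equiv{\sf Lift}_n(\mathcal{B})$ on $\mathbb{F}_Q^n$: one direction is immediate from the definition of $\mathcal{B}^{\star}$, while the other direction requires showing that if every $\mathbb{F}_Q$-structured $st$-flat restriction of $\tilde f$ lies in $\mathcal{B}^{\star}$, then in fact every $\mathbb{F}_q$-affine $st$-flat restriction does. This is precisely the content of the reduction carried out in~\cite[Section 7]{HRZS}, so we rely on it and omit the details.
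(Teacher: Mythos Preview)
Your overall strategy --- reduce to Theorem~\ref{thm:main2} via the construction in~\cite[Section 7]{HRZS} --- is exactly what the paper does, and you correctly isolate the lift identity ${\sf Lift}_{sn}(\mathcal{B}^\star) = {\sf Lift}_n(\mathcal{B})$ as the substantive algebraic content. However, the ``last step'' you add is wrong, and it is not part of the reduction.

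You claim that a uniformly random $\mathbb{F}_q$-affine $st$-flat in $\mathbb{F}_q^{sn}$ is $\mathbb{F}_Q$-structured with probability at least $Q^{-O(1)}$. This is false: a Gaussian-binomial count gives on the order of $q^{st(sn-st)}$ many $\mathbb{F}_q$-linear $st$-dimensional subspaces, versus only $Q^{t(n-t)} = q^{st(n-t)}$ many $\mathbb{F}_Q$-linear $t$-dimensional subspaces, so the probability of being $\mathbb{F}_Q$-structured is of order $q^{-st(n-t)(s-1)}$, which goes to $0$ exponentially in $n$ and cannot be lower bounded by any function of $Q$ alone. Even if the probability claim held, the conditioning you describe yields $\eps_{st}(\tilde f) \geq p_{\text{struct}}\cdot(\text{rejection on structured flats})$, which upper bounds $\eps_t(f)$ by $\eps_{st}(\tilde f)/p_{\text{struct}}$ rather than lower bounding it --- the wrong direction for what you need.

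The reduction in~\cite{HRZS} does not pass through such a sampling argument. The comparison between the $\mathbb{F}_Q$-flat tester and the $\mathbb{F}_q$-flat tester is handled structurally via the $\mathbb{F}_q$-affine invariance that the lift identity encodes, not by a probability-of-structuredness estimate. You should go back to that section rather than invent a counting shortcut here.
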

\subsection{The affine Grassmann graph}
To execute our approach we consider the affine Grassmann graph along with an appropriate random walk on it.
Given an affine space $W$ of dimension $k$ over $\mathbb{F}_q$ and an integer $\ell<k$, the affine Grassmann graph ${\sf AffGras}(W,\ell)$ contains as vertices all $\ell$-flats
of $W$, which we denote by $V(k,\ell)$ when the space $W$ is clear from the context.
The edges of the graph are thought of as weighted according to the following randomized process: starting from a $\ell$-flat $A$, we
take a random $(\ell+1)$-flat satisfying $A\subseteq B\subseteq W$, and then take a random $\ell$-flat $A_2\subseteq B$; the weight of the
edge $(A,A_2)$ is the probability it is sampled by this process.

Our first observation is that combining the lemma from~\cite{HSS} with sharp-threshold type result from~\cite{KLLM}, one concludes that
\[
1-\Phi(S)\geq \frac{1}{q}.
\]
Here, $\Phi(S)$ is the expansion of the set $S$, defined as $\Prob{A\in S, A'\sim A\text{ neighbour}}{A'\not\in S}$. Thus, we would be able to gain significant
insight into the structure of $S$ provided we could give sufficiently good characterization of sets in the affine Grassmann graph that are poorly expanding.
This is exactly the type of question that was studied recently in the context of the $2$-to-$2$ Games Theorem~\cite{KMS1,DKKMS1,DKKMS2,KMS2}, and we leverage
insights gained from there in our case of interest.

The affine variant of the Grassmann graph includes further complications, which we explain next. Roughly speaking, the eigenvalues of it are $q^{-i}$ for $i=0,\ldots,\ell$,
hence it can be shown that for sets $S$ of size smaller than $\eps$, one always has $\Phi(S)\geq 1-1/q - O(\eps)$. Thus, in our case we are interested in studying the
structure of sets $S$ that nearly attain this minimum.

The two very natural analogs of small poorly expanding sets are the analogs of zoom-in and zoom-out sets from the non-affine version of the Grassmann graph,
and are defined as follows. For a vector
$z\in W$ and an affine hyperplane $W'\subseteq W$, and zoom-in with respect to $z$ and the zoom-out with respect to $W'$ sets are defined as
\[
H_z = \sett{A\in V(W,\ell)}{z\in A},
\qquad\qquad
H_{W'} = \sett{A\in V(W,\ell)}{A\subseteq W'}.
\]
It can be shown without much difficulty that $H_z$ and $H_{W'}$ have small fractional size, and that
$1-\Phi(H_z)\geq \frac{1}{q}$, $1-\Phi(H_{W'})\geq \frac{1}{q}$. These are the natural analogs of sets
that were shown in~\cite{KMS2} to capture, in some sense, the structure of all small non-expanding sets
in the Grassmann graph. However, in the affine version of the Grassmann graph there are more examples.

Given $z\in W\setminus\set{0}$ and a hyperplane $W'\subseteq W$, one may consider {\it the zoom in and zoom-out with respect to
the linear part}. To define these, first let us note that given an affine subspace $A\in V(W,\ell)$, one may write
$A = x+A'$ where $A'\subseteq W$ is a linear space, and $x\in A$ is some vector (we note that $A'$ is
unique but $x$ is not). Thus, we may define
\begin{align*}
&H_{z,{\sf lin}} = \sett{A\in V(W,\ell)}{A = x+\tilde{A}\text{ for some $x\in W$ and a linear space $\tilde{A}$},\text{ and } z\in \tilde{A}},\\
&H_{W',{\sf lin}} = \sett{A\in V(W,\ell)}{A = x+\tilde{A}\text{ for some $x\in W$ and a linear space $\tilde{A}$},\text{ and } \tilde{A}\subseteq W'}.
\end{align*}
It is easy to see that these sets are also small and have expansion roughly $1-\frac{1}{q}$, and furthermore that they are ``genuinely'' new examples
(i.e., they are linear combinations of the basic zoom-in/zoom-out sets). We show that in a sense, these examples entirely capture the structure
of sets $S$ with $1-\Phi(S)\geq \frac{1}{q}$.

To be more precise, we say a set $S$ is $\xi$-pseudo-random with respect to zoom-ins/ zoom-outs/ zoom-ins on
the linear part/ zoom outs of the linear part -- say zoom-ins for concreteness -- if $\mu(S\cap H_z)\leq \xi\mu(H_z)$ for all $z\in W$
(see Definition~\ref{def:pseudo_random} for a more formal definition). In this language, our main expansion result, Theorem~\ref{thm:expansion},
asserts that if $S$ is $\xi$-pseudo-random with respect to zoom-outs (standard and on the linear part), and with respect to zoom-ins on the linear
part, and $1-\Phi(S)\geq \frac{1}{q}$, then $S$ is highly non-pseudo-random with respect to zoom-ins. Namely, there is $z$ such that
$\mu(S\cap H_z)\geq (1-o(1))\mu(H_z)$, or in words $S$ almost contains $H_z$; see Theorem~\ref{thm:expansion} for a precise statement.
Here and throughout, $\mu$ represents the uniform measure over $V(W,\ell)$.

With our testing question in mind, this sort of structure appears natural as it suggests that we may want to change the value of $f$  on $z$.
\begin{remark}
A few remarks are in order:
\begin{enumerate}
  \item Our expansion result here is tailored for our application, however our technique can be used to establish weaker structural
  for a small set $S$ so long as $1-\Phi(S)\geq \frac{1}{q^2} + \delta$.

  \item The diligent reader may notice that in the statement above, there is an asymmetric role to each one of the zoom-sets. This is a by-product again of the application
  we have in mind as we can show that for the set $S$ of erroneous subspaces, these pseudo-randomness conditions hold. In more generality though, it may be proved that if
  a set $S$ is very pseudo-random with respect to $3$ of the zoom notions (i.e. $\xi$-pseudo-random where $\xi$ is small), then it is very not pseudo-random with respect
  to the last notion of zoom (i.e. almost containing a copy of such set).

  \item It would be interesting to prove expansion theorems in the affine Grassmann graph in greater generality similarly to the way it was done in~\cite{KMS2}. Namely,
  proving that if $1-\Phi(S)\geq \frac{1}{q^{r}} + \delta$, then $S$ cannot be $\eps = \eps(\delta,r)>0$ pseudo-random with respect to $r$-wise intersections of zoom-sets.
  That is, there must be copies of zoom-sets $H^1,\ldots,H^r$ that intersect non-trivially such that $\mu(S\cap \bigcap_{i=1}^{r} H_i)\geq \xi\mu(\bigcap_{i=1}^r H_i)$.
\end{enumerate}
\end{remark}

\subsection{Our techniques}\label{sec:techniques}
Our expansion theorem is proved using Fourier analysis similarly to~\cite{KMS2} and is deferred to the appendix.
Next, we explain how it is used in order to prove Theorem~\ref{thm:main}.

The proof has two components. We consider the collection of erroneous subspaces, i.e.
\[
S = \sett{A\in {\sf AffGras}(\mathbb{F}_q^n, t)}{f|_{A} \text{ is not of degree $d$}}.
\]
First, as explained earlier we observe that $\mu(S\uparrow)\leq q\mu(S)$, and deduce that
$1-\Phi(S)\geq \frac{1}{q}$. We then wish to apply our expansion theorem, and towards this end we show that  $S$ is
pseudo-random with respect to zoom-outs (both standard and with respect to the linear part), as well as on zoom-ins with respect to the linear part.
Proving pseudo-randomness with respect to zoom-outs is fairly easy as these sets enlarge considerably when taking an upper shadow.
The proof that $S$ is pseudo-random with respect to zoom-ins on the linear part is more tricky. In a sense, the idea is that given a $(t+1)$-flat
$A = x+\tilde{A}$ such that $z\in\tilde{A}$ but otherwise $A$ is random, we may find $t$-flats $B_1,\ldots,B_{t+1}\subseteq A$ such that marginally
each one of them is distributed uniformly, and together they cover $A$ entirely. Thus, as errors do not really accumulate on these $B$'s, they cannot be concentrated on
$A$'s of this type. The formal proof proceeds a bit differently and makes use again of our expansion theorem in a lower-order affine Grassmann graph; see Claim~\ref{claim:pseduo_zoom_in_linear} for details.

We then deduce, using our expansion theorem, that $S$ nearly contains a copy of $H_x$ for some $x\in\mathbb{F}_q^n$.
In words, the test almost always fails if it is being conducted on a subspace $A$ that contains the point $x$.
This suggests that $x$ is a point in which we should change the value of $f$ in order to get closer to a degree $d$ polynomial, and we indeed argue this way.

This is the correction step of the argument. The simplest case is $q=2$, which is instructive to consider.
Indeed, in this case we have that $t=d+1$, and we argue that if we flip the value of the point $x$,
the rejection probability of the tests drops additively by $\Theta(2^{d-n})$. The point here is that if $g$ is a polynomial of degree $d+1$ on
a subspace of dimension $d+1$ over $\mathbb{F}_2$, then flipping any single value of $g$ results in a polynomial of degree at most $d$. Iterating this
argument shows that after we change the values of $f$ on at most $O(2^{n-d} \eps)$ points, the rejection probability drops to $0$, at which point our function must be
a degree $d$ polynomial.

In the more general case of $\mathbb{F}_q$, the correction step is not as simple and requires more work. Here, given such point $x$, we consider a random
affine subspace $A$ of dimension $t+100$ containing $x$. We now focus on affine subspaces $B\subseteq A$ of dimension $t$, and note that expectedly over the
random choice of $A$:
\begin{enumerate}
  \item the fraction of such $B$'s containing $x$ on which $f|_{B}$ is degree $d$ is $O(\eps)$;
  \item the fraction of such $B$'s not containing $x$, on which $f|_B$ is degree $d$, is $1-O(\eps)$.
\end{enumerate}
By Markov's inequality, we have that with probability at least $0.99$ both of these events hold simultaneously. We fix such $A$,
and next claim that provided $\eps$ is sufficiently small (depending only on $q$), we can change the value of $f(x)$ in some way
so that the fraction of $B$'s containing $x$ as in the first item above, would be at least $1/(2q)$ (thus lowering the rejection
probability of the test on such subspaces). We establish that via two steps:

\paragraph{Bootstrapping errors on $B\not\ni x$.}
We show that provided that $\eps$ is small, having chosen $A$ as above, if $f|_{B}$ is degree $d$ for at least
$1-O(\eps)$ fraction of the $t$-flats $B\subseteq A$ not containing $x$, then the test must pass in fact on all of these $t$-flats. Intuitively, the idea here is to consider the random walk
  on $B$'s that moves from $B$ of dimension $t$ to $B'$ of dimension $t+1$ that doesn't contain $x$, and then back to $B''\subseteq B'$ of dimension $t$ and show that,
  as before, due to expansion considerations, the errors must be very structured as zoom ins. However, as $\eps$ is very small, zoom-ins are too large
  and hence the set of errors must be empty.

  The precise execution of this step is done differently, as we do not really wish to study this random walk operation as described above; instead, we look
  at intermediate $(t+50)$-flats $C\subseteq A$ that do not contain $x$, and perform the standard random walk on ${\sf AffGras}(C,t)$.

\paragraph{Correcting errors on $B\ni x$.} Having establish the previous step, we look at $A'\subseteq A$ of dimension $t+1$ that contains $x$, and note
that all erroneous affine subspaces $B\subseteq A'$ must contain $x$. As these constitute only $1/q$ fraction of the subspaces contained in $A'$, due to Lemma~\ref{lem:HSS}
they must all be erroneous. Next, we show it is possible to change $f(x)$ and make at least one of these $B$'s pass the test, which then by the second bullet in
Lemma~\ref{lem:HSS} guarantees that $f|_{A'}$ must be degree $d$. Indeed, $f|_{A'}$ has degree at most $(t+2)(q-1)$, and we can add to it a multiple of
$g(z) = 1_{z=x}$ to eliminate its highest degree monomial (as there is only one such monomial), say it is $f+g$. The function $(f+g)|_{A'}$ then has degree strictly smaller than
$(t+2)(q-1)$, hence by Lemma~\ref{lem:HSS} unless $(f+g)|_{A'}$ is degree at most $d$, it must be the case that more than $1/q$ fraction of the $B\subseteq A'$ of dimension $t$
fail the test. However, these can only still be subspaces containing $x$, which are at most $1/q$ fraction. Thus, $(f+g)|_{A'}$ has degree $d$, so that we showed that we may change
$f(x)$ and make $f|_{A'}$ degree $d$.

\section{Preliminaries}
\subsection{Relating different testers}
In this section, we provide several basic facts that will be used throughout the proof.
Below, the first bullet is~\cite{HSS}[Lemma 4.6], and the second bullet is a slight refinement which elaborates on when a given function $f$ may be tight for the first bullet.
Since the proof of the refinement is a small tweak on the original proof from~\cite{HSS},
we fully record it here.
\begin{lemma}\label{lem:HSS}
  Let $p$ be prime, $q\in\mathbb{N}$ be a power of $p$ and $d\in\mathbb{N}$ and set $t = \lceil{\frac{d+1}{q-q/p}}\rceil$. Suppose that
  $k\geq t$, and let $f\colon\mathbb{F}_q^{k+1}\to\mathbb{F}_q$. Then
  \begin{enumerate}
    \item If ${\sf deg}(f) > d$, then $\eps_{k,d}(f)\geq \frac{1}{q}$.
    \item If $d < {\sf deg}(f) < (k+1)(q-1)$, then $\eps_{k,d}(f) > \frac{1}{q}$.
  \end{enumerate}
\end{lemma}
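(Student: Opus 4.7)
The first bullet is precisely \cite[Lemma 4.6]{HSS}, so I would quote it without reproof. My effort concentrates on the second bullet---the strict inequality refinement under the hypothesis $d < {\sf deg}(f) < (k+1)(q-1)$.

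Guiding intuition for the extremal structure: the case $\eps_{k,d}(f) = 1/q$ is exemplified by $f \equiv \alpha \cdot \mathbf{1}_p$ modulo a degree-$\leq d$ polynomial, for some point $p \in \mathbb{F}_q^{k+1}$ and some $\alpha \in \mathbb{F}_q \setminus \set{0}$, where $\mathbf{1}_p$ denotes the indicator of $\set{p}$. For such $f$, the rejecting hyperplanes are exactly those through $p$: indeed $\mathbf{1}_p|_H$ has degree $k(q-1) > d$ when $p \in H$ and vanishes otherwise, so the rejecting hyperplanes form a $1/q$ fraction of all hyperplanes in $\mathbb{F}_q^{k+1}$. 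Critically, ${\sf deg}(\mathbf{1}_p) = (k+1)(q-1)$, so every extremal function attains the maximum possible degree. This strongly suggests that the refinement is equivalent to a uniqueness-of-extremizers statement: any $f$ with $\eps_{k,d}(f) = 1/q$ must agree, modulo a degree-$\leq d$ correction, with a scalar multiple of such an indicator.

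My plan is to retrace the proof of the first bullet from \cite{HSS} step by step and isolate the one or two inequality steps where slack becomes available under the additional hypothesis. The \cite{HSS} argument lower bounds the number of rejecting hyperplanes via a polynomial identity; equality in the bound forces $f$ to have a rigid algebraic structure that, combined with ${\sf deg}(f) > d$, coincides with $f \equiv \alpha \mathbf{1}_p$ modulo low-degree, hence ${\sf deg}(f) = (k+1)(q-1)$. A potentially useful auxiliary identity is $\sum_{x \in \mathbb{F}_q^{k+1}} f(x) = 0$, which holds if and only if ${\sf deg}(f) < (k+1)(q-1)$, since only the monomial $\prod_i x_i^{q-1}$ contributes to the global sum. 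Applied hyperplane by hyperplane, this global identity yields $\sum_H (-1)^{k}\sum_{x\in H} f(x) = 0$, i.e., a linear constraint on the top-degree coefficients of $f|_H$. This zero-sum constraint, combined with the rigid structure imposed by the tight \cite{HSS} counting on the rejecting set, should yield the desired strict inequality.

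The main obstacle is reconstructing the HSS counting precisely enough to locate the slack-extracting step. Since the paper explicitly notes that the refinement is a ``small tweak'' on the HSS proof, I expect the modification to be confined to a single inequality, not requiring a fundamentally new argument---the chief delicate point being to ensure the vanishing of the sum $\sum_x f(x)$ feeds into the HSS argument in a way that forbids the tight case.
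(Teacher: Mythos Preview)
Your high-level strategy---retrace the \cite{HSS} argument and locate where the hypothesis ${\sf deg}(f) < (k+1)(q-1)$ creates slack---matches the paper's approach. But you have not actually located the slack, and the mechanism you propose (the global-sum identity $\sum_x f(x) = 0$ and its hyperplane version) is not what the paper uses. That identity constrains whether ${\sf deg}(f|_H)$ attains the maximum $k(q-1)$, not whether it exceeds $d$, so it does not obviously couple to the rejection count; the formula you wrote, $\sum_H (-1)^k\sum_{x\in H} f(x)=0$, also does not make sense over $\mathbb{F}_q$ as stated.

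The paper locates the slack directly in the canonical max-monomial $M = \prod_{j\le m} x_j^{e_j}$ of $f$ (after an affine change of coordinates, as in \cite{HSS}). Two cases arise. If $m \le k$, then every non-constant affine form $L$ depending only on $x_{m+1},\ldots,x_{k+1}$ leaves $M$ intact in $f|_{L=0}$, so a positive-measure family of hyperplanes is rejected with probability $1$; combined with the $1/q$ lower bound on the remaining hyperplanes this already gives a strict inequality. If $m = k+1$, the hypothesis ${\sf deg}(f) < (k+1)(q-1)$ forces the smallest exponent $e_{k+1}$ to be at most $q-2$. Parametrizing hyperplanes $\{L_z = 0\}$ with $L_z = \sum_i a_i x_i + z$ and $a_{k+1} \neq 0$, the coefficient of $\prod_{i\le k} x_i^{e_i}$ in $f|_{L_z=0}$ is a nonzero polynomial in $z$ of degree $e_{k+1} \le q-2$, so at least \emph{two} values of $z$ (rather than one) yield ${\sf deg}(f|_{L_z=0}) > d$. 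Since $\Pr[a_{k+1}\neq 0]$ is strictly greater than $(q-1)/q$ (the $a$-vector ranges over nonzero vectors), the rejection probability strictly exceeds $\frac{q-1}{q}\cdot\frac{2}{q} \ge \frac{1}{q}$.

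So your extremal intuition is right, but to turn it into a proof you need to carry out the canonical-monomial case analysis above rather than invoke a zero-sum constraint.
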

\begin{proof}
  Let $f(x)$ has degree strictly larger than $d$. We shall think about restrictions to $k$-flats
  as taking a non-constant linear $L\colon \mathbb{F}_q^{k+1}\to\mathbb{F}_q$, and then considering
  $f|_{L=0}$. We shall use the notion of canonical monomials from~\cite{HSS}, which in our context reads:
  a monomial $M(x) = \prod\limits_{j\leq m}x_j^{e_j}$ is canonical if it appears in $f$,
  $q-q/p\leq e_1,\ldots,e_{m-1} \leq q-1$ and $e_m\leq q-1$. From~\cite{HSS}[Lemma 4.3] we may compose $f$ with an invertible affine
  linear transformation, and get to assume that $f$ has max-monomial of degree ${\sf ddeg}(f)$ which is canonical; clearly, once we prove the
  statement for this composition, the lemma immediately follows for the original function. We henceforth assume without loss of generality
  that this transformation is the identity.

  Let $M$ be a canonical max-monomial of $f$, and write $M(x) = \prod\limits_{j\leq m}x_j^{e_j}$
  and $m\leq k+1$. We consider two cases:
  \begin{itemize}
    \item {\bf Case 1: $m\leq k$.} In this case, we note that any linear transform $L$ that does not depend on the variables
    $x_1,\ldots,x_{m}$ preserves the degree of $f$, i.e. ${\sf deg}(f|_{L=0}) = {\sf deg}(f)$.

    For any other linear transformation $L$, it must depend on one of the variables $x_1,\ldots,x_m$, say without loss of generality
    it depends on $x_1$, and say $L(x) = a_1x_1 + L'(x_2,\ldots,x_{k}) + a_{k+1} x_{k+1}$ for $a_1\neq 0$. Denote $L_z(x) =
    a_1x_1 + L'(x_2,\ldots,x_{k}) + z x_{k+1}$, so that $L(x) = L_{a_{k+1}}(x)$.
    In this case, we may think of $f|_{L_z=0}$ as
    $f(-a_1^{-1}(L'(x_2,\ldots,x_{k+1}) + z x_{k+1}), x_2,\ldots,x_{k+1})$, and we show that there is $z\in\mathbb{F}_q$
    such that the degree of $f|_{L_{z} = 0}$ is greater than $d$. Thus, we conclude in this case that if $a_{k+1}$ was already
    this $z$ the degree of $f|_{L=0}$ would have been higher than $d$, and so $L$'s that depend on the variables $x_1,\ldots,x_m$
    we have that ${\sf deg}(f|_{L=0})>d$ with probability at least $1/q$. Together with the previous paragraph this establishes
    both items of the lemma in this case, and we next show the existence of this $z$.

    The idea is to look at $f(-a_1^{-1}(L'(x_2,\ldots,x_{k+1}) + z x_{k+1}), x_2,\ldots,x_{k+1})$, and more specifically at
    the coefficient of the monomial $M'(x) = \prod\limits_{1<j\leq m}x_j^{e_j}\cdot {x_{k+1}}^e_1$. The max-monomial $M$
    from $f$ would give us this monomial with coefficient $-a_1^{-1} z^{e_1}$, and since $M$ was max-monomial any other
    monomials will be able to contribute only $z$'s with lower power. Hence, the coefficient of $M'$ is some non-zero polynomial
    in $z$ of degree at most $e_1\leq q-1$, and hence we may choose $z$ for which it is non-zero.

    \item {\bf Case 2: $m=k+1$.} Suppose $e_1\geq e_2\geq\ldots\geq e_m$.
    Here we consider two subcases.
    \begin{itemize}
      \item First, consider the case that  $e_1+\ldots+e_m < (k+1)(q-1)$ (which is the only case
    we need for the second bullet in the lemma), so that $e_m < q-1$. Choose a non-constant
    linear transformation $L(x_1,\ldots,x_{k+1}) = \sum\limits_{i=1}^{k+1} a_i x_i + c$ randomly, and note that the probability that $a_{k+1}$ is $0$ is strictly
    smaller than $1/q$ (indeed, the distribution of $(a_1,\ldots,a_{k+1})$ is over non-zero vectors).
    We shall focus on $L$'s such that $a_{k+1}\neq 0$. Let $L_z(x_1,\ldots,x_{k+1})= \sum\limits_{i=1}^{k+1} a_i x_i + z$, and we argue that for each $L$,
    there are at least $2$ values of $z$ for which $f|_{L_z = 0}$ has degree greater than $d$. Indeed, the argument is exactly the same as before, except that
    we look at the monomial $M' = \prod\limits_{i=1}^{k} x_i$, and note that from $M$ we have a contribution $a_{k+1}^{-1} z^{e_{k+1}}$, and as $M$ is a max-monomial
    all other contributions are lower degree in $z$. Hence, choosing $z$ at random the probability this coefficient is non-zero is at least
    $\frac{q-e_{k+1}}{q} \geq \frac{2}{q}$, and in this case the degree of $f|_{L_z = 0}$ is at least $e_1+\ldots+e_k \geq k(q-q/p)\geq d+1$.

    Thus, we get that the probability that $f|_{L=0}$ has degree greater than $d$ is
    \[
    >\frac{q-1}{q}\cdot\frac{2}{q} \geq \frac{1}{q},
    \]
    where the first factor comes from the event $a_1\neq 0$, and the second factor comes from the event that $a_{k+1}$ is one of the two $z$'s which keeps
    the degree of $f|_{L_z=0}$ high.
    \item Next, consider the case that $e_1+\ldots+e_m \geq (k+1)(q-1) $. This case is similar to case $1$. A non-constant affine transformation $L$
    either has the form $L(x) = \sum\limits_{i=1}^{k+1} a_ix_i +c$, and letting $L_z(x) = \sum\limits_{i=1}^{k+1} a_ix_i + z$, we show that there
    is $z\in\mathbb{F}_q$ such that ${\sf deg}(f|_{L_z = 0})\geq d+1$. Indeed, suppose without loss of generality that $a_{k+1}\neq 0$, then
    the coefficient of the monomial $\sum\limits_{i=1}^{k} a_ix_i$ in $f|_{L_z = 0}$ is a non-zero polynomial in $z$ of degree at most $q-1$, hence
    there is $z$ for which this coefficient is non-zero, and hence $f|_{L_z = 0}$ has degree $k(q-1)\geq d+1$.

    As $c=z$, which happens with probability $1/q$, we have that the degree of $f|_{L}$ is strictly larger than $d$.
    \qedhere
    \end{itemize}
  \end{itemize}
\end{proof}

\begin{lemma}\label{lem:relate}
  Let $p$ be a prime, $q$ be a power of $p$, $d\in\mathbb{N}$ and let $t = \lceil{\frac{d+1}{q-q/p}}\rceil$. Suppose that
  $k\geq t$, then for all $f\colon \mathbb{F}_q^n\to\mathbb{F}_q$ we have $\eps_{k+1,d}(f)\leq q \eps_{k,d}(f)$.
\end{lemma}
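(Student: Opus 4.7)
The plan is to derive the inequality by a two-step sampling argument, using the first bullet of Lemma~\ref{lem:HSS} as the key input. Observe that one way to draw a uniform $k$-flat $A$ in $\mathbb{F}_q^n$ is to first draw a uniform $(k+1)$-flat $A'\subseteq \mathbb{F}_q^n$ and then a uniform $k$-flat $A\subseteq A'$. This is immediate from double counting, since every $k$-flat is contained in the same number of $(k+1)$-flats and every $(k+1)$-flat contains the same number of $k$-flats. Thus
\[
\eps_{k,d}(f) \;=\; \Expect{A'}{\Prob{A\subseteq A'}{\deg(f|_A) > d}}.
\]

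Next I would invoke Lemma~\ref{lem:HSS} inside the conditional expectation. For each individual $(k+1)$-flat $A'$ such that $\deg(f|_{A'}) > d$, applying the first bullet of Lemma~\ref{lem:HSS} to the function $f|_{A'}\colon A'\to \mathbb{F}_q$ (which is a function on a $(k+1)$-dimensional affine space) yields
\[
\Prob{A\subseteq A'}{\deg(f|_A)>d} \;=\; \eps_{k,d}(f|_{A'}) \;\geq\; \frac{1}{q}.
\]
Substituting this lower bound into the previous display, restricted to the contribution of those $A'$ with $\deg(f|_{A'}) > d$, gives
\[
\eps_{k,d}(f) \;\geq\; \Expect{A'}{\tfrac{1}{q}\cdot \mathbf{1}_{\deg(f|_{A'}) > d}} \;=\; \frac{1}{q}\,\eps_{k+1,d}(f),
\]
which rearranges to the desired bound $\eps_{k+1,d}(f)\leq q\,\eps_{k,d}(f)$.

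There is no real obstacle here; the only thing to verify carefully is that the hypothesis $k\geq t$ passes through unchanged so that Lemma~\ref{lem:HSS} applies to $f|_{A'}$ on the $(k+1)$-dimensional affine space $A'$, and that the two-step sampling genuinely produces a uniform $k$-flat of $\mathbb{F}_q^n$. Both are routine.
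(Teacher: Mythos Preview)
Your proof is correct and follows essentially the same approach as the paper: both sample a uniform $(k+1)$-flat $A'$ (the paper calls it $B$), then a uniform $k$-flat inside it, and invoke the first bullet of Lemma~\ref{lem:HSS} on $f|_{A'}$ to get the factor $1/q$. The paper phrases the final step as a conditional-probability factorization rather than an indicator bound, but the content is identical.
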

\begin{proof}
  Let $B\subseteq\mathbb{F}_q^n$ be a uniform $k+1$ dimensional flat, and let $A\subseteq B$ be a uniform $k$-dimensional flat.
  Then
  \[
  \eps_{k,d}(f) = \Prob{A,B}{{\sf deg}(f|_{A})>d} = \Prob{A,B}{{\sf deg}(f|_{B})>d}\cProb{A,B}{{\sf deg}(f|_{B})>d}{{\sf deg}(f|_{A})>d}.
  \]
  The first probability on the right hand side is $\eps_{k+1,d}(f)$, and the second probability is at least $1/q$ by Lemma~\ref{lem:HSS}.
\end{proof}

\subsection{Expansion and pseudo-randomness}
Denote by $V_q(k,\ell)$ the set of dimension $\ell$ affine subspaces in $\mathbb{F}_q^k$; we often omit the subscript $q$ when it is clear from context.
In this section, we discuss expansion in the affine Grassmann graph over $V_q(k,\ell)$. Similarly to the works~\cite{DKKMS2,KMS2}, we too consider certain structures of sets that forbid strong expansion properties, but
in our case there are additional types of structures (due to the fact we are working in the affine case).
\begin{definition}\label{def:pseudo_random}
  Let $S\subseteq V_q(k,\ell)$, and let $\xi\in [0,1]$.
  \begin{enumerate}
    \item We say $S$ is $\xi$-pseudo-random with respect to hyperplanes, if for each affine hyperplane $W$ we have that
    \[
    \mu(S_W)\defeq \Prob{A\in V(W,\ell)}{A\in S}\leq \xi.
    \]
    \item We say $S$ is $\xi$-pseudo-random with respect to hyperplanes on its linear part, if for each hyperplane $W$ we have that
    \[
    \mu(S_{W,{\sf lin}})\defeq \Prob{A\in V(W,\ell),x\not\in W}{x+A\in S}\leq \xi.
    \]
    \item We say $S$ is $\xi$-pseudo-random with respect to points on its linear part if for each point $y$, we have that
    \[
    \mu(S_{x,{\sf lin}})\defeq\cProb{A = x+A'\in {\sf AffGrass}(k, \ell)}{A'\ni y}{A\in S}\leq \xi.
    \]
    \item We say $S$ is $\xi$-pseudo-random with respect to points, if for each point $x$ we have that
    \[
    \mu(S_x)\defeq \cProb{A \in {\sf AffGrass}(k, \ell)}{x\in A}{A\in S}\leq \xi.
    \]
  \end{enumerate}
\end{definition}
In~\cite{DKKMS2,KMS2} it is proved that pseudo-random sets have strong expansion properties in the Grassmann graph.
Here, we require a similar statement. Consider $W$ a $k$-dimensional affine space over
$\mathbb{F}_p$, and consider the random walk on ${\sf AffGras}(W,\ell)$ as described in the introduction.

\begin{thm}\label{thm:expansion}
  Let $\xi>0$ and let $q\in\mathbb{N}$ be a prime power.
  Suppose that $S\subseteq V_q(W,\ell)$ is a set with:
  \begin{enumerate}
    \item $\mu(S)\leq \xi$;
    \item $S$ is $\xi$-pseudo-random with respect to hyperplanes, with respect to hyperplanes on its linear part, as well as with respect to points on its linear part;
    \item $1-\Phi(S)\geq \frac{1}{q}$.
  \end{enumerate}
    Then there exists a point $x\in\mathbb{F}_q^n$ such that $\mu(S_x)\geq 1-q^{2}(867\xi^{1/4} + q^{-\ell})$
\end{thm}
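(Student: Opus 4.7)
The plan is to proceed by Fourier analysis on the affine Grassmann graph ${\sf AffGras}(W,\ell)$, following the general spectral strategy used in~\cite{KMS2} for the ordinary Grassmann graph, while accounting for the richer eigenspace structure of the affine version. Set $f=\mathbf{1}_S$ and let $T$ be the averaging operator associated with the random walk described in the introduction. The first ingredient is a spectral decomposition $L^2(V_q(W,\ell))=V_0\oplus V_1\oplus\cdots\oplus V_\ell$ into $T$-eigenspaces, with $V_i$ having eigenvalue $q^{-i}$; this can be derived by a standard calculation analyzing the action of $T$ under the affine general linear group. Writing $f=\sum_i f_{=i}$ and using
\[
\mu(S)(1-\Phi(S))\;=\;\langle f,Tf\rangle\;=\;\mu(S)^2+\sum_{i\geq 1}q^{-i}\|f_{=i}\|_2^2,
\]
the assumption $1-\Phi(S)\geq 1/q$ combined with $\sum_{i\geq 1}\|f_{=i}\|_2^2=\mu(S)-\mu(S)^2$ and the geometric decay of eigenvalues forces $\|f_{=1}\|_2^2\geq (1-O(q\xi))\mu(S)$. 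In particular, nearly all the non-constant mass of $f$ is concentrated in the top non-trivial eigenspace $V_1$.

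The second step is to identify the structure of $V_1$ in the affine setting. I would show that each of the four natural families of zoom indicators, namely $\{\mathbf{1}_{H_z}\}_{z\in W}$, $\{\mathbf{1}_{H_{W'}}\}_{W'}$, $\{\mathbf{1}_{H_{z,{\sf lin}}}\}_z$, and $\{\mathbf{1}_{H_{W',{\sf lin}}}\}_{W'}$, has its spectral mass concentrated in $V_0\oplus V_1$, and that together their level-$1$ projections span $V_1$. This gives a (non-orthogonal) decomposition $V_1=U_1+U_2+U_3+U_4$ where $U_i$ is the span of the level-$1$ parts of family $i$, and $U_1$ is the zoom-in-at-points subspace that we eventually want to isolate. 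This structural statement is the affine analogue of the level-$1$ decomposition of the Johnson/Grassmann scheme and should follow from a direct character computation on the Grassmann under ${\sf AffGL}$.

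The third step is to convert each pseudo-randomness hypothesis into smallness of the corresponding $U_i$-projection of $f_{=1}$. For instance, pseudo-randomness with respect to hyperplanes says $\langle f,\mathbf{1}_{H_{W'}}\rangle\leq \xi\mu(H_{W'})$ for all $W'\subseteq W$, and after subtracting the trivial contribution $\mu(S)\mu(H_{W'})$ and using Parseval against the span of the family, this upgrades to a bound on $\|\mathrm{proj}_{U_2}f_{=1}\|_2^2$. Analogous bounds for $U_3$ and $U_4$ follow from the two remaining pseudo-randomness hypotheses. Combining these, we conclude $\|\mathrm{proj}_{U_1}f_{=1}\|_2^2\geq (1-O(q\sqrt{\xi}))\mu(S)$. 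Finally, writing $\mathrm{proj}_{U_1}f_{=1}=\sum_z c_z(\mathbf{1}_{H_z}-\mu(H_z))$ and pairing with $f$, the near saturation $\|f_{=1}\|_2^2\approx\mu(S)$ combined with $\mu(S\cap H_z)\leq\mu(H_z)$ forces some single coefficient $c_{z^\ast}$ to be large; a final Cauchy--Schwarz converts this $L^2$ concentration into the pointwise bound $\mu(S_{z^\ast})\geq 1-q^2(867\xi^{1/4}+q^{-\ell})$.

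The main obstacle is the third step. In the non-affine Grassmann graph of~\cite{KMS2}, only two zoom families appear and they are already nearly orthogonal; in the affine case the four families $U_1,\dots,U_4$ have genuine mutual overlap, and translating each individual pseudo-randomness hypothesis into a clean bound on a single $\mathrm{proj}_{U_i}f_{=1}$ requires a careful orthogonalization inside $V_1$. The dimension-dependent error $q^{-\ell}$ in the conclusion comes precisely from controlling the overlap between the four subspaces when $\ell$ is small, and the loss from $\xi$ to $\xi^{1/4}$ comes from two successive Cauchy--Schwarz applications: one when bounding the $U_2,U_3,U_4$ projections of $f_{=1}$ via the hypotheses, and one when extracting a pointwise statement about a single $z^\ast$ from the $L^2$ concentration of $\mathrm{proj}_{U_1}f_{=1}$.
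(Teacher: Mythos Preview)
Your first two steps---the spectral decomposition of $L^2$ into eigenspaces $V_i$ with eigenvalue $q^{-i}$, and the conclusion that $\|f_{=1}\|_2^2\geq(1-O(q^2\xi)-O(q^{-\ell}))\mu(S)$---are correct and match the paper (which carries this out on an auxiliary Cayley graph rather than on ${\sf AffGras}$ directly, but that is only a technical convenience). The real divergence, and the gap, is in what comes after.

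Your third step proposes to split $V_1$ into four pieces $U_1,\ldots,U_4$, one per zoom family, and to use each pseudo-randomness hypothesis to bound $\|\mathrm{proj}_{U_i}f_{=1}\|_2$ for $i=2,3,4$. This is not how the paper proceeds, and I do not see how to make it work: the pseudo-randomness hypothesis for (say) hyperplanes is a \emph{pointwise} bound $|\langle f,\mathbf{1}_{H_{W'}}\rangle|\le\xi\mu(H_{W'})$ for each $W'$, and summing such bounds over the $\Theta(q^k)$ hyperplanes does not yield a useful $L^2$ projection bound. In the paper the level-$1$ piece decomposes into only \emph{two} parts, $F_1=g+h$, where $g$ is built from the linear-zoom-in densities $f_{1,{\sf lin}}(y)=\mu(S_{y,{\sf lin}})-\mu(S)$ and $h$ from the point-zoom-in densities $f_{1,{\sf aff}}(y)=\mu(S_y)-\mu(S)$. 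The zoom-\emph{out} hypotheses are used not to kill a summand of $V_1$ but to bound the individual Fourier coefficients $\widehat{f_{1,{\sf lin}}}(\alpha),\widehat{f_{1,{\sf aff}}}(\alpha)$ pointwise; these $L^\infty$ Fourier bounds are what is actually needed.

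The more serious gap is your final step. Going from ``$f_{=1}$ has nearly full mass and lies essentially in the point-zoom-in span'' to ``some single $\mu(S_{z^\ast})\ge 1-o(1)$'' is not a Cauchy--Schwarz consequence: one can imagine $f_{=1}$ spread over many $z$'s each with $\mu(S_z)\approx 1/2$, and nothing you have written rules this out. Indeed the paper lists exactly this FKN-type statement for the affine Grassmann as an open problem. What the paper does instead is a fourth-moment argument: by H\"older $\|F_1\|_4^4\ge(1-o(1))\mu(S)$, while expanding $\|F_1\|_4^4=\|g+h\|_4^4$ term by term and using the pseudo-randomness hypotheses (in particular $\|f_{1,{\sf lin}}\|_\infty\le\xi$ and the Fourier-coefficient bounds above) shows that every contribution is $O(q^2\xi^{1/4}\mu(S))$ \emph{except} the diagonal term $\mu(S)\|f_{1,{\sf aff}}\|_\infty^2$. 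Comparing the two forces $\|f_{1,{\sf aff}}\|_\infty\ge 1-O(q^2\xi^{1/4})$, which is exactly the desired pointwise conclusion. The exponent $\xi^{1/4}$ arises from H\"older on the mixed terms $g^i h^{4-i}$, not from two Cauchy--Schwarz steps as you suggest.
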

The proof of Theorem~\ref{thm:expansion} proceeds similarly to the proof presented in~\cite{DKKMS2} for the degree $1$ case, however
we use some of the machinery of~\cite{KMS2} to simplify the presentation. The proof is deferred to the appendix.

\subsection{Expansion and sharp thresholds}
\begin{definition}
  For $h\geq \ell$ and $S\subseteq V(k,\ell)$, we define
  \[
  S\uparrow^{h} = \sett{L}{{\sf dim}(L) = h, \exists K\in S, K\subseteq L}.
  \]
  When $h = \ell+1$, we omit the superscript $h$.
\end{definition}
The following lemma is very similar to~\cite[Proposition III.3.4.]{KLLM}.
\begin{lemma}\label{lem:sharp_thresh}
  $\mu(S\uparrow)\geq\frac{\mu(S)}{1-\Phi(S)}$.
\end{lemma}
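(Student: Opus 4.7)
The plan is to reformulate both sides of the inequality in terms of a single random variable indexed by $(\ell+1)$-flats, then finish by Cauchy--Schwarz. Let $B$ be a uniformly random $(\ell+1)$-flat in $W$, and for each such $B$ let
\[
p_B \;=\; \Pr_{A\subseteq B}[A\in S],
\]
where $A$ is a uniform $\ell$-flat contained in $B$. Since a uniform $\ell$-flat of $W$ can be sampled by first picking $B$ uniformly and then $A\subseteq B$ uniformly, we have $\mu(S)=\E_B[p_B]$. Moreover, from the very definition of the upper shadow, $p_B>0$ iff $B\in S\uparrow$, so $\mu(S\uparrow)=\Pr_B[p_B>0]$.

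Next I would rewrite $1-\Phi(S)$ in the same way. Unfolding the random-walk description, picking $A\in S$ uniformly, then $B\supseteq A$, then $A_2\subseteq B$, is the same as picking a pair $(A,A_2)$ by first sampling $B$ from the distribution with density proportional to the number of $\ell$-subflats of $B$ that lie in $S$, and then sampling $A,A_2$ independently and uniformly from the $\ell$-subflats of $B$. Consequently,
\[
1-\Phi(S)\;=\;\Pr_{A\in S,\, A_2\sim A}[A_2\in S]\;=\;\frac{\E_B[p_B^{\,2}]}{\E_B[p_B]}\;=\;\frac{\E_B[p_B^{\,2}]}{\mu(S)}.
\]

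The desired inequality $\mu(S\uparrow)\geq \mu(S)/(1-\Phi(S))$ is therefore equivalent to
\[
\Pr_B[p_B>0]\cdot\E_B[p_B^{\,2}]\;\geq\;\E_B[p_B]^{2}.
\]
This is Cauchy--Schwarz applied to the factorization $p_B=p_B\cdot \mathbf{1}_{p_B>0}$:
\[
\E_B[p_B]^{2}=\E_B\bigl[p_B\cdot\mathbf{1}_{p_B>0}\bigr]^{2}\leq \E_B[p_B^{\,2}]\cdot\E_B[\mathbf{1}_{p_B>0}]=\E_B[p_B^{\,2}]\cdot\Pr_B[p_B>0],
\]
and rearranging gives the lemma.

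There is no real obstacle here; the only point that requires some care is the bookkeeping showing that the two-step up/down walk on $S$ coincides with the biased-$B$ description above, so that $1-\Phi(S)$ really equals $\E_B[p_B^{2}]/\E_B[p_B]$. Once that identity is in place, the rest is a one-line application of Cauchy--Schwarz, exactly analogous to the standard proof of the Kruskal--Katona style bound appearing in \cite{KLLM}.
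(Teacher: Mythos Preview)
Your proof is correct and is essentially the same argument as the paper's: the paper writes it in operator language, setting $f=1_S$, $g=1_{S\uparrow}$, noting $T\downarrow f(B)=p_B$ and $g(B)=\mathbf{1}_{p_B>0}$, and then applying Cauchy--Schwarz to $\mu(S)=\langle g,T\downarrow f\rangle$, which is precisely your inequality $\E_B[p_B]^2\leq \E_B[p_B^2]\cdot\Pr_B[p_B>0]$. The identification $\langle f,T\uparrow T\downarrow f\rangle=\E_B[p_B^2]=\mu(S)(1-\Phi(S))$ is exactly the bookkeeping step you flag.
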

\begin{proof}
  For $B\in V(k,\ell)$, denote by $T\uparrow B$ the uniform distribution over subspaces $B'\in V(k,\ell+1)$ containing $B$,
  and for $B'\in V(k,\ell+1)$ denote by $T\downarrow B'$ the uniform distribution over $B\in V(k,\ell)$ contained in $B'$. We consider
  real-valued functions over $V(k,\ell)$, $V(k,\ell+1)$ and view $T\uparrow$, $T\downarrow$ as operators,
  $T\downarrow\colon L_2(V(K,\ell))\to L_2(V(K,\ell+1))$, $T\uparrow\colon L_2(V(K,\ell+1))\to L_2(V(K,\ell))$ defined as
  \[
  T\downarrow f(B') = \Expect{B\sim T\downarrow B'}{f(B)},
  \qquad\qquad
  T\uparrow g(B) = \Expect{B'\sim T\uparrow B}{g(B')},
  \]
  for $f\colon V(k,\ell)\to\mathbb{R}$, $g\colon V(k,\ell+1)\to\mathbb{R}$. We note that $T\downarrow$ is the adjoint of $T\uparrow$.

  Fix $S$, and let $f = 1_S$, $g = 1_{S\uparrow}$. Then
  \[
  \mu(S)
  = \Expect{B\in V(k,\ell)}{f(B)}
  = \Expect{B'\in V(k,\ell+1)}{\Expect{B\sim T\downarrow B'}{ f(B)}}
  = \Expect{B'\in V(k,\ell+1)}{g(B')T\downarrow f(B')}
  = \inner{g}{T\downarrow f}.
  \]
  Thus, using Cauchy-Schwarz
  \[
  \mu(S)^2
  \leq \norm{g}_2^2 \norm{T\downarrow f}_2^2
  = \mu(S\uparrow) \inner{T\downarrow f}{T\downarrow f}
  = \mu(S\uparrow) \inner{f}{T\uparrow T\downarrow f}.
  \]
  Thus,
  \[
  \mu(S\uparrow)\geq \frac{\mu(S)}{\frac{1}{\mu(S)}\inner{f}{T\uparrow T\downarrow f}}.
  \]
  We note that for $B$, the distribution of $\tilde{B}\sim T\uparrow T\downarrow B$ is distributed according to the random walk of
  ${\sf AffGras}(\mathbb{F}_q^k,\ell)$, hence $\frac{1}{\mu(S)}\inner{f}{T\uparrow T\downarrow f} = 1-\Phi(S)$, finishing the proof.
\end{proof}

\section{Testing Reed-Muller codes: proof of Theorem~\ref{thm:main}}\label{sec:pf_of_main}
In this section, we present the formal proof of Theorem~\ref{thm:main}. For a high level description of our proof strategy we defer the reader
to Section~\ref{sec:techniques}.

\subsection{Step 1: locating a potential error}
Fix $f$ as in the statement of the theorem, and denote
\[
S = \sett{A}{{\sf dim}(A) = t, {\sf deg}(f|_{A})>d}.
\]
We note that $\eps_{t,d}(f) = \mu(S) \defeq \eps$, and that $\eps_{t+1,d}(f) = \mu(S\uparrow)$. Throughout, we will assume that
$\eps\leq q^{-M}$ for a sufficiently large (but absolute) constant $M$. We will also assume that $t\geq M$, otherwise the result follows
from~\cite{KR}.

Let $\eps'=\max(\eps,q^{-d})$.
Our aim in this section is to prove the following proposition.
\begin{proposition}\label{prop:find_zoom_in}
  There exists $x^{\star}\in\mathbb{F}_q^n$ such that
  $\mu(S_{x^{\star}})\geq 1-C(q) \eps'^{1/4}$ for $C(q) = 2000q^2$.
\end{proposition}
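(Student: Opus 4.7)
The plan is to invoke Theorem~\ref{thm:expansion} on $S \subseteq V_q(\mathbb{F}_q^n,t)$ with a suitable parameter $\xi = O(\eps')$. Provided the three hypotheses hold, the theorem yields a point $x^\star$ with $\mu(S_{x^\star}) \geq 1 - q^2\bigl(867\xi^{1/4} + q^{-t}\bigr)$, which for $\xi \leq \eps'$ and $t \geq M$ suffices for the claimed bound once the implicit constants are chosen appropriately. The size hypothesis $\mu(S) \leq \xi$ is immediate. The expansion hypothesis $1 - \Phi(S) \geq 1/q$ follows by chaining two inequalities already at our disposal: Lemma~\ref{lem:relate} applied to $f$ gives $\mu(S\uparrow) = \eps_{t+1,d}(f) \leq q\,\eps_{t,d}(f) = q\mu(S)$, while Lemma~\ref{lem:sharp_thresh} gives $\mu(S\uparrow) \geq \mu(S)/(1-\Phi(S))$, and combining these yields $1-\Phi(S) \geq 1/q$.

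The bulk of the work lies in verifying the three pseudo-randomness conditions. For pseudo-randomness with respect to hyperplanes (the standard zoom-out), I would argue by contradiction: if some affine hyperplane $W$ had $\mu(S_W) \geq \xi$, then since $H_W$ blows up enormously under the upper shadow in $\mathbb{F}_q^n$ (going from measure $\approx q^{t+1-n}$ to measure $\Theta(1)$), taking successive upper shadows of $S \cap H_W$ would eventually force $\mu(S\uparrow^k) > q^k \mu(S)$, contradicting the bound obtained by iterating Lemma~\ref{lem:relate}. The argument for pseudo-randomness with respect to hyperplanes on the linear part is essentially the same, since those zoom-out sets also have rapidly growing upper shadows.

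The most delicate condition, and where I expect the main obstacle, is pseudo-randomness with respect to points on the linear part, i.e., bounding $\mu(S_{z,\text{lin}})$ uniformly in $z \neq 0$. Following Section~\ref{sec:techniques}, the intuition is that a typical $(t+1)$-flat $A = x + \tilde{A}$ with $z \in \tilde{A}$ can be covered by $t$-flats $B_1, \dots, B_{t+1} \subseteq A$, each marginally a uniform $t$-flat, so concentration of errors on such $A$'s would propagate to concentration of errors on uniformly random $t$-flats, contradicting $\mu(S) = \eps$. To execute this rigorously I would study an auxiliary random walk inside such $A$'s and invoke Theorem~\ref{thm:expansion} recursively on a lower-order affine Grassmann graph, leveraging the other pseudo-randomness notions (already established) to rule out structured subsets and thereby force $\mu(S_{z,\text{lin}}) \leq \xi$. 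Once all three conditions are in place, Theorem~\ref{thm:expansion} immediately yields the desired $x^\star$.
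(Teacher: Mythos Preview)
Your proposal is correct and follows essentially the same route as the paper: the proposition is obtained by verifying the hypotheses of Theorem~\ref{thm:expansion} via exactly the three ingredients you name (expansion from Lemmas~\ref{lem:relate} and~\ref{lem:sharp_thresh}; zoom-out pseudo-randomness from an upper-shadow argument; zoom-in-on-linear-part pseudo-randomness from a recursive invocation of Theorem~\ref{thm:expansion} inside a bounded-dimensional flat), and then reading off $x^\star$.

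One small correction on the zoom-out step: the paper's argument (Claim~\ref{claim:pseudo_zoom_out}) is a direct one-step bound rather than the iterated contradiction you sketch. For a hyperplane $W$, a uniformly random $(t+1)$-flat $A$ satisfies $\dim(A\cap W)=t$ with probability $1-q^{-(t+1)}$, and conditioned on this $A\cap W$ is uniform in $V(W,t)$; hence $\mu(S\uparrow)\geq (1-q^{-(t+1)})\mu(S_W)$, giving $\mu(S_W)\leq 2\mu(S\uparrow)\leq 2q\mu(S)$ immediately. Your phrasing (``successive upper shadows of $S\cap H_W$ would eventually force $\mu(S\uparrow^k)>q^k\mu(S)$'') does not quite work as written, since $(S\cap H_W)\uparrow^k\subseteq S\uparrow^k$ only gives an \emph{upper} bound on the former, and you have no a priori lower bound on the shadow growth of $S\cap H_W$ itself; also the measure of $H_W$ is $\approx q^{-(t+1)}$, not $q^{t+1-n}$. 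The underlying insight you state (that $H_W$'s shadow is nearly everything) is exactly what makes the one-step argument go through.
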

We will prove this proposition using Theorem~\ref{thm:expansion}, and towards
this end we first show that the conditions of Theorem~\ref{thm:expansion} hold.

\begin{claim}\label{claim:reduce_to_expansion}
  $1-\Phi(S)\geq \frac{1}{q}$.
\end{claim}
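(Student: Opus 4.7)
The plan is to combine the sharp-threshold style inequality of Lemma~\ref{lem:sharp_thresh} with the ``one-step'' comparison between the $t$-flat and $(t+1)$-flat testers given by Lemma~\ref{lem:relate}. Both ingredients are already at hand, so this claim should be essentially a two-line deduction.

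First I would observe that, by the very definition of $S$, the set $S\uparrow$ consists of exactly the $(t+1)$-flats $B$ for which some $t$-subflat $A \subseteq B$ satisfies ${\sf deg}(f|_A) > d$. Since having an erroneous $t$-subflat is precisely the event detected by the $(t+1)$-flat tester (in the sense that if ${\sf deg}(f|_B) > d$ then by Lemma~\ref{lem:HSS} a $1/q$ fraction of its $t$-subflats are erroneous, and conversely any erroneous $t$-subflat forces ${\sf deg}(f|_B) > d$), we have $\mu(S) = \eps_{t,d}(f)$ and $\mu(S\uparrow) = \eps_{t+1,d}(f)$, as already noted just before the claim.

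Next I would apply Lemma~\ref{lem:relate} with $k = t$, which gives $\eps_{t+1,d}(f) \leq q\,\eps_{t,d}(f)$, that is, $\mu(S\uparrow) \leq q\,\mu(S)$. Finally, Lemma~\ref{lem:sharp_thresh} yields
\[
\mu(S\uparrow) \geq \frac{\mu(S)}{1-\Phi(S)},
\]
and rearranging gives $1 - \Phi(S) \geq \mu(S)/\mu(S\uparrow) \geq 1/q$, which is exactly the desired bound.

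There is no real obstacle here: the only thing to be slightly careful about is that Lemma~\ref{lem:relate} requires $k \geq t$, which is satisfied with equality, and that $\mu(S) > 0$ so that the rearrangement is legitimate (if $\mu(S) = 0$ then $\Phi(S)$ is vacuous / the claim is trivial, or one may simply interpret the bound as $1 - \Phi(S) \geq 1/q$ holding vacuously).
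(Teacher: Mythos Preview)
Your proof is correct and follows exactly the same two-step approach as the paper: apply Lemma~\ref{lem:relate} to get $\mu(S\uparrow)\leq q\mu(S)$, apply Lemma~\ref{lem:sharp_thresh} to get $\mu(S\uparrow)\geq \mu(S)/(1-\Phi(S))$, and combine. Your additional remarks justifying $\mu(S\uparrow)=\eps_{t+1,d}(f)$ and handling the degenerate case $\mu(S)=0$ are fine but not strictly needed, as the paper already records the former identity just before the claim.
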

\begin{proof}
  By Lemma~\ref{lem:relate} we have $\mu(S\uparrow)\leq q\mu(S)$ and by Lemma~\ref{lem:sharp_thresh}
  we have $\mu(S\uparrow)\geq \frac{\mu(S)}{1-\Phi(S)}$. Combining these two inequalities gives
  the statement of the claim.
\end{proof}

To apply Theorem~\ref{thm:expansion}, we first argue that $S$ is pseudo-random.
\begin{claim}\label{claim:pseudo_zoom_out}
  The set $S$ is $2q\mu(S)$ pseudo-random with respect to zoom-outs, also with respect to the linear part.
\end{claim}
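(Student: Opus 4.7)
The plan is to prove both pseudo-randomness statements by a single upper-shadow argument. The starting point is the bound $\mu(S\uparrow)\leq q\mu(S)$, which follows directly from Lemma~\ref{lem:relate}: for any $B\in S$ and any $(t+1)$-flat $B'\supseteq B$, $\deg(f|_{B'})\geq \deg(f|_B)>d$, so $S\uparrow\subseteq\{B':\deg(f|_{B'})>d\}$, whence $\mu(S\uparrow)\leq\eps_{t+1,d}(f)\leq q\eps_{t,d}(f)=q\mu(S)$. The main idea is that every $t$-flat in either restricted family (either $V(W,t)$ or $T:=\{B: B^{\mathsf{lin}}\subseteq W,\; B\not\subseteq W\}$) admits many $(t+1)$-extensions \emph{outside} the family, while each such extension contains only $O(1)$ flats of the family. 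These two facts combine to give the claimed $2q$-factor after an elementary $q$-binomial computation.

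For standard zoom-outs, I fix an affine hyperplane $W$ and $B\in S\cap V(W,t)$. There are exactly $q^{n-1-t}$ extensions $B'\supseteq B$ of dimension $t+1$ with $B'\not\subseteq W$ (those obtained by adding a direction outside $W^{\mathsf{lin}}$), and for every such $B'$ the chain $B\subseteq B'\cap W\subsetneq B'$ forces $B'\cap W=B$, so $B$ is the unique $t$-subflat of $B'$ lying inside $W$. Thus $|S\uparrow|\geq q^{n-1-t}\cdot|S\cap V(W,t)|$, and combining with $|S\uparrow|\leq q\mu(S)\cdot|V(\mathbb{F}_q^n,t+1)|$ together with the identity
\[
\frac{|V(\mathbb{F}_q^n,t+1)|}{|V(W,t)|\cdot q^{n-1-t}}\;=\;\frac{q^n-1}{q^n-q^{n-1-t}}\;\leq\;2
\]
(obtained from the standard $q$-binomial formulas $\binom{n}{t+1}_q/\binom{n-1}{t}_q=(q^n-1)/(q^{t+1}-1)$) yields $\mu(S_W)\leq 2q\mu(S)$.

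For zoom-outs on the linear part, I fix a linear hyperplane $W$ and let $T$ be as above. A short orbit count shows that each $B\in T$ is produced by exactly $q^{n-1}$ pairs $(A,x)$ with $A\in V(W,t)$ and $x\notin W$, so $\mu(S_{W,\mathsf{lin}})=|S\cap T|/|T|$. For $B\in S\cap T$, the $(t+1)$-extensions $B'$ with $B'^{\mathsf{lin}}\not\subseteq W$ number $q^{n-1-t}$. For each such $B'$, the $t$-dimensional subspace $L:=B'^{\mathsf{lin}}\cap W$ is the unique $t$-dimensional subspace of $B'^{\mathsf{lin}}$ lying in $W$, and among the $q$ cosets of $L$ in $B'$ exactly one (namely $B'\cap W$) sits in $W$. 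Hence the remaining $q-1$ cosets are precisely the $t$-subflats of $B'$ belonging to $T$. Double-counting therefore gives $|S\uparrow|\geq|S\cap T|\cdot q^{n-1-t}/(q-1)$, and an analogous $q$-binomial computation yields $\mu(S_{W,\mathsf{lin}})\leq 2q\mu(S)$.

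I do not anticipate a substantive obstacle: the argument is essentially a careful double count leveraging Lemma~\ref{lem:relate}. The two points deserving attention are verifying that the $(A,x)$-sampling induces the uniform distribution on $T$ and counting correctly the $q-1$ subflats of $B'$ that lie in $T$ when $B'^{\mathsf{lin}}\not\subseteq W$; both are short and transparent.
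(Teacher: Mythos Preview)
Your proof is correct and takes essentially the same approach as the paper: both use $\mu(S\uparrow)\leq q\mu(S)$ from Lemma~\ref{lem:relate} and relate the restricted family to $S\uparrow$, with the paper phrasing this probabilistically (sample a random $(t+1)$-flat $A$ and note that $A\cap W$ is a uniform $t$-flat in $W$ with probability $\geq 1/2$) while you carry out the equivalent explicit double count. Your treatment of the linear-part case is more detailed than the paper's terse ``(either affine or not)'', but the underlying idea is identical.
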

\begin{proof}
  Let $W\subseteq \mathbb{F}_q^n$ be a hyperplane (either affine or not), and sample a $(t+1)$-flat uniformly
  $A\subseteq\mathbb{F}_q^n$. We note that the probability that $W\cap A$ has dimension $t$ is $1-q^{-(t+1)}$.
  To see that, we may think of $W$ as being defined by an equation $\inner{x}{h} = c$ for some non-zero vector $h$ and $c\in\mathbb{F}_q$,
  and $A$ as being defined by a collection of linearly independent equations $\inner{x}{h_i} = c_i$ for $i=1,\ldots n-t-1$. Whenever
  $h\not\in {\sf span}(h_1,\ldots,h_{n-t-1})$, $A\cap W$ has dimension $t$. Conditioned on this event, $A\cap W$ is a uniform
  $t$-flat in $W$, and so $A\cap W\in S$ with probability $\mu(S_W)$. Also, if $A\cap W\in S$ then $A\in S\uparrow$, so we get that
  \[
    \mu(S\uparrow) = \Prob{}{A\in S\uparrow}\geq (1-q^{-(t+1)})\mu(S_W)\geq \frac{\mu(S_W)}{2}.
  \]
  Thus, $\mu(S_W)\leq 2 \mu(S\uparrow)\leq 2q\mu(S)$.
\end{proof}

Next, we prove that $S$ is pseudo-random with respect to zoom-ins on its linear part. This argument is more involved, and requires a
bootstrapping-style argument as described in the proof overview; namely, we show that if there are very little errors on specific type of
subspaces, then there must be no errors at all on these type of subspaces.
\begin{claim}\label{claim:pseduo_zoom_in_linear}
  The set $S$ is $q^{200-M}$ pseudo-random with respect to zoom-in on its linear part.
\end{claim}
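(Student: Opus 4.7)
The plan is to reduce to an application of Theorem~\ref{thm:expansion} in a lower-dimensional affine Grassmann graph. Fix a nonzero $y\in\mathbb{F}_q^n$ and let $\pi\colon\mathbb{F}_q^n\to\mathbb{F}_q^n/\langle y\rangle \cong \mathbb{F}_q^{n-1}$ denote the quotient map. There is a canonical bijection between the $t$-flats $A\subseteq \mathbb{F}_q^n$ with $y\in{\sf lin}(A)$ and the $(t-1)$-flats of $\mathbb{F}_q^{n-1}$, given by $A\mapsto \pi(A)$ (the linear part of any such flat necessarily contains $\langle y\rangle$, so $\pi(A)$ has dimension $t-1$). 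A counting check shows this bijection preserves the uniform measure. Let $\tilde{S}\subseteq V_q(n-1,t-1)$ denote the image under $\pi$ of $S$ intersected with the family of $y$-saturated $t$-flats; then $\mu_{\mathbb{F}_q^{n-1}}(\tilde{S})=\mu(S_{y,{\sf lin}})$.

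Arguing by contradiction, I would suppose $\mu(\tilde{S})>q^{200-M}$ and verify the three hypotheses of Theorem~\ref{thm:expansion} for $\tilde{S}$ in ${\sf AffGras}(\mathbb{F}_q^{n-1},t-1)$. The central step is the expansion bound $1-\Phi(\tilde{S})\geq 1/q$, which by Lemma~\ref{lem:sharp_thresh} reduces to showing $\mu(\tilde{S}\uparrow)\leq q\mu(\tilde{S})$. In turn, this reduces to the following $y$-saturated variant of Lemma~\ref{lem:HSS}: for every $(t+1)$-flat $B\subseteq \mathbb{F}_q^n$ with $y\in{\sf lin}(B)$ and ${\sf deg}(f|_B)>d$, at least a $1/q$ fraction of the $y$-saturated $t$-sub-flats $A\subseteq B$ satisfy ${\sf deg}(f|_A)>d$. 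I would establish this by repeating the canonical-monomial argument from the proof of Lemma~\ref{lem:HSS}, but restricting attention to the $y$-saturated sub-flats, which correspond to linear equations $L$ on $B$ satisfying $L(y)=0$. The remaining hypotheses---pseudo-randomness of $\tilde{S}$ with respect to hyperplanes, hyperplanes on the linear part, and points on the linear part---are handled by pulling back hyperplanes of $\mathbb{F}_q^{n-1}$ to hyperplanes of $\mathbb{F}_q^n$ containing $y$, invoking Claim~\ref{claim:pseudo_zoom_out} for the hyperplane conditions, and setting up the whole claim inductively on $n$ for the self-referential point-on-linear-part condition.

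Applying Theorem~\ref{thm:expansion} to $\tilde{S}$ then yields a point $z\in\mathbb{F}_q^{n-1}$ with $\mu(\tilde{S}_z)\geq 1-o(1)$. Translating back to $\mathbb{F}_q^n$, $S$ contains almost every $t$-flat $A$ that is $y$-saturated and meets the fiber $\pi^{-1}(z)$. Combined with the pseudo-randomness of $S$ with respect to zoom-outs on the linear part (Claim~\ref{claim:pseudo_zoom_out}), which prevents $S$ from being concentrated on a hyperplane through a lift of $z$, this contradicts $\mu(S)\leq \eps\leq q^{-M}$.

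The main obstacle I expect is the $y$-saturated analog of Lemma~\ref{lem:HSS}: the canonical-monomial proof in~\cite{HSS} leverages the full group of invertible affine transformations, whereas restricting to transformations that fix the $y$-direction cuts down the degrees of freedom available to the probabilistic step. Retaining the $1/q$ lower bound under this restriction---particularly in Case 2 of Lemma~\ref{lem:HSS} where the highest-degree monomial involves all coordinates---is the most delicate part. A secondary subtlety is the self-referential inductive setup on $n$ required to verify the point-on-linear-part pseudo-randomness for $\tilde{S}$; the induction must be carefully anchored at the minimal $n$ so that the trivial bound $\mu(S_{y,{\sf lin}})\leq \eps\cdot q^{n-t}$ already suffices in the base case.
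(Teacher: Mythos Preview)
The step you flagged as the ``main obstacle'' is not merely delicate---it is false. The $y$-saturated variant of Lemma~\ref{lem:HSS} you state (for every $(t+1)$-flat $B$ with $y\in{\sf lin}(B)$ and ${\sf deg}(f|_B)>d$, at least $1/q$ of the $y$-saturated $t$-subflats have degree $>d$) admits a direct counterexample: take $q=2$, $d=1$, $t=2$, $B=\mathbb{F}_2^3$, $y=e_3$, and $f=x_1x_2$. Then ${\sf deg}(f)=2>d$, yet every $y$-saturated $2$-subflat $\{a_1x_1+a_2x_2=c\}$ has ${\sf deg}(f|_A)\le 1$. The failure is structural: whenever $f|_B$ is constant along the $y$-direction, the question collapses to restrictions of a function on $\mathbb{F}_q^{t}$ to its $(t-1)$-subflats, and by the very minimality of $t$ those can all have degree $\le d$. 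So the inequality $\mu(\tilde{S}\uparrow)\le q\,\mu(\tilde{S})$ cannot be obtained flat-by-flat from a canonical-monomial argument. There are further gaps downstream: Theorem~\ref{thm:expansion} needs $\mu(\tilde{S})\le\xi$ with $\xi$ small, and you have no a~priori upper bound on $\mu(\tilde{S})$; the inductive appeal to Claim~\ref{claim:pseduo_zoom_in_linear} in dimension $n-1$ is illegitimate, since $\tilde{S}$ is not the error set of any low-degree testing problem on the quotient ($f$ does not descend through $\pi$); and the closing ``contradicts $\mu(S)\le\eps$'' is unjustified, as the $y$-saturated flats through $\pi^{-1}(z)$ are far too sparse among all $t$-flats to force $\mu(S)$ to be large.

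The paper's argument sidesteps all of this by switching to the \emph{complementary} family. It fixes a $(t+100)$-flat $x+A$ whose linear part contains $z$ and in which the fraction of \emph{non}-$z$-saturated $t$-subflats lying in $S$ is at most $O(\eps/\alpha)$. It then bootstraps inside random $(t+40)$-subflats $W$ whose linear parts avoid $z$: there Theorem~\ref{thm:expansion} applies in a space of bounded dimension, with zoom-in pseudo-randomness holding trivially by a size comparison, forcing the non-saturated error set inside $x+A$ to be empty. But then any $(t+1)$-subflat of $x+A$ with degree $>d$ has strictly more than a $1-1/q$ fraction of its $t$-subflats non-$z$-saturated and hence error-free, contradicting the \emph{ordinary} Lemma~\ref{lem:HSS}. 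No saturated version of the $1/q$-lemma is ever invoked.
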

\begin{proof}
  Suppose otherwise, then there is $z\in\mathbb{F}_q^n$ such that
  \[
  \sett{x+A\in S}{A\subseteq \mathbb{F}_q^n\text{ linear subspace of dimension $t$}, z\in A},
  \]
  has fractional size $\alpha > q^{200-M}$ inside $P_{z,t} = \sett{x+A}{A\subseteq \mathbb{F}_q^n\text{ linear subspace of dimension $t$}, z\in A}$.
  Clearly,
  \[
  S_z' = \sett{x+A\in S\uparrow^{t+100}}{A\subseteq \mathbb{F}_q^n\text{ linear subspace of dimension $t+100$}, z\in A}
  \]
  also has at least $\alpha$ fractional size inside $P_{z,t+100}$. Take $x+A$ a $(t+100)$-flat uniformly, consider the event it is in $S_z'$ , and take a $t$-flat $B = x' + A'\subseteq x+A$ uniformly; note that
  \begin{align*}
  \cProb{\substack{x+A\\ B = x'+A'\subseteq x+A}}{x+A\in S_z', z\not\in A'}{{\sf deg}(f|_{B}) > d}
  &\leq \frac{\cProb{x+A, B = x'+A'}{z\in A, z\not\in A'}{{\sf deg}(f|_{B}) > d}}{\cProb{}{z\in A,z\not\in A'}{x+A\in S_z'}}\\
  &\leq \frac{\cProb{B = x'+A'}{z\not\in A'}{{\sf deg}(f|_{B}) > d}}
  {\cProb{}{z\in A}{x+A\in S_z'}\cProb{}{x+A\in S_z'}{z\not\in A'}}.
  \end{align*}
  The numerator is at most $\eps$, and the denominator is at least $\alpha/2$, so we get this probability is at most $\frac{2}{\alpha}\eps$. Thus,
  there exists $x+A\in S_z'$ such that conditioned on it this probability is at most $\frac{2}{\alpha}\eps$, and we fix it henceforth.

  We now work over the affine $t$-dimensional Grassmann graph over $x+A$.
  Consider $t$-flats $B = x'+A'\subseteq x+A$ conditioned on $A'$ not containing $z$, and let
  \[
  \mathcal{B} = \sett{B = x'+A'\in {\sf AffGras}(x+A, t)}{z\not\in A', {\sf deg}(f|_{B}) > d}.
  \]
  We argue that $\mathcal{B}$ must be empty; suppose towards contradiction otherwise.
  Let $W = y+\tilde{W}\subseteq x+A$ be randomly chosen where $\tilde{W}$ is uniformly chosen linear subspace of dimension $t+40$ not containing $z$,
  and $y\in x+A$ is uniformly chosen. Denote
  \[
  \mathcal{B}_W = \sett{B\in\mathcal{B}}{B\subseteq W}.
  \]
  We denote by $\mu_W$ the uniform measure over ${\sf AffGras}(W, t)$. We argue that for all $W$, if $\mu_W(\mathcal{B}_W)\leq q^{-100}$, then
  $\mu_W(\mathcal{B}_W) = 0$. Indeed, if $\mu_W(\mathcal{B}_W)\leq q^{-100}$ then $\mathcal{B}_W$ is $q^{-60}$ pseudo-random with respect to zoom-ins
  (also with respect to its linear part), as those have measure at least $q^{-40}$. Also, by an argument as in Claim~\ref{claim:pseudo_zoom_out}
  we have that $\mathcal{B}_W$ is $q^{-50}$ pseudo-random with respect to zoom out (also with respect to its linear part).
  Finally, by an argument as in Claim~\ref{claim:reduce_to_expansion} we have
  \[
  1-\Phi_W(\mathcal{B}_W)\geq \frac{1}{q},
  \]
  where $\Phi_W$ is expansion with respect to ${\sf AffGras}(W, t)$.
  This is now a contradiction to Theorem~\ref{thm:expansion}. We thus conclude that either $\mu_W(\mathcal{B}_W) = 0$ or
  $\mu_W(\mathcal{B}_W)\geq q^{-100}$; as $\mathcal{B}$ is non-empty (by our assumption), we may find
  $W$ such that $\mu_W(\mathcal{B}_W)\geq q^{-100}$, and we fix such one.

  Next, we take a uniform $Y = u + \tilde{Y}\subseteq x+A$ of dimension $t+99$ conditioned on $z\not\in \tilde{Y}$,
  sample a $(t+60)$-flat $A_2\subseteq Y$, and consider $A_2\cap W$. We may think of $W$ as being defined by a system of
  $60$ independent linear equations $\inner{h_1}{x} = c_1,\ldots,\inner{h_{60}}{x} = c_{60}$ over $x+A$, and $A_2$ as being defined
  by a set of $39$ linear equations $\inner{h_1'}{x} = c_1',\ldots,\inner{h_{39}'}{x} = c_{39}'$ where $h_1,\ldots,h_{39}'$ are random
  linearly independent. Note that the probability that ${\sf span}(h_1,\ldots,h_{60},h_1',\ldots,h_{39}')$ has dimension $99$ is
  at least
  \[
  \prod\limits_{j=0}^{38}\frac{q^{99} - q^{60+j}}{q^{99}}\geq e^{-2\sum\limits_{j=1}^{\infty} q^{-j}}\geq e^{-4/q},
  \]
  in which case the distribution of $A_2\cap W$ is uniform from ${\sf AffGras}(W,t)$. Hence, $A_2\cap W$ is in $\mathcal{B}_W$ with
  probability at least $e^{-4/q} q^{-100}$. In this case, we have that $A_2\in \mathcal{B}_Y \uparrow^{t+60}$ where upper shadow is taken
  with respect to ${\sf AffGras}(Y, t)$. Thus, we get that
  \[
  \Expect{Y}{\mu_Y(\mathcal{B}_Y \uparrow^{t+60})}\geq e^{-4/q} \mu_{W}(\mathcal{B}_W)\geq e^{-4/q} q^{-100}.
  \]
  However, by Lemma~\ref{lem:relate} for each $Y$ we have that
  \[
  \mu_Y(\mathcal{B}_Y \uparrow^{t+60})\leq q^{60} \mu_Y(\mathcal{B}_Y),
  \]
  and plugging that in above we get that
  \[
  \Expect{Y}{\mu_Y(\mathcal{B}_Y)} \geq e^{-4/q} q^{-160}.
  \]
  Finally, the left hand side is at most the probability that $f|_{x'+A'}$ has degree $>d$ when $x'+A'\subseteq x+A$ is
  a random $t$-flat conditioned on $A\not\ni z$, hence at most $\frac{2}{\alpha}\eps$ by the choice of $x+A$. Overall, we get that
  \[
  \alpha\leq e^{4/q} q^{160} 2\eps\leq q^{200-M},
  \]
  and contradiction. This contradiction implies that $\mathcal{B}$ is empty, and we quickly finish the argument now.

  Let us look at $x+A$; as $f|_{x+A}$ has degree larger than $d$, we may find a $(t+1)$-flat $B = x'+\tilde{B}\subseteq x+A$ such that
  $f|_B$ has degree larger than $d$. Sample a $t$-flat $x''+B'\subseteq B$ uniformly. By the above, if $z\not\in B'$, we have that
  $f|_{x''+B'}$ has degree $d$. Note that the probability that $z\in B'$ is at most
  \[
  \frac{q^{t}-1}{q^{t+1}-1} < \frac{1}{q},
  \]
  so we get that for less than $1/q$ fraction of the $t$-flats $x''+B'\subseteq B$ we have that ${\sf deg}(f|_{x''+B'})>d$. This contradicts
  Lemma~\ref{lem:relate}.
\end{proof}

We can now prove Proposition~\ref{prop:find_zoom_in}.
\begin{proof}[Proof of Proposition~\ref{prop:find_zoom_in}]
From Claims~\ref{claim:reduce_to_expansion},~\ref{claim:pseudo_zoom_out},~\ref{claim:pseduo_zoom_in_linear} we have that the conditions
of Theorem~\ref{thm:expansion} hold, and hence we may find $x^{\star}\in\mathbb{F}_p^n$ such that $\mu(S_{x^{\star}})\geq 1-C(q) \eps'^{1/4}$,
for $C(q) = 2000q^2$.
\end{proof}

\subsection{Step 2: correcting the value on $x^{\star}$}\label{sec:correct_val}
The goal of this section is to prove the following proposition.
\begin{proposition}\label{prop:fix_x_star}
  There exists $c\in\mathbb{F}_q$ such that changing the value of $f(x^{\star})$ to $c$, we have that
  \[
  \cProb{A''\text{ $t$-flat}}{x^{\star}\in A''}{{\sf deg}(f|_{A''})\leq d}\geq \frac{1}{2q}.
  \]
\end{proposition}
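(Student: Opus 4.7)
The plan is to use Markov's inequality to find a high-dimensional flat $A$ through $x^\star$ in which the error structure is very clean; then, inside $A$, to correct $f(x^\star)$ to a specific value $c_A^\star$ that forces ${\sf deg}(f_{c_A^\star}|_A)\leq d$; and finally, to pigeonhole the values $c_A^\star$ over good $A$'s into a single global $c^\star$. Let $A$ be a uniformly random $(t+100)$-flat containing $x^\star$. Averaging yields
\[
\E_A\bigl[\Prob{B\subseteq A,\,\dim B=t,\,x^\star\in B}{{\sf deg}(f|_B)\leq d}\bigr]\leq C(q)\eps'^{1/4}
\]
by Proposition~\ref{prop:find_zoom_in}, and similarly $\E_A\bigl[\Prob{B\subseteq A,\,\dim B=t,\,x^\star\notin B}{{\sf deg}(f|_B)>d}\bigr]\leq 2\eps$. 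By Markov and a union bound, with probability at least $0.8$ over $A$ both these inner fractions are bounded by absolute multiples of $\eps'^{1/4}$ and $\eps$; call such $A$ \emph{good} and fix one.

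Running the argument of Claim~\ref{claim:pseduo_zoom_in_linear} inside $A$ (applying Theorem~\ref{thm:expansion} to ${\sf AffGras}(C,t)$ for intermediate $(t+50)$-flats $C\subseteq A$ with $x^\star\notin C$) upgrades ``few errors off $x^\star$'' to ``no errors off $x^\star$'' inside $A$. Hence for every $(t+99)$-flat $C\subseteq A$ with $x^\star\notin C$ we have $\eps_{t,d}(f|_C)=0$, so ${\sf deg}(f|_C)\leq d$ by Lemma~\ref{lem:HSS} iterated via Lemma~\ref{lem:relate}. Thus the only $(t+99)$-flats $C\subseteq A$ with ${\sf deg}(f|_C)>d$ must contain $x^\star$ --- a $1/q$-fraction of such $C$ --- and since most $B\ni x^\star$ in $A$ fail, ${\sf deg}(f|_A)>d$; Lemma~\ref{lem:HSS} part~2 at level $k=t+99$ now pins ${\sf deg}(f|_A)=(t+100)(q-1)$. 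The top-monomial coefficient of $f|_A$ therefore equals $(-1)^{t+100}\sum_{z\in A}f(z)\neq 0$ by the standard discrete Fourier identity. Set $c_A^\star := f(x^\star)-\sum_{z\in A}f(z)$, and let $f_c$ denote the function with $f(x^\star)$ replaced by $c$. Then $f_{c_A^\star}|_A$ has vanishing top coefficient, so ${\sf deg}(f_{c_A^\star}|_A)<(t+100)(q-1)$; since $f_{c_A^\star}$ agrees with $f$ off $x^\star$, the same bootstrapping conclusion gives $\eps_{t+99,d}(f_{c_A^\star}|_A)\leq 1/q$, and Lemma~\ref{lem:HSS} part~2 again forces ${\sf deg}(f_{c_A^\star}|_A)\leq d$. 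In particular, every $t$-flat $B\subseteq A$ with $x^\star\in B$ satisfies ${\sf deg}(f_{c_A^\star}|_B)\leq d$.

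Pigeonhole on the map $A\mapsto c_A^\star$ over good $A$'s produces $c^\star\in\mathbb{F}_q$ with $\Pr_A[A\text{ good and }c_A^\star=c^\star]\geq 0.8/q$. Consider the joint distribution in which $A$ is a uniform $(t+100)$-flat through $x^\star$ and $B\subseteq A$ is a uniform $t$-flat through $x^\star$; by symmetry of the stabilizer of $x^\star$ the marginal on $B$ is uniform over $t$-flats through $x^\star$, and for any good $A$ with $c_A^\star=c^\star$ every such $B\subseteq A$ satisfies ${\sf deg}(f_{c^\star}|_B)\leq d$. Hence
\[
\cProb{A''\text{ $t$-flat}}{x^\star\in A''}{{\sf deg}(f_{c^\star}|_{A''})\leq d}
\;\geq\; \Pr_A[A\text{ good and }c_A^\star=c^\star]
\;\geq\; \frac{0.8}{q}\;\geq\;\frac{1}{2q}.
\]
The main obstacle is the bootstrapping step, which must faithfully adapt the proof of Claim~\ref{claim:pseduo_zoom_in_linear} to the ambient space $A$ one dimension level below $\mathbb{F}_q^n$, verifying all the pseudo-randomness hypotheses of Theorem~\ref{thm:expansion} for the relevant error set at an intermediate scale.
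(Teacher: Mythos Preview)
Your proposal is correct and follows essentially the same route as the paper: bootstrap to clear all errors off $x^\star$ inside a good $(t+100)$-flat $A$ (this is exactly the content of Claim~\ref{claim:is_empty} and the discussion preceding it), then use part~2 of Lemma~\ref{lem:HSS} to pin the degree and kill the top monomial by adjusting $f(x^\star)$, and finally pigeonhole over the correction values. The only substantive difference is that you apply the degree-pinning and top-monomial argument directly to $f|_A$ at the full $(t+100)$-dimensional level, whereas the paper descends first to an arbitrary $(t+1)$-flat $A'\subseteq A$ through $x^\star$; your version is arguably a bit cleaner since it yields a single canonical $c_A^\star$ per good $A$ rather than a value that \emph{a priori} depends on the choice of $A'$, and the final pigeonhole over $A$'s is then immediate.
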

The rest of this section is devoted to proving Proposition~\ref{prop:fix_x_star}. Take a uniform $(t+100)$-flat $A$ containing $x^{\star}$, and let
\[
  \mathcal{B}_A = \sett{B \in {\sf AffGras}(A, t)}{x^{\star}\not\in B, {\sf deg}(f|_{B}) > d},
\]
then $\Expect{A}{\mu_A(\mathcal{B}_A)}\leq O(\eps)$, so with probability at least $1/2$ over $A$ we have
that $\mu_A(\mathcal{B}_A)\leq O(\eps)$.

Take a $(t+40)$ flat $W\subseteq A$ randomly not containing $x^{\star}$, and let
\[
  \mathcal{B}_W = \sett{B\in {\sf AffGras}(W,t)}{ B\in \mathcal{B}_A}.
\]
We argue that for each $W$, either $\mu_W(\mathcal{B}_W) = 0$ or $\mu_W(\mathcal{B}_W)\geq q^{-100}$. Otherwise,
$0<\mu_W(\mathcal{B}_W)<q^{-100}$. Therefore, $\mathcal{B}_W$ is $q^{-60}$ pseudo-random with respect to zoom ins
(also with respect to their linear part), and from an argument as in Claim~\ref{claim:pseudo_zoom_out} we have that
$\mathcal{B}_W$ is $q^{-98}$ pseudo-random with respect to zoom-outs (as well as their linear parts). Finally,
as in the argument in Claim~\ref{claim:reduce_to_expansion} we have $1-\Phi_W(\mathcal{B}_W)\geq 1/q$, so we get
a contradiction to Theorem~\ref{thm:expansion}.

\begin{claim}\label{claim:is_empty}
$\mathcal{B}_A = \emptyset$.
\end{claim}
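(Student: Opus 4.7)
The plan is to imitate the bootstrapping argument used in the proof of Claim~\ref{claim:pseduo_zoom_in_linear}, now performed inside the fixed $(t+100)$-flat $A$. Suppose for contradiction that $\mathcal{B}_A \neq \emptyset$, and fix some $B_0 \in \mathcal{B}_A$. Any $(t+40)$-flat $W^{\star}\subseteq A$ containing $B_0$ and avoiding $x^{\star}$ has $B_0 \in \mathcal{B}_{W^{\star}}$, so $\mu_{W^{\star}}(\mathcal{B}_{W^{\star}}) > 0$, and the dichotomy established just above forces $\mu_{W^{\star}}(\mathcal{B}_{W^{\star}}) \geq q^{-100}$. I would fix such a $W^{\star}$ and use it as the seed for the bootstrap.

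Next I would sample a uniform $(t+99)$-flat $Y \subseteq A$ conditioned on $x^{\star} \notin Y$, and then a uniform $(t+60)$-flat $A_2 \subseteq Y$. The geometric input is exactly the span-of-normals computation from Claim~\ref{claim:pseduo_zoom_in_linear}: $W^{\star}$ is cut out in $A$ by $60$ independent linear equations, while $A_2$ is cut out by a further $40$ equations (one from $Y$, $39$ from within $Y$), all random. With probability at least $e^{-O(1/q)}$ the full set of $100$ normals is linearly independent, and conditioned on this event $A_2 \cap W^{\star}$ is a $t$-flat distributed uniformly over ${\sf AffGras}(W^{\star}, t)$. Therefore $A_2 \cap W^{\star} \in \mathcal{B}_{W^{\star}} \subseteq \mathcal{B}_A$ with probability at least $e^{-O(1/q)} q^{-100}$, which places $A_2$ in the upper shadow (within $Y$) of $\mathcal{B}_Y := \mathcal{B}_A \cap {\sf AffGras}(Y, t)$. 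Averaging over the randomness in $Y$ and $A_2$,
\[
\Expect{Y}{\mu_Y(\mathcal{B}_Y \uparrow^{t+60})} \geq e^{-O(1/q)} q^{-100}.
\]

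To conclude I would observe that, since $x^{\star} \notin Y$, every $B \in {\sf AffGras}(Y,t)$ automatically avoids $x^{\star}$, so $\mathcal{B}_Y = \{B \in {\sf AffGras}(Y,t) : {\sf deg}(f|_B) > d\}$. Iterating Lemma~\ref{lem:relate} inside $Y$ then yields $\mu_Y(\mathcal{B}_Y \uparrow^{t+60}) \leq q^{60} \mu_Y(\mathcal{B}_Y)$, so $\Expect{Y}{\mu_Y(\mathcal{B}_Y)} \geq e^{-O(1/q)} q^{-160}$. On the other hand, by the transitivity of the affine group of $A$ fixing $x^{\star}$ on $t$-flats avoiding $x^{\star}$, the marginal of a uniform $t$-flat $B \subseteq Y$ under the joint sampling is uniform over $t$-flats of $A$ avoiding $x^{\star}$, hence
\[
\Expect{Y}{\mu_Y(\mathcal{B}_Y)} = \frac{\mu_A(\mathcal{B}_A)}{1 - q^{-100}} \leq 2\mu_A(\mathcal{B}_A) \leq O(\eps) \leq q^{-M}.
\]
Combining these two bounds on $\Expect{Y}{\mu_Y(\mathcal{B}_Y)}$ contradicts the assumption that $M$ is a sufficiently large absolute constant, which forces $\mathcal{B}_A = \emptyset$.

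The main obstacle is purely technical and nearly identical to the analogous step in Claim~\ref{claim:pseduo_zoom_in_linear}: one must verify the dimension/independence estimate for $A_2 \cap W^{\star}$ and the symmetry claim that the marginal of $B$ is uniform over $t$-flats of $A$ avoiding $x^{\star}$, and keep track of the conditions $x^{\star} \notin W^{\star}$ and $x^{\star} \notin Y$ to legally invoke the dichotomy and identify $\mathcal{B}_Y$ as all high-degree $t$-flats in $Y$. No new ideas are needed beyond those already deployed in the preceding claims.
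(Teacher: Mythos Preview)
Your proposal is correct and follows essentially the same route as the paper's proof: both assume $\mathcal{B}_A\neq\emptyset$, use the dichotomy to locate a $(t+40)$-flat $W$ with $\mu_W(\mathcal{B}_W)\geq q^{-100}$, sample $Y$ of dimension $t+99$ avoiding $x^{\star}$ and $A_2\subseteq Y$ of dimension $t+60$, and compare the lower bound on $\Expect{Y}{\mu_Y(\mathcal{B}_Y\!\uparrow^{t+60})}$ coming from $A_2\cap W$ against the upper bound from Lemma~\ref{lem:relate} and $\mu_A(\mathcal{B}_A)\leq O(\eps)$. Your write-up is in fact slightly more explicit than the paper's (you spell out how to find $W^{\star}$ from a witness $B_0$, count $40$ normals rather than $39$, and justify the marginal uniformity via the stabilizer of $x^{\star}$), but no genuinely different idea is involved.
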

\begin{proof}
Otherwise, we may find $W$ such that $\mu_W(\mathcal{B}_W)\geq q^{-100}$.
The argument is similar to the end of the argument in Claim~\ref{claim:pseduo_zoom_in_linear}. Take a $(t+99)$ flat $Y\subseteq A$ randomly
not containing $x^{\star}$, and take a $(t+60)$-flat $A_2\subseteq Y$ randomly. Then $A_2\cap W$ has dimension $t$ with probability at least
$e^{-4/q}$, and then its distribution is uniform in ${\sf AffGras}(W,t)$. Thus, it is in $\mathcal{B}_W$ with probability at least $q^{-100}$.
Therefore, we get that
\[
\Expect{Y}{\mu_Y(\mathcal{B}_Y\uparrow^{t+60})}\geq \Prob{Y, A_2}{A_2\cap W\in \mathcal{B}_W}\geq e^{-4/q}\cdot q^{-100}.
\]
On the other hand, by Lemma~\ref{lem:relate}
\[
\Expect{Y}{\mu_Y(\mathcal{B}_Y\uparrow^{t+60})}
\leq q^{60}\Expect{Y}{\mu_Y(\mathcal{B}_Y)}
\leq q^{60}2\mu_A(\mathcal{B})
\leq 2q^{60}C(q)\eps'^{1/4}.
\]
Combining the two, we get that
\[
q^{-M/4}\geq \eps'^{1/4}\geq \frac{1}{C(q)q^{160}},
\]
which is a contradiction for large enough $M$.
\end{proof}

We are now ready to prove Proposition~\ref{prop:fix_x_star}.
\begin{proof}[Proof of Proposition~\ref{prop:fix_x_star}]
Take any $(t+1)$-flat $A'\subseteq A$ containing $x^{\star}$, and define $g = f|_{A'}$. Consider the polynomial $M(x) = 1_{x\neq x^{\star}}$ on $A'$.
Note that $M$ has degree $(t+2)(q-1)$, so we may find a constant $c\in\mathbb{F}_p$ such that $g' = g+cM$ has degree strictly smaller than $(t+2)(q-1)$.
We claim that ${\sf deg}(g') \leq d$.
Otherwise, from Lemma~\ref{lem:HSS} the fraction of $t$-flats $B\subseteq A'$ such that $g'|_{B}$ has degree greater than $d$ is strictly larger than $1/q$.
As the fraction of $B$'s that contain $x^{\star}$ is exactly $1/q$, it follows that there is $B\subseteq A'$ not containing $x^{\star}$ such that
${\sf deg}(g'|_{B}) > d$. But for such $B$'s we have ${\sf deg}(g'|_{B}) = {\sf deg}(f|_B)\leq d$, and contradiction. Thus, ${\sf deg}(g')\leq d$.
Stated otherwise, we may change the value of $f(x^{\star})$ and make the degree of $f|_{A'}$ at most $d$. In particular, we get that for each $t$-flat
$A''\subseteq A$ we may change $f(x^{\star})$ and make the degree of $f|_{A''}$ at most $d$.

Sampling $A$ a $(t+100)$ flat containing $x^{\star}$ randomly and then a $t$-flat $A''\subseteq A$ containing $x^{\star}$, we get that with probability at least $1/2$ we may change $f(x^{\star})$ and make the degree of $f|_{A''}$ at most $d$. Thus, taking the plurality vote we may choose $f(x^{\star})$ that appeases at least $\frac{1}{2q}$ of the $t$-flats
containing $x^{\star}$.
\end{proof}

\subsection{Fixing the error and iterating}\label{sec:iterate}
\begin{proposition}\label{prop:single_iteration}
  We may find $x\in\mathbb{F}_q^n$ and function $f'$ which is identical to $f$ at all points
  except at $x$, such that
  \[
  \eps_{t,d}(f')\leq \eps_{t,d}(f) - q^{t-d} \frac{1}{4q}
  \]
\end{proposition}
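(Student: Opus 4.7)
The plan is to combine Propositions~\ref{prop:find_zoom_in} and~\ref{prop:fix_x_star} in a short bookkeeping argument. First, invoke Proposition~\ref{prop:find_zoom_in} to locate a point $x^{\star}\in\mathbb{F}_q^n$ with $\mu(S_{x^{\star}})\geq 1-C(q)\eps'^{1/4}$ (where $C(q) = 2000q^2$), and then invoke Proposition~\ref{prop:fix_x_star} to extract a value $c\in\mathbb{F}_q$ such that redefining the value at $x^{\star}$ to $c$ makes at least a $\frac{1}{2q}$-fraction of the $t$-flats through $x^{\star}$ have restricted degree at most $d$. I set $x=x^{\star}$ and define $f'$ by $f'(y)=f(y)$ for $y\neq x^{\star}$ and $f'(x^{\star})=c$; this is the candidate for the proposition.

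The core step is to isolate the contribution of $t$-flats through $x^{\star}$ to the change in rejection probability. Because $f$ and $f'$ agree pointwise outside $x^{\star}$, the restrictions $f|_A$ and $f'|_A$ coincide for every $t$-flat $A\not\ni x^{\star}$. Writing $S'=\{A : {\sf deg}(f'|_A)>d\}$, $T_{x^{\star}} = \{A : x^{\star}\in A\}$, and $N$ for the total number of $t$-flats in $\mathbb{F}_q^n$, this yields
\[
\eps_{t,d}(f)-\eps_{t,d}(f') \;=\; \frac{|S\cap T_{x^{\star}}|-|S'\cap T_{x^{\star}}|}{N}.
\]
A direct count via $q$-Gaussian binomials (each linear $t$-dimensional subspace $V$ contributes exactly one coset through $x^{\star}$, namely $x^{\star}+V$, out of $q^{n-t}$ cosets in total) yields $|T_{x^{\star}}|/N = q^{t-n}$.

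The two input propositions now pin down the numerator. Proposition~\ref{prop:find_zoom_in} gives $|S\cap T_{x^{\star}}|\geq (1-C(q)\eps'^{1/4})|T_{x^{\star}}|$, while Proposition~\ref{prop:fix_x_star} gives $|S'\cap T_{x^{\star}}|\leq (1-\tfrac{1}{2q})|T_{x^{\star}}|$. Subtracting, the numerator is at least $\bigl(\tfrac{1}{2q}-C(q)\eps'^{1/4}\bigr)|T_{x^{\star}}|$. Since $\eps\leq q^{-M}$ and $t\geq M$ for a sufficiently large absolute constant $M$, we have $\eps'=\max(\eps,q^{-d})\leq q^{-\Omega(M)}$ and hence $C(q)\eps'^{1/4}\leq \frac{1}{4q}$. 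Thus the numerator is at least $\frac{1}{4q}|T_{x^{\star}}|$, and dividing by $N$ produces an additive decrease of at least $\frac{1}{4q}\,q^{t-n}$ in the rejection probability, which is the required bound.

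I do not anticipate a substantive obstacle in this last step: the genuine work sits in Propositions~\ref{prop:find_zoom_in} (the zoom-in structure theorem extracted from expansion in the affine Grassmann graph) and~\ref{prop:fix_x_star} (the local correction lemma that exhibits $c$), and the present proposition is really a bookkeeping corollary that packages the gain of one local correction into the per-iteration decrement fed into the iterative argument of Section~\ref{sec:iterate}.
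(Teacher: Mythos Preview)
Your proof is correct and follows exactly the same bookkeeping argument as the paper: invoke Propositions~\ref{prop:find_zoom_in} and~\ref{prop:fix_x_star}, note that only $t$-flats through $x^{\star}$ can change, compute $|T_{x^{\star}}|/N = q^{t-n}$, and subtract the two bounds. Your derived decrement $\tfrac{1}{4q}\,q^{t-n}$ is in fact the right quantity; the exponent $t-d$ in the proposition statement (and in the paper's own displayed equation) is a typo for $t-n$, as the iteration in Section~\ref{sec:iterate} confirms.
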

\begin{proof}
  Using Proposition~\ref{prop:find_zoom_in} we find $x^{\star}$ such that $\mu(S_{x^{\star}})\geq 1-C(q)\eps'$, and using Proposition~\ref{prop:fix_x_star}
  we find $c\in\mathbb{F}_q$ such that taking $f'$ to be identical to $f$ at all points except at $x^{\star}$ where it is equal to $c$, we have that
  $f'$ passes at least $\frac{1}{2q}$ fraction of the tests containing $x^{\star}$. We compare the probability that $f$ and $f'$ pass the $t$-flat test.
  Sample a $t$-flat $A$. Clearly, if $A$ does not contain $x^{\star}$ they perform the same; otherwise, $f$ passes with probability at most $C(q)\eps'$,
  and $f'$ passes with probability at least $\frac{1}{2q}$. As the probability that $x^{\star}\in A$ is $q^{t-n}$, we get that
  \[
  \eps_{t,d}(f')\leq \eps_{t,d}(f) - q^{t-d}\left(\left(1- O(\eps')\right) - \left(1-\frac{1}{2q}\right)\right)\leq \eps{t,d}(f) - q^{t-d} \frac{1}{4q}.\qedhere
  \]
\end{proof}
From Proposition~\ref{prop:single_iteration} we get that as long as $\eps_{t,d}(f)>0$, we may find a point $x$ and change $f(x)$ so as to decrease
$\eps_{t,d}(f)$ by at least $q^{t-d}\frac{1}{4q}$. Thus, after at most $\frac{\eps_{t,d}(f)}{q^{t-d}/4q}$ invocations of the proposition we will end
up with a function that passes the test with probability $1$, which by the choice of $t$ implies we will end up with a degree $d$ function. We therefore get that
\[
\delta_d(f)q^n\leq \frac{\eps_{t,d}(f)}{q^{t-d}/4q},
\]
hence $\delta_d(f)\leq 4q^{1-t}\eps_{t,d}(f)$.
\qed

\section{Lifted affine invariant codes: proof of Theorem~\ref{thm:main2}}
In this section, we argue that the method above used to prove optimal testing for Reed-Muller codes applies to lifted affine invariant codes
as well, thereby proving Theorem~\ref{thm:main2}. Towards this end, it turns out that the only part that has to be adjusted are Lemmas~\ref{lem:HSS} and~\ref{lem:relate}.
Thus, we begin by proving them for affine invariant codes and then quickly explain how the rest of the proof proceeds.

\subsection{Facts about affine invariant codes}
\begin{definition}\label{def:deg_domination}
  Let $m,n\in\mathbb{N}$, let $p$ be prime and write $m = \sum\limits_{i=0}^{r} m_i p^{i}$, $n = \sum\limits_{i=0}^{r} n_i p^i$ the base $p$ expansion of $m$ and $n$.
  We say $m$ dominates $n$ with respect to the $p$-base expansion if $m_i\geq n_i$ for all $i$.
\end{definition}

For a polynomial $f$, we denote by ${\sf supp}(f)$ the collection of monomials in $f$ that have a non-zero coefficient. Also, for a set of functions $\mathcal{B}$,
we denote by ${\sf supp}(\mathcal{B})$ the set of monomials that appear in at least one of these functions. Lastly, we will use the fact
that the support of an affine invariant set is affine invariant.

\begin{lemma}\label{lem:spreading}[Monomial spreading~\cite[Lemma 4.6]{KaufmanSudan}]
  Suppose that $\mathcal{B}$ is affine invariant, and let $M = x_1^{d_1+e}x_2^{d_2}x_3^{d_3}\cdots x_t^{d_t}$ and $M' = x_{1}^{d_1}x_2^{d_2+e}\cdots x_t^{d_t}$ be monomials
  such that $d_1+e$ dominates $e$. If $M\in{\sf supp}(\mathcal{B})$, then $M'\in{\sf supp}(\mathcal{B})$.
\end{lemma}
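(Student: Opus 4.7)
The plan is to produce an element of $\mathcal{B}$ whose support contains $M'$, starting from any $f \in \mathcal{B}$ with $M \in {\sf supp}(f)$ and exploiting closure of $\mathcal{B}$ under both affine transformations and linear combinations.

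First I would extract $M$ from $f$ as cleanly as possible by averaging against a character of the diagonal subgroup of the affine group. Concretely, for $a \in (\mathbb{F}_q^*)^t$ set $D_a:(x_1,\ldots,x_t) \mapsto (a_1 x_1,\ldots,a_t x_t)$ and $d_1' := d_1+e$, $d_i' := d_i$ for $i \geq 2$; then
\[
g(x) := \sum_{a \in (\mathbb{F}_q^*)^t} \prod_{i=1}^{t} a_i^{-d_i'}\, f(D_a(x)) \;\in\; \mathcal{B}.
\]
Since each monomial $\prod_i x_i^{b_i}$ is an eigenvector of $D_a$ with eigenvalue $\prod_i a_i^{b_i}$, character orthogonality on $(\mathbb{F}_q^*)^t$ forces every monomial appearing in $g$ to have exponent vector $\vec{b}$ with $b_i \equiv d_i' \pmod{q-1}$ componentwise, while preserving the coefficient $c_M \neq 0$ of $M$ up to a nonzero scalar.

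Next I would apply the shear $\sigma_\gamma:(x_1,x_2,\ldots,x_t) \mapsto (x_1+\gamma x_2, x_2, \ldots, x_t)$ and set $h_\gamma := g \circ \sigma_\gamma \in \mathcal{B}$ for each $\gamma \in \mathbb{F}_q$. A monomial $N = \prod_i x_i^{a_i}$ of $g$ expands as $\sum_{j} \binom{a_1}{j} \gamma^j\, x_1^{a_1-j} x_2^{a_2+j} \prod_{i\geq 3} x_i^{a_i}$, which I then reduce modulo $(x_i^q-x_i)_i$ to land in the canonical polynomial representation of $h_\gamma$. Writing $p(\gamma)$ for the coefficient of $M'$ in that representation, the congruence constraints on $g$ together with the fact that $x_1$-exponents are already reduced force any contribution to $p(\gamma)$ to come with $j = a_1 - d_1$ and either $a_2 + j = d_2 + e$ (no $x_2$-reduction) or $a_2 + j = d_2 + e + (q-1)$ (one $x_2$-reduction). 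Focusing on the coefficient of $\gamma^e$ pins $a_1 = d_1+e$; the first case then forces $a_2 = d_2$, i.e., $N = M$, and contributes $c_M \binom{d_1+e}{e}$, while the second case forces $a_2 = d_2 + q - 1$, which is infeasible in the generic setting $d_2 \geq 1$ (and similarly $d_1+e, d_2+e \leq q-2$, $d_i \in \{1,\ldots,q-2\}$ for $i\geq 3$). By Lucas's theorem, $\binom{d_1+e}{e} \not\equiv 0 \pmod{p}$ precisely when $d_1 + e$ dominates $e$ in base $p$, which is our hypothesis. So the $\gamma^e$ coefficient of $p$ is nonzero. Since $p$ is represented by a polynomial in $\gamma$ of degree at most $q-1$, it is not identically zero as a function $\mathbb{F}_q \to \mathbb{F}_q$, and choosing $\gamma^\star$ with $p(\gamma^\star)\neq 0$ yields $h_{\gamma^\star} \in \mathcal{B}$ with $M' \in {\sf supp}(h_{\gamma^\star}) \subseteq {\sf supp}(\mathcal{B})$.

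The main obstacle is the boundary case in which some exponent among $d_1+e, d_2+e, d_3, \ldots, d_t$ lies in $\{0, q-1\}$: in that regime the character projection in Step 1 fails to collapse $g$ to a pure multiple of $M$, and the stray monomials (whose exponents differ from $M$'s by multiples of $q-1$) might contribute to $p(\gamma)$ through $x_i$-degree reductions and could in principle cancel the $c_M \binom{d_1+e}{e}$ term. I would handle this by augmenting Step 1 with translations $x_i \mapsto x_i + b_i$ (using the additive part of the affine group) to further filter out stray monomials, or alternatively by a direct case analysis showing that the putative cancelling contributions themselves vanish. This is the step where the combination of multiplicative and additive affine invariance of $\mathcal{B}$ is essential, rather than either piece alone.
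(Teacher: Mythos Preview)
The paper does not supply a proof; the lemma is quoted from \cite[Lemma 4.6]{KaufmanSudan}. Your core mechanism---apply a shear $x_1\mapsto x_1+\gamma x_2$ and invoke Lucas to see $\binom{d_1+e}{e}\not\equiv 0\pmod p$---is exactly right and is the heart of the Kaufman--Sudan argument as well. The detour through diagonal averaging, however, is what creates the gap you yourself flag: that step only pins exponents modulo $q-1$, so whenever some exponent of $M$ lies in $\{0,q-1\}$ stray monomials survive in $g$ and can genuinely cancel your $c_M\binom{d_1+e}{e}$ term. Concretely, if $d_2=0$ and $g$ contains $N=x_1^{d_1+e}x_2^{q-1}\prod_{i\ge 3}x_i^{d_i}$, then after the shear the factor $x_2^{q-1+e}$ reduces to $x_2^{e}$ and $N$ contributes to the $\gamma^e$-coefficient of $M'$ on equal footing with $M$. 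Your proposed remedies are not convincing as stated: averaging over translations does not separate $x_i^{0}$ from $x_i^{q-1}$ cleanly (e.g.\ $\sum_{c\in\mathbb{F}_q}(x+c)^{q-1}=-1$ while $\sum_c 1=0$, so both collapse rather than split), and a bare case analysis has no evident termination.

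The clean route---essentially what Kaufman--Sudan do---is to skip the averaging and instead invoke Lemma~\ref{lem:mon_basis} (also cited from that paper, and logically prior there): for a \emph{linear} affine-invariant code one has $\mathcal{B}={\sf span}({\sf supp}(\mathcal{B}))$, so the monomial $M$ itself lies in $\mathcal{B}$. Now apply a single shear $\sigma_1$ to $M$ alone. In
\[
(x_1+x_2)^{d_1+e}x_2^{d_2}\cdots x_t^{d_t}=\sum_{j=0}^{d_1+e}\binom{d_1+e}{j}\,x_1^{\,d_1+e-j}x_2^{\,d_2+j}x_3^{d_3}\cdots x_t^{d_t},
\]
every $j\le d_1+e\le q-1$, so the $x_1$-exponent is already reduced and equals $d_1$ only when $j=e$; at that $j$ the $x_2$-exponent is $d_2+e\le q-1$ (since $M'$ is a valid monomial), so no reduction is needed there either. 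Hence the coefficient of $M'$ is exactly $\binom{d_1+e}{e}\ne 0$, giving $M'\in{\sf supp}(M\circ\sigma_1)\subseteq{\sf supp}(\mathcal{B})$ with no boundary cases to handle.
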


Finally, we will use the following characterization of affine invariant codes, saying that they can be characterized by a monomial basis.
\begin{lemma}\label{lem:mon_basis}[~\cite[Lemma 4.2]{KaufmanSudan}]
  If $\mathcal{B}$ is an affine invariant linear code, then $\mathcal{B} = {\sf span}({\sf supp}(\mathcal{B}))$.
\end{lemma}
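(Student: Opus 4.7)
The plan is to prove both inclusions; the containment $\mathcal{B} \subseteq {\sf span}({\sf supp}(\mathcal{B}))$ is tautological since every element of $\mathcal{B}$ is by definition a linear combination of the monomials in its support. The content of the lemma is therefore the reverse inclusion, and by $\mathbb{F}_q$-linearity of $\mathcal{B}$ it suffices to show that whenever $f \in \mathcal{B}$ and $M$ is a monomial appearing in $f$ with nonzero coefficient, we have $M \in \mathcal{B}$.

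To isolate a target monomial $M = x^{e^\star}$ with $e^\star \in \{0,1,\ldots,q-1\}^n$, I would exploit affine invariance in two stages. First, the diagonal scaling subgroup $(\mathbb{F}_q^\ast)^n$ acts on $\mathcal{B}$ by $f(x) \mapsto f(a_1 x_1, \ldots, a_n x_n)$, which lies in $\mathcal{B}$ for every $a \in (\mathbb{F}_q^\ast)^n$ since the map is affine. Each normalized monomial $x^e$ is an eigenvector of this action with eigenvalue $\chi_e(a) = \prod_i a_i^{e_i}$. For each character $\chi$ of $(\mathbb{F}_q^\ast)^n$, the averaging operator
\[
P_\chi f \;=\; \frac{1}{(q-1)^n} \sum_{a \in (\mathbb{F}_q^\ast)^n} \chi(a)^{-1}\, f(a \cdot x)
\]
is a linear combination of elements of $\mathcal{B}$ and hence lies in $\mathcal{B}$; it projects $f$ onto the span of monomials whose exponent vectors are congruent, coordinatewise, to a fixed vector modulo $q-1$. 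Applying $P_{\chi_{e^\star}}$ replaces $f$ by an element of $\mathcal{B}$ whose support consists only of exponent vectors agreeing with $e^\star$ modulo $q-1$. Since exponents lie in $\{0,1,\ldots,q-1\}$, the only residual ambiguity is between the values $0$ and $q-1$ in each coordinate.

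To resolve this residual ambiguity I would use translations. For each index $i$ and each $b \in \mathbb{F}_q$, the difference operator $\Delta_b^{(i)} g(x) := g(x + b\,\mathbf{e}_i) - g(x)$ maps $\mathcal{B}$ into itself by affine invariance and linearity. Acting on a monomial $x^e$ with $e_i = q-1$ it produces $\bigl((x_i+b)^{q-1} - x_i^{q-1}\bigr)\prod_{j \ne i} x_j^{e_j}$, whose degree in $x_i$ is at most $q-2$, while a monomial with $e_i = 0$ is annihilated. Composing $\Delta_b^{(i)}$'s along each coordinate $i$ where $e^\star_i = 0$ yields an element of $\mathcal{B}$ whose support contains strictly smaller exponents and in particular no longer contains $M$ itself; subtracting this off from $P_{\chi_{e^\star}} f$ (after a further character projection to realign eigenspaces) refines the support without disturbing the $M$-component, and iterating eventually isolates the single monomial $M$ inside $\mathcal{B}$.

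The main obstacle I anticipate is the combinatorial bookkeeping in this last step: one must verify that at each iteration the distinguished monomial $M$ is not accidentally killed and that the set of monomials in the support of the working element of $\mathcal{B}$ strictly shrinks, so that the procedure terminates. This requires an induction on the number of coordinates where $e^\star$ equals $0$ or $q-1$, tracking how the combined character projections and difference operators act on this exponent vector. Once this is executed, we obtain $M \in \mathcal{B}$, establishing ${\sf supp}(\mathcal{B}) \subseteq \mathcal{B}$ and hence ${\sf span}({\sf supp}(\mathcal{B})) \subseteq \mathcal{B}$, proving the lemma.
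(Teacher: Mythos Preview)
The paper does not give its own proof of this lemma; it is quoted from Kaufman--Sudan, so there is nothing in the paper to compare against directly.

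Your torus-projection step (character averaging over $(\mathbb{F}_q^*)^n$) is correct and standard. The gap is in the second stage. The operators you invoke --- diagonal scalings and additive translations --- generate only a proper subgroup of the affine group, and this subgroup is \emph{not} rich enough to isolate every monomial. A concrete obstruction: the subspace
\[
\mathcal{B}' \;=\; {\sf span}\bigl\{\,x_1^{a}:0\le a\le q-2\,\bigr\}+{\sf span}\bigl\{\,x_2^{a}:0\le a\le q-2\,\bigr\}+{\sf span}\bigl\{x_1^{\,q-1}+x_2^{\,q-1}\bigr\}
\]
of functions $\mathbb{F}_q^2\to\mathbb{F}_q$ is closed under every diagonal scaling and every translation, yet $x_1^{\,q-1}\in{\sf supp}(\mathcal{B}')\setminus\mathcal{B}'$. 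Since every operator you use preserves $\mathcal{B}'$, no combination of them can carry $x_1^{\,q-1}+x_2^{\,q-1}$ to $x_1^{\,q-1}$. Indeed, running your proposed iteration on $f=x_1^{\,q-1}+x_2^{\,q-1}$ with target $M=x_1^{\,q-1}$ produces, after one ``subtract and re-project'' round, $x_1^{\,q-1}+x_2^{\,q-1}-1$: the support has grown rather than shrunk, and after $p$ rounds one is back where one started.

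What the argument is missing is genuine use of $\mathrm{GL}_n(\mathbb{F}_q)$ --- for instance a shear $x_1\mapsto x_1+x_2$, which takes $x_1^{\,q-1}+x_2^{\,q-1}$ outside $\mathcal{B}'$. Full affine invariance is needed precisely at this point to break the $0$-versus-$(q{-}1)$ symmetry. Your plan can be repaired by incorporating such linear substitutions into the second stage, but as written it cannot succeed.
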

Thus, to show that $f\not\in\mathcal{B}$ it suffices to show that the support of $f$ contains a monomial not in $\mathcal{B}$.

\subsection{The relation lemma}
We begin by adapting Lemma~\ref{lem:HSS} to our case, following the argument in~\cite{HRZS}.
\begin{lemma}\label{lem:HSS2}
  Let $p$ be prime, $q\in\mathbb{N}$ be a power of $p$ and let $t\in\mathbb{N}$.
  Let $\mathcal{B}\subseteq \set{g\colon\mathbb{F}_q^t\to\mathbb{F}_q}$ be an affine invariant code,
  and denote $\mathcal{F} = {\sf Lift}_{k+1}(\mathcal{B})$. Suppose that $k\geq t$, and let $f\colon\mathbb{F}_q^{k+1}\to\mathbb{F}_q$
  be such that $f\not\in\mathcal{F}$. Then
  \begin{enumerate}
    \item $\eps_{k}(f)\geq \frac{1}{q}$.
    \item If $\eps_{k}(f) = \frac{1}{q}$, and the set $\mathcal{H}$ of hyperplanes $H$ for which $f|_{H}\not\in\mathcal{F}$
    is of the form
    \[
    \mathcal{H} = \sett{H\subseteq \mathbb{F}_q^{k+1}}{x^{\star}\in H}
    \]
    for some $x^{\star}\in\mathbb{F}_q^{k+1}$, then there exists $g\in\mathcal{F}$ that agrees with $f$ on all points except on
    $x^{\star}$.
  \end{enumerate}
\end{lemma}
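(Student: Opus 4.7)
I will adapt the proof of Lemma~\ref{lem:HSS} to the affine invariant setting, using Lemma~\ref{lem:mon_basis} (monomial basis of $\mathcal{F}$) and Lemma~\ref{lem:spreading} (monomial spreading) in place of the degree-based computations. The role of ``${\sf deg}(f) > d$'' is played by ``${\sf supp}(f) \not\subseteq {\sf supp}(\mathcal{F})$'' via Lemma~\ref{lem:mon_basis}, and the role of ``${\sf deg}(f) < (k+1)(q-1)$'' is played by the absence from ${\sf supp}(f)$ of the unique top reduced monomial $M^{\star} := x_1^{q-1}\cdots x_{k+1}^{q-1}$. For Part 1, Lemma~\ref{lem:mon_basis} yields a monomial $M \in {\sf supp}(f) \setminus {\sf supp}(\mathcal{F})$, and composing $f$ with an invertible affine transformation (which preserves both $\mathcal{F}$ and $\eps_k(f)$) I may bring $M$ into canonical form $M = x_1^{e_1}\cdots x_m^{e_m}$ as in~\cite[Lemma 4.3]{HSS} and~\cite{KaufmanSudan}. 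I then repeat the two-case analysis from the proof of Lemma~\ref{lem:HSS}: if $m \leq k$, I analyze restrictions $f|_{L_z = 0}$ for $L_z(x) = a_1 x_1 + L'(x_2,\ldots,x_k) + z\, x_{k+1}$ with $a_1 \neq 0$ and observe that the coefficient of the ``spread'' monomial $M' = x_{k+1}^{e_1}\prod_{1 < j \leq m} x_j^{e_j}$ in $f|_{L_z = 0}$ is a nonzero polynomial in $z$ of degree at most $q-1$, hence nonzero for at least $1/q$ of the values $z \in \mathbb{F}_q$; if $m = k+1$, an analogous computation on $M' = \prod_{i \leq k} x_i^{e_i}$ gives bad restrictions for a $(q-e_{k+1})/q$ fraction of $z$. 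The new ingredient — replacing the degree-preservation argument of~\cite{HSS} — is Lemma~\ref{lem:spreading} applied to ${\sf supp}({\sf Lift}_k(\mathcal{B}))$ (which is affine invariant because $\mathcal{B}$ is): since $M \notin {\sf supp}(\mathcal{F})$ and $M'$ is obtained from $M$ by an exponent move satisfying the domination hypothesis, also $M' \notin {\sf supp}({\sf Lift}_k(\mathcal{B}))$, and then Lemma~\ref{lem:mon_basis} certifies $f|_{L_z = 0} \notin {\sf Lift}_k(\mathcal{B})$, giving $\eps_k(f) \geq 1/q$. The same case analysis yields a strict-inequality refinement: if in addition $M^{\star} \notin {\sf supp}(f)$, then either $m \leq k$ (in which case $L$'s not depending on $x_1,\ldots,x_m$ provide additional bad hyperplanes beyond the $1/q$ fraction) or $m = k+1$ with $e_{k+1} \leq q-2$ (in which case $(q-e_{k+1})/q \geq 2/q$ of the values of $z$ are bad), and either upgrades the bound to $\eps_k(f) > 1/q$.

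For Part 2, I translate so that $x^\star = 0$, and pick $c \in \mathbb{F}_q$ so that $g := f - c\,\delta_0$ has $M^{\star}$-coefficient zero; this is possible because $\delta_0(x) = \prod_{i=1}^{k+1}(1 - x_i^{q-1})$ has $M^{\star}$-coefficient $(-1)^{k+1} \neq 0$. Suppose for contradiction that $g \notin \mathcal{F}$. Since $g$ agrees with $f$ away from $0$, for every affine hyperplane $H$ not containing $0$ we have $g|_H = f|_H \in {\sf Lift}_k(\mathcal{B})$ by the hypothesis on $\mathcal{H}$, so the bad hyperplanes of $g$ are contained in the zoom-in at $0$ and hence $\eps_k(g) \leq 1/q$. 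On the other hand $M^{\star} \notin {\sf supp}(g)$ combined with $g \notin \mathcal{F}$ activates the strict-inequality refinement of Part 1, giving $\eps_k(g) > 1/q$, a contradiction. Therefore $g \in \mathcal{F}$, and $g$ differs from $f$ only at $x^\star$.

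The hard part is running the HSS case analysis so that Lemma~\ref{lem:spreading} cleanly applies: concretely, the exponent shuffle that turns $M$ into the spread monomial $M'$ in each case must satisfy the ``$d_1 + e$ dominates $e$'' hypothesis of Lemma~\ref{lem:spreading}, which is exactly why the canonical form of~\cite{KaufmanSudan,HSS} is tuned so that the exponents $e_i$ lie in the range $[q - q/p, q-1]$ for $i < m$. Once this canonicalization is set up, the rest is a direct transcription of the HSS argument, with the two degree conditions ``${\sf deg}(f) > d$'' and ``${\sf deg}(f) < (k+1)(q-1)$'' replaced by their monomial-support analogs as above, and the tightness of Part 2 reduced to the strict-inequality half of Part 1 in the same manner as the correction step in Section~\ref{sec:correct_val}.
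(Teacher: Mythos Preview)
Your overall strategy for Part 2 --- kill the top monomial $M^{\star}$ and invoke a strict-inequality refinement of Part 1 --- is cleaner than the paper's, but the strict-inequality refinement is not established by your argument. Two concrete issues arise when you transplant the case analysis of Lemma~\ref{lem:HSS} to the affine-invariant setting. First, the claim that the coefficient of the spread monomial $M'$ in $f|_{L_z=0}$ is a \emph{nonzero} polynomial in $z$ relies, in the Reed--Muller proof, on $M$ being a max-monomial of $f$: this is what prevents other monomials from cancelling the leading $z^{e_1}$ term contributed by $M$. In the affine-invariant case you only pick $M\in{\sf supp}(f)\setminus{\sf supp}(\mathcal{F})$, which need not be a max-monomial of $f$; higher-degree monomials of $f$ (possibly lying in ${\sf supp}(\mathcal{F})$) can interfere. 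Second, in the case $m=k+1$ you track $M'=\prod_{i\le k}x_i^{e_i}$ and need $M'\notin{\sf supp}({\sf Lift}_k(\mathcal{B}))$, but Lemma~\ref{lem:spreading} does not give this: spreading moves an exponent $e$ from one variable to another under a domination hypothesis, it does not let you drop $x_{k+1}^{e_{k+1}}$ entirely. There is no spreading step from $M'$ to $M=M'\cdot x_{k+1}^{e_{k+1}}$, so $M\notin{\sf supp}(\mathcal{F})$ does not force $M'\notin{\sf supp}({\sf Lift}_k(\mathcal{B}))$. (In the Reed--Muller case this step is a bare degree count, $\deg M'\ge k(q-q/p)>d$, which has no analog here.)

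The paper avoids both issues by taking a different route. It imports Part 1 from~\cite{HRZS} (which parametrizes hyperplanes by the first nonzero coordinate index $c_{\alpha}$ rather than via a canonical max-monomial), and for Part 2 it does \emph{not} prove a general strict inequality. Instead it exploits the exact equality $\eps_k(f)=1/q$ together with the zoom-in structure of $\mathcal{H}$ to run an iterative argument over $c_{\alpha}=1,2,\dots,t$: at each step the uniqueness of the bad value of $z$ forces the current exponent to be $q-1$, and when a different monomial threatens to cancel, Lemma~\ref{lem:spreading} is applied (in the direction it actually supports) to show the cancelling monomial is itself outside ${\sf supp}(\mathcal{B})$, so one may switch to it and continue. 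The upshot is that the specific monomial $\prod_{i=1}^{t}x_i^{q-1}$ must lie in ${\sf supp}(f)$; the paper then kills \emph{that} monomial (not $M^{\star}$) and reruns the same argument on $g$ to reach a contradiction.
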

\begin{proof}
Assume without loss of generality that $x^{\star} = 0$.
We will closely follow the argument in~\cite[Lemma 5.3]{HRZS} (we note that out assumption about $x^{\star}$ does not conflict with
the assumption therein that $T$ is the identity), which already establishes the first bullet. Our goal henceforth will be
to establish the second bullet.

A hyperplane therein is indexed by $\vec{\alpha} = (\alpha_0,\alpha_1,\ldots,\alpha_{k+1})$, which encodes the hyperplane
\[
H = \sett{x}{\alpha_0 + \sum\limits_{i=1}^{k+1} \alpha_i x_i = 0}.
\]
For each hyperplane, let $c_{\alpha}$ be the smallest $i\geq 1$ such that $\alpha_i\neq 0$. The argument in~\cite{HRZS} proceeds as follows:
\begin{enumerate}
  \item If $c_{\alpha}>t$, the authors show that $H\in\mathcal{H}$ given that $\alpha_0 = 0$. Hence, among the hyperplanes for which $c_{\alpha}>t$,
  at least $1/q$ of them lie in $\mathcal{H}$.
  \item If $1\leq c_{\alpha}\leq t$, then one may alter $\alpha_n$ and cause $H$ to be in $\mathcal{H}$. Hence, at least $1/q$ fraction of these hyperplanes
  are in $\mathcal{H}$. We note that if for some $\alpha$, there were at least $2$ ways of choosing $\alpha_n$ so that $H\in\mathcal{H}$, then we would get
  that the fraction of hyperplanes in $\mathcal{H}$ from this case is strictly greater than $1/q$. Thus, since we assume that $\eps_{k}(f) = 1/q$, there is
  precisely one way of choosing $\alpha$ so that $H\in\mathcal{H}$.
\end{enumerate}

We now look more closely at their analysis in the second case, starting from $c_{\alpha}=1$. Consider as there
 \[
    B(x_1,\ldots,x_n) = \left(x_{1} - \sum\limits_{1<j\leq n} \frac{\alpha_j}{\alpha_1} x_j - \frac{\alpha_0}{\alpha_1}, x_2,\ldots,x_n\right),
    \]
    and $B'\colon \mathbb{F}_q^t\to\mathbb{F}_q^t$ defined as
    \[
    B'(x_1,\ldots,x_t) = \left(x_{1} - \sum\limits_{1<j\leq t} \frac{\alpha_j}{\alpha_1} x_j - \frac{\alpha_0}{\alpha_1}, x_{2},\ldots,x_t\right).
    \]
    Note that $(f\circ B)_{x_{t+1} = 0,\ldots,x_n = 0} = f|_{x_{t+1}=0,\ldots,x_n=0} \circ B' \not \in\mathcal{B}$, as $f|_{x_{t+1}=0,\ldots,x_n=0}\not\in\mathcal{B}$
    and $\mathcal{B}$ is affine invariant. Thus, there is a monomial $M$ in the support of $f\circ B$ that is not in ${\sf supp}(\mathcal{B})$, say
    \[
    M = \prod\limits_{i=1}^{t} x_i^{d_i}.
    \]
    Let $\alpha(z) = (\alpha_0,\alpha_1,\ldots,\alpha_{n-1},z)$, let $H_z$ be the hyperplane defined by $\alpha(z)$, and let
    \[
    f_{\alpha(z)}(x_2,\ldots,x_n) = f\left(-\sum\limits_{1<j\leq n} \frac{\alpha(z)_j}{\alpha_1} x_j - \frac{\alpha_0}{\alpha_1}, x_2,\ldots,x_n\right)
    \]
    be the restriction of $f$ to $H_z$. Looking at $f_{\alpha(z)}$ as a function of $x_2,\ldots,x_n$ and $z$, we get that the monomial
    \[
    z^{d_1}x_n^{d_1} \prod\limits_{i=2}^{t} x_i^{d_i}
    \]
    appears in $f_{\alpha(z)}$. This is because $f_{\alpha(z)}$ is the same as $f\circ B$ when we replace $x_1$ with $\ell(z) x_n$ for some linear function $\ell(z)$.
    Thus, there are at least $q-d_1$ choices for $z$ to make that monomial survive in $H_z$, in which case we would have that $H_z\in\mathcal{H}$. Since by our assumption
    there is at most $1$ such $z$, we get that $d_1 = q-1$.

    We now observe that the monomial $M = \prod\limits_{i=1}^{t} x_i^{d_i}$ must be in the support of $f$. Indeed, to have the monomial
    $z^{d_1}x_n^{d_1} \prod\limits_{i=2}^{t} x_i^{d_i}$ in $f_{\alpha(z)}$, as $d_1 = q-1$, we must have a monomial whose degree in $x_1$
    is full (i.e. $q-1$), and expanding $\left(\sum\limits_{i=2}^{n}\frac{\alpha(z)_{i}}{\alpha(z)_1}x_i\right)^{q-1}$ (which would be what
    that monomial gives on $x_1$), we must have that the contribution from it would have full degree in $z$, i.e. it must pick the term
    $\alpha(z)_n^{q-1} x_n^{q-1}$. This says that this part of the monomial does not contribute any $x_j$ factors for $j>1$, and hence those
    must be contributed form the original monomial itself.

    Thus, we now have that $d_1 = q-1$, and $M\in{\sf supp}(f)\setminus {\sf supp}(\mathcal{B})$. We now move on to the case $c_{\alpha}=2$, and consider this
    monomial $M$ and whether it stays alive in $f_{\alpha(z)}$. We look at the corresponding hyperplane as
    \[
    -x_2 = \sum\limits_{j>2}\frac{\alpha_j}{\alpha_2}x_j + \frac{\alpha_0}{\alpha_2},
    \]
    and look at $f_{\alpha(z)}$ and in particular in the monomial $z^{d_2}x_n^{d_2} x_1^{q-1} \prod\limits_{i=3}^{t} x_i^{d_i}$. There are a few cases
    that have to be considered.
    \begin{enumerate}
      \item If it exists in $f_{\alpha(z)}$, we get that the coefficient of $x_n^{d_2} x_1^{q-1} \prod\limits_{i=3}^{t} x_i^{d_i}$ is a non-zero polynomial in $z$
    of degree at most $d_2$, and for each $z$ for it is non-zero we get that $H_z\in\mathcal{H}$. Thus there must be a unique choice for $z$ that would make it alive
    and necessarily $d_2 = q-1$. We continue to the next $c$.
      \item Otherwise, it means it has been canceled by some other monomial in $f$. We note that any such monomial must be of the form
      \[
            M' = x_1^{q-1} x_2^{d_2'}\cdots x_{t}^{d_t'},
      \]
      where $d_2'> d_2$.

      We argue that $M'\not\in{\sf supp}(\mathcal{B})$. Indeed, assume towards contradiction this is not the case.
      For this monomial to cancel $M$, we look at what happens when we plug in $x_2$ as in $\alpha(z)$:
      \[
      x_2^{d_2'} =
      \left(\sum\limits_{j>2}\frac{\alpha_j(z)}{\alpha_2}x_j + \frac{\alpha_0}{\alpha_2}\right)^{d_2'}
      =\left(\frac{z}{\alpha_2}x_n + S\right)^{d_2'}
      =\sum\limits_{r\leq d_2'}{d_2'\choose r}\left(\frac{z}{\alpha_2}x_n\right)^{r}S^{d_2'-d_2},
      \]
      where $S = \sum\limits_{2<j<n}\frac{\alpha_j(z)}{\alpha_2}x_j + \frac{\alpha_0}{\alpha_2}$. The contribution from this that may cancel $M$
      comes from $r$, so it is
      \[
      {d_2'\choose d_2}\left(\frac{z}{\alpha_2}x_n\right)^{d_2}S^{d_2'-d_2}.
      \]
      By Lucas's theorem, for ${d_2'\choose d_2}$ to be non-zero mod $p$ (in which case the last expression is $0$ as the characteristic of $\mathbb{F}_q$ is $p$)
      we need $d_2'$ to dominate $d_2$ in the $p$-basis. We then expand $S^{d_2'-d_2}$, and should get from it $\prod\limits_{i=3}^{t} x_i^{d_i-d_i'}$. We will
      do so under the assumption that $d_i\geq d_i'$ for $i\geq 3$; the argument is similar otherwise. For example, if $d_3<d_3'$, then below every occurrence of
      the difference $(d_3-d_3')$ is to be replaced by $(q-1 + d_3 - d_3')$.

      Doing
      the analysis term by term, we should have that $d_2'-d_2$ dominates $d_3-d_3'$ in the $p$-basis, and setting $e_i = (d_2'-d_2) - \sum\limits_{j=3}^{i}(d_j-d_j')$, we should have that $e_i$ dominates $d_{i+1} - d_{i+1}'$ in the $p$-basis. Eventually, we must have that $e_t = 0$.

      We now use the monomial spreading, i.e. Lemma~\ref{lem:spreading}. As $d_2'$ dominated $d_2$, we may get that the monomial
      \[
      M'' = x_1^{q-1}x_n^{d_2}x_2^{d_2'-d_2} x_3^{d_3'} x_{4}^{d_4'}\cdots x_{t}^{d_t'}
      =x_1^{q-1}x_n^{d_2}x_2^{e_2} x_3^{d_3'} x_{4}^{d_4'}\cdots x_{t}^{d_t'}
      \]
      is in ${\sf supp}(\mathcal{B})$.
      As $e_2$ dominates $d_3-d_3'$, we conclude again using Lemma~\ref{lem:spreading}
      that the monomial
      \[
      M'''= x_1^{q-1}x_n^{d_2}x_2^{e_2-(d_3-d_3')}x_3^{d_3} x_{4}^{e_3}\cdots x_{t}^{d_t'}
      = x_1^{q-1}x_n^{d_2}x_3^{d_3} x_2^{e_3} x_{4}^{d_4}\cdots x_{t}^{d_t'},
      \]
      is in ${\sf supp}(\mathcal{B})$. Continuing in this way, we eventually conclude that $M\in {\sf supp}(\mathcal{B})$, and contradiction.

      It follows that we had $M'\not\in{\sf supp}(\mathcal{B})$, and we start the iteration for $c=2$ again with $M'$. Clearly, we will get stuck at $c=2$ at most $q-1$
      as the degree of $x_2$ increases each time, hence eventually we will hit $d_2=q-1$ and proceed to the next variable.
    \end{enumerate}

    Hence, we conclude that under the assumption of the lemma and $x^{\star} = 0$, we have that the monomial $\prod\limits_{i=1}^{t} x_i^{q-1}$ must
    appear in ${\sf supp}(f)$. Define $g = f + s 1_{x=x^{\star}}$ for some $s\in\mathbb{F}_q$ such that $\prod\limits_{i=1}^{t} x_i^{q-1}\not\in {\sf supp}(g)$;
    this is clearly possible, as the support of $1_{x=x^{\star}}$ is full. We would get that the set of hyperplanes $H$ for which $g|_{H}\not\in \mathcal{F}$
    is contained in $\mathcal{H}$, as we only changed $f$ in $x^{\star}$ and any $H\not\in\mathcal{H}$ does not contain it. Hence, $\eps_k(g)\leq \eps_k(f) = 1/q$.
    We claim that $g\in\mathcal{F}$. Indeed, otherwise we would run the above argument on $g$ and conclude that $g$ must contain the monomial $\prod\limits_{i=1}^{t} x_i^{q-1}$
    in its support, which is clearly impossible. This is a contradiction, and therefore $g\in\mathcal{F}$ as desired.
\end{proof}

The following corollary is immediate:
\begin{corollary}\label{lem:relate2}
  Let $p$ be prime, $q\in\mathbb{N}$ be a power of $p$ and let $t\in\mathbb{N}$.
  Let $\mathcal{B}\subseteq \set{g\colon\mathbb{F}_q^t\to\mathbb{F}_q}$ be an affine invariant code,
  and denote $\mathcal{F} = {\sf Lift}_{k+1}(\mathcal{B})$. Suppose that $k\geq t$, and let $f\colon\mathbb{F}_q^{k+1}\to\mathbb{F}_q$
  be such that $f\not\in\mathcal{F}$. Then if $k\geq k'\geq t$, then $\eps_{k}(f)\leq q^{k-k'}(f)$.
\end{corollary}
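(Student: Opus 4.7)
The plan is to reduce to the one-step statement $\eps_{k+1}(f)\le q\,\eps_{k}(f)$ for every $k\ge t$ and then iterate, exactly mirroring how Lemma~\ref{lem:relate} followed from Lemma~\ref{lem:HSS}. (I read the right-hand side of the corollary as the obvious typo-free $\eps_{k}(f)\le q^{k-k'}\eps_{k'}(f)$.)

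For the one-step inequality, the key observation I would use is that the lifted code is monotone under restriction: if $B\subseteq\mathbb{F}_q^n$ is a $(k+1)$-flat with $f|_{B}\in{\sf Lift}_{k+1}(\mathcal{B})$, then for every $k$-flat $A\subseteq B$ we have $f|_{A}\in{\sf Lift}_{k}(\mathcal{B})$; indeed both conditions are defined by ``all $t$-flat restrictions lie in $\mathcal{B}$'', and $t$-flats of $A$ are $t$-flats of $B$. Thus, sampling a uniform $(k+1)$-flat $B$ and then a uniform $k$-flat $A\subseteq B$, the event $\{f|_A\notin\mathcal{F}\}$ contains the event $\{f|_B\notin\mathcal{F}\}\cap\{f|_A\notin{\sf Lift}_{k}(\mathcal{B})\text{ inside }B\}$, so I can write
\[
\eps_{k}(f)
\;\ge\; \Prob{B}{f|_B\notin\mathcal{F}}\cdot\cProb{B,\,A\subseteq B}{f|_B\notin\mathcal{F}}{f|_A\notin\mathcal{F}}.
\]
The first factor is $\eps_{k+1}(f)$. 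For the second factor, condition on any $B$ with $f|_B\notin\mathcal{F}$ and apply item~1 of Lemma~\ref{lem:HSS2} (with the ambient $(k+1)$-dimensional flat $B$ playing the role of $\mathbb{F}_q^{k+1}$) to conclude that a uniformly random hyperplane $A$ of $B$ satisfies $f|_A\notin\mathcal{F}$ with probability at least $1/q$. This yields $\eps_{k}(f)\ge \eps_{k+1}(f)/q$, i.e.\ $\eps_{k+1}(f)\le q\,\eps_{k}(f)$.

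Iterating the one-step inequality $k-k'$ times gives $\eps_{k}(f)\le q^{k-k'}\eps_{k'}(f)$, as claimed. The only non-routine point is verifying the monotonicity of $\mathcal{F}$ under restriction, which is immediate from the definition of ${\sf Lift}_n(\mathcal{B})$; everything else is a verbatim transcription of the proof of Lemma~\ref{lem:relate} with Lemma~\ref{lem:HSS2} substituted for Lemma~\ref{lem:HSS}. I do not anticipate a genuine obstacle here.
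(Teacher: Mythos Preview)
Your proposal is correct and matches the paper's intended argument: the paper states the corollary as ``immediate,'' meaning precisely the transcription of the proof of Lemma~\ref{lem:relate} with Lemma~\ref{lem:HSS2} in place of Lemma~\ref{lem:HSS}, followed by iteration. Your reading of the typo in the right-hand side is also correct.
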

\subsection{Proof of Theorem~\ref{thm:main2}}
In this section, we explain how to adapt the argument in Section~\ref{sec:pf_of_main} to prove Theorem~\ref{thm:main2}. First, the set $S$ in this context
is defined to be
\[
S = \sett{A}{{\sf dim}(A) = t, f|_{A}\not\in\mathcal{B}}.
\]
For the argument there we need $t$ to be a sufficiently large constant, say larger than $M$, and we claim we may indeed assume that. Indeed, otherwise we may look
at the $t+M$ flat tester and get that $\mu(S\uparrow^{t+M})\leq q^M\mu(S)$ is the rejection probability (where we used Corollary~\ref{lem:relate2}). We then look
at the problem as trying to understand the lifted code of $\mathcal{B} = {\sf Lift}_{t+M}(\mathcal{B})$, which is also affine invariant, and we now have that the new
$t$ is large enough. We henceforth assume that $t$ is large enough to begin with.

Claims~\ref{claim:reduce_to_expansion},~\ref{claim:pseudo_zoom_out} remain unchanged except that we appeal to Lemma~\ref{lem:HSS2} instead
of Lemma~\ref{lem:relate}. The proof of Claim~\ref{claim:pseduo_zoom_in_linear} also remains unchanged, except that in the end we appeal again to Lemma~\ref{lem:HSS2} instead of Lemma~\ref{lem:relate}. This establishes Proposition~\ref{prop:find_zoom_in}
in this case.

The discussion before Claim~\ref{claim:is_empty} and the claim itself continue to hold as is in this case, and we explain the slight adaptation to the rest of
the argument in Section~\ref{sec:correct_val}.
\begin{proposition}\label{prop:fix_x_star_2}
  There exists $c\in\mathbb{F}_q$ such that changing the value of $f(x^{\star})$ to $c$, we have that
  \[
  \cProb{A''\text{ $t$ flat}}{x^{\star}\in A''}{{\sf deg}(f|_{A''})}\geq \frac{1}{2q}.
  \]
\end{proposition}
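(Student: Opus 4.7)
The plan is to mirror the proof of Proposition~\ref{prop:fix_x_star} almost verbatim, with the key substitution of Lemma~\ref{lem:HSS2} in place of Lemma~\ref{lem:HSS}, and with ``membership in $\mathcal{B}$'' playing the role of ``degree at most $d$''. We interpret the displayed probability as the probability that $f|_{A''} \in \mathcal{B}$ (the natural analog of the Reed--Muller case). As in Section~\ref{sec:correct_val}, the analog of Claim~\ref{claim:is_empty} for affine invariant codes gives us a fixed $(t+100)$-flat $A$ containing $x^\star$ with the property that every $t$-flat $B \subseteq A$ with $f|_B \notin \mathcal{B}$ necessarily contains $x^\star$.

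The heart of the argument is then as follows. Fix any $(t+1)$-flat $A' \subseteq A$ containing $x^\star$ and set $g = f|_{A'}$. If $g \in {\sf Lift}_{t+1}(\mathcal{B})$ we are already done for this $A'$; otherwise Lemma~\ref{lem:HSS2} gives $\eps_{t}(g) \geq 1/q$, but every erroneous $t$-flat $B \subseteq A'$ must contain $x^\star$, and those form exactly a $1/q$ fraction of the $t$-flats in $A'$. Hence $\eps_t(g) = 1/q$ and the set of erroneous hyperplanes in $A'$ is precisely $\{B \subseteq A' : x^\star \in B\}$. The hypothesis of the second bullet of Lemma~\ref{lem:HSS2} is therefore met, and we conclude that there exists $g' \in {\sf Lift}_{t+1}(\mathcal{B})$ agreeing with $g$ off $x^\star$, i.e.\ some $c \in \mathbb{F}_q$ such that replacing $f(x^\star)$ by $c$ makes $f|_{A'} \in {\sf Lift}_{t+1}(\mathcal{B})$. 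In particular every $t$-flat $A'' \subseteq A'$ through $x^\star$ then satisfies $f|_{A''} \in \mathcal{B}$.

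To finish, we sample a random $(t+100)$-flat $A$ through $x^\star$ and then a random $t$-flat $A'' \subseteq A$ through $x^\star$. With probability at least $1/2$ over $A$, the conclusion of the analog of Claim~\ref{claim:is_empty} holds, and then the argument above shows that for any $(t+1)$-flat $A' \subseteq A$ containing $A''$ there is a value of $f(x^\star)$ that makes $f|_{A''} \in \mathcal{B}$. Since there are only $q$ possible values for $f(x^\star)$, by a pigeonhole (plurality) argument some single choice $c \in \mathbb{F}_q$ works for at least a $\frac{1}{q} \cdot \frac{1}{2} = \frac{1}{2q}$ fraction of the pairs $(A, A'')$, which (averaging out $A$) yields the required conditional probability.

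The only genuinely new step is the verification that the erroneous hyperplanes in $A'$ form \emph{exactly} the pencil $\{B : x^\star \in B\}$, so that bullet (2) of Lemma~\ref{lem:HSS2} can be invoked. One direction (erroneous hyperplanes contain $x^\star$) comes from the analog of Claim~\ref{claim:is_empty}; the other (all hyperplanes through $x^\star$ are erroneous) comes from the tightness $\eps_t(g) = 1/q$ together with the counting observation that hyperplanes through $x^\star$ already saturate this $1/q$ fraction. This matching of upper and lower bounds is the place where Lemma~\ref{lem:HSS2}'s refinement does all the work, and it is what allows the Reed--Muller correction argument to transport without change to the affine invariant setting.
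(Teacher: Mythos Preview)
Your proof is correct and follows essentially the same approach as the paper's: both arguments use Claim~\ref{claim:is_empty} to confine the erroneous $t$-flats in $A'$ to those through $x^{\star}$, combine this with the first bullet of Lemma~\ref{lem:HSS2} to force $\eps_t(g)=1/q$ with $\mathcal{H}=\{B:x^{\star}\in B\}$, and then invoke the second bullet of Lemma~\ref{lem:HSS2} to produce the corrected value at $x^{\star}$, finishing with the same plurality-vote averaging. The only cosmetic difference is that the paper phrases the key step as a contradiction (assuming no correcting value exists and deducing $\eps_t(g)>1/q$), while you verify the hypotheses of Lemma~\ref{lem:HSS2}(2) directly; these are logically equivalent.
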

\begin{proof}
  Take any $(t+1)$-flat $A'\subseteq A$ containing $x^{\star}$, and define $g = f|_{A'}$. We claim that we may change $f$ at $x^{\star}$ and have
  that $g\in\mathcal{F}$. Otherwise, from the second item in Lemma~\ref{lem:HSS2}, the fraction of $t$-flats $B\subseteq A'$ such that $g'|_{B}\not\in\mathcal{B}$ is
  larger than $1/q$. As the fraction of $B$'s that contain $x^{\star}$ is exactly $1/q$, it follows that there is $B\subseteq A'$ not containing $x^{\star}$ such that
  $g|_{B} \not\in\mathcal{B}$. But for such $B$'s we have $g|_{B} = f|_{B}$, and contradiction.

  Sampling $A$ a $(t+100)$ flat containing $x^{\star}$ randomly and then a $t$-flat $A''\subseteq A$ containing $x^{\star}$, we get that with probability at least $1/2$ we may change $f(x^{\star})$ and have $f|_{A''}\in\mathcal{B}$. Thus, taking the majority vote we may choose $f(x^{\star})$ that appeases at least $\frac{1}{2q}$ of the $t$-flats
  containing $x^{\star}$.
\end{proof}

Given Proposition~\ref{prop:fix_x_star_2}, Section~\ref{sec:iterate} goes through as well, completing the proof of Theorem~\ref{thm:main2}.
\qed

\section{Discussion and open questions}
Our work explores a potential connection between testing questions in codes and expansion in the underlying test graph,
using the idea that the error set exhibits some non sharp-threshold type behaviour. This connection highlights several
problems that we think may be of interest.
\begin{enumerate}
  \item Stability results for Kruskal-Katona type theorems. What can we say about the structure of small sets $S\subseteq V_q(k,\ell)$ for which $\mu(S\uparrow)\leq q\mu(S)$?
  Using our techniques, it follows that such sets must be correlated with a zoom-in set or a zoom-in with respect to the linear part (which we are able to eliminate in our case),
  but it would be interesting to get a more thorough understanding of this problem. Similarly, it would be interesting to understand the structure of large sets with non-perfect shadow, i.e. $\mu(S\uparrow)\leq 1-\delta$.
  \item Beyond lifted codes. Can we use expansion type results on structures such as the Grassmann graph (but maybe more) to prove more testing results on other codes?
  As we have seen, the proof goes through relatively easily for the class of lifted affine invariant codes (improving the dependency on the field size $q$ over the result of~\cite{HRZS}), and we suspect our method should apply in other settings as well.
  \item Characterization of near degree $1$ functions on the Affine Grassmann graph. As shown in Lemma~\ref{lem:high_wt_lvl1}, small sets $S$ for which $1-\Phi(S)\geq 1/q$
  have almost all of their Fourier degree on the first level. In this case, we establish a relatively weak structural result, and it is tempting to ask whether a more detailed
  structural result holds in this case similarly to the classical FKN theorem from the Boolean cube~\cite{FKN}.
  \item Beyond the $99\%$ regime. Can the approach suggested herein, or similar ones, be applied to study the testing question for the Reed-Muller code wherein the success
  probability of the tester is only guaranteed to be at least $1/q + \delta$, i.e. the notorious $1\%$ regime?
\end{enumerate}

\bibliographystyle{plain}
\bibliography{ref}
\appendix
\section*{Appendix}
This section is devoted to the proof of Theorem~\ref{thm:expansion}. Our approach closely follows the approach in~\cite{KMS2},
however as we are only concerned with the special case associated with zoom-ins/ zoom-outs of dimension/ co-dimension $1$, our analysis
is considerably simpler. Roughly speaking, our prof consists of the following three components:
\begin{enumerate}
  \item First, we define a Cayley graph that closely resembles the affine Grassmann graph, and show that studying expansion over the two is roughly
  equivalent (up to some loss in the parameters).
  \item Second, we show that for the expansion parameters in question, the problem reduces to studying the structure of functions that have almost all
  of their Fourier mass on the first level component in the natural degree decomposition.
  \item Finally, we perform a $4$th-moment vs $2$nd-moment type analysis and deduce the structural result.
\end{enumerate}
Throughout this section, we think of $W$ as a linear space over $\mathbb{F}_q$ with dimension $k$; without loss of generality
$W = \mathbb{F}_q^k$. We consider the affine Grassmann graph over $\ell$-flats.
\section{The Cayley graph construction}
Consider the edge-weighted graph $H = (V,E)$ defined as follows. The set of vertices $V$ consist of tuples
$(s,x_1,\ldots,x_{\ell})$ where $s,x_1,\ldots,x_{\ell}\in \mathbb{F}_q^k$.
The edge weights are described according to the following randomized process; to sample a neighbour of $(s,x_1,\ldots,x_{\ell})$:
\begin{enumerate}
  \item sample $y\in \mathbb{F}_q^k$ uniformly;
  \item sample $b_0,b_1,\ldots,b_{\ell}\in\mathbb{F}_q$ uniformly;
  \item output $(s+b_0y,x_1+b_1 y\ldots, x_{\ell} + b_{\ell} y)$.
\end{enumerate}
Given a set of vertices in the affine Grassmann graph $S\subseteq V(\mathbb{F}_q^k, \ell)$, we associate with it the set $S^{\star}$ in the Cayley graph
defined as
\[
S^{\star} = \sett{(s,x_1,\ldots,x_{\ell})}{s+{\sf span}(x_1,\ldots,x_{\ell})\in S}.
\]

We establish some properties of $S$ and $S^{\star}$. First, we show that the non-expansion of $S^{\star}$ may be lower bounded
by the non-expansion of $S$ (in fact the two are close, but we only need this direction).
\begin{claim}\label{claim:exp_lb}
  $1-\Phi(S^{\star})\geq 1-\Phi(S) - q^{-\ell}$.
\end{claim}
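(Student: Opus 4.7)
The plan is to directly compare the Cayley step with the Grassmann walk step starting from a common affine $\ell$-flat. Fix a tuple $(s, x_1, \ldots, x_\ell) \in S^\star$ with associated flat $A = s + \text{span}(x_1, \ldots, x_\ell) \in S$, and let $L = \text{span}(x_1, \ldots, x_\ell)$. The Cayley neighbour is $(s + b_0 y, x_1 + b_1 y, \ldots, x_\ell + b_\ell y)$ for uniformly random $y \in \mathbb{F}_q^k$ and $b_0, \ldots, b_\ell \in \mathbb{F}_q$.

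First I would establish the following structural fact: conditional on $y \notin L$ (which occurs with probability $1 - q^{\ell - k}$), the vectors $x_i + b_i y$ are automatically linearly independent, so the output tuple corresponds to a well-defined $\ell$-flat $A_2$, and $A_2$ is a hyperplane of $B := A + \text{span}(y)$. A short computation shows that the linear part of $A_2$ never contains the direction $y$. Letting $y$ range uniformly over $\mathbb{F}_q^k \setminus L$, one checks that the resulting $(\ell+1)$-flat $B$ is uniformly distributed over $(\ell+1)$-flats containing $A$, matching the Grassmann distribution of $B$; and given $B$, the resulting $A_2$ is uniformly distributed over the $q^{\ell+1}$ hyperplanes of $B$ whose linear parts do not contain the distinguished direction $y$.

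Next I would couple the Cayley and Grassmann walks through a shared choice of $(y, B)$. Under this coupling, the Grassmann walk samples $A_2$ uniformly among all hyperplanes of $B$, while the Cayley walk samples only among the ``good'' hyperplanes described above. The difference between $\Pr_{\mathrm{Cayley}}[A_2 \in S]$ and $\Pr_{\mathrm{Grassmann}}[A_2 \in S]$ can then be controlled by the probability mass of the ``bad'' hyperplanes (those whose linear parts contain $y$) that lie in $S$. Finally, I would account for the small-probability self-loop event $y \in L$ to get a non-negative correction (the walk then either returns to $A \in S$ or leaves $S^\star$ into a ``trash'' tuple), yielding the inequality $1 - \Phi(S^\star) \geq 1 - \Phi(S) - q^{-\ell}$.

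The main obstacle is obtaining the advertised $q^{-\ell}$ loss rather than the cruder $\Theta(1/q)$ that a naive total-variation bound between ``all hyperplanes'' and ``good hyperplanes'' of $B$ produces. To bridge this gap, I would exploit averaging over the starting point in $S^\star$: for a fixed flat $A' \in S$, the set of $(A, B)$ pairs under which $A'$ is realized as a ``bad'' hyperplane of $B$ is combinatorially constrained, and this constraint---combined with the fact that every $A \in S$ has the same number of preimages in $S^\star$---is what amortizes the loss down to the claimed $q^{-\ell}$ bound, paralleling the corresponding argument for the (non-affine) Grassmann graph in~\cite{KMS2}.
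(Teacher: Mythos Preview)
Your setup is correct: conditional on $y\notin L$, the output is a valid $\ell$-flat $A_2$ that is uniform over the $q^{\ell+1}$ hyperplanes of $B=A+\mathrm{span}(y)$ whose linear part avoids $y$, and you correctly identify that a naive comparison with the uniform distribution over all $N=q(q^{\ell+1}-1)/(q-1)$ hyperplanes of $B$ incurs a $\Theta(1/q)$ loss.

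Your proposed fix---averaging over the starting flat $A\in S$ and double-counting ``bad'' incidences---can be made to work, but you leave the crucial step unspecified, and the intuition you give is slightly off. The double-counting you hint at yields $\mathbb{E}[D]\le \tfrac{N}{q}(1-\Phi(S))$, where $D$ is the number of bad hyperplanes of $B$ lying in $S$; this is still order $1/q$, not $q^{-\ell}$. The $q^{-\ell}$ bound only emerges after you combine this with the \emph{positive} discrepancy $G\bigl(\tfrac{1}{q^{\ell+1}}-\tfrac{1}{N}\bigr)$ that you are implicitly discarding. So the approach is recoverable, but your sketch does not capture the actual mechanism.

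The paper's route is much simpler and sidesteps all of this: rather than conditioning on a specific $y$ and later averaging over $A\in S$, it averages over $y$ \emph{for each fixed starting flat $A$}. After this averaging the Cayley output distribution on hyperplanes of $B$ becomes invariant under the symmetries of the pair $(A,B)$, and hence decomposes into the same three types as the Grassmann step---self-loop, parallel to $A$, and intersecting $A$ in an $(\ell-1)$-flat---with the \emph{same} uniform conditional distribution within each type. The whole comparison then reduces to three scalar ratios of type-probabilities, each at least $1-q^{-\ell}$, giving $1-\Phi(S^\star)\ge(1-q^{-\ell})(1-\Phi(S))$ pointwise in $A$. No averaging over $S$ is needed.
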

\begin{proof}
  Recall that $1-\Phi(S^{\star})$ is the probability that starting from a random vertex in $S^{\star}$ and taking a step, we stay in the set $S^{\star}$.
  Denote by $v = (s,x_1,\ldots,x_{\ell})$ the starting point of the walk, by $y,b_0,\ldots,b_{\ell}$ the parameters that define the step of the walk, and
  by $u$ the endpoint of the random walk. There are a few cases:
  \begin{enumerate}
    \item $b_0 = b_1=\ldots=b_{\ell} = 0$, which happens with probability $q^{-(\ell+1)}$ and corresponds to a self-loop.
    \item $b_0\neq 0$, $b_1=\ldots = b_{\ell} = 0$, which corresponds to the case the hyperplanes defined by $v,u$ are parallel. This happens with
    probability $\frac{q-1}{q^{\ell+1}}$.
    \item ${\sf span}(x_1 + b_1y,\ldots,x_{\ell}+b_{\ell}y)$ has dimension less than $\ell$, which happens with probability at most $q^{\ell-k}$.
    \item Otherwise, $u$ is a random affine space of dimension $\ell$ that intersects $v$ in size $q^{\ell-1}$. This happens with probability
    $(1-q^{-\ell} - q^{\ell-k})$.
  \end{enumerate}
  We note that in the case of the 3rd item, we always escape the set and hence this doesn't contribute to $1-\Phi(S^{\star})$. We compare
  the rest of these probabilities to the corresponding walk on the affine Grassmann graph. Starting at an affine space $V$ of dimension $\ell$,
  going to $K\supseteq U$ of dimension $\ell+1$ and then to a random $U\subseteq K$ of dimension $\ell$, we have:
  \begin{enumerate}
    \item The probability that $U = V$ is $\frac{1}{q^{\ell+1} - 1}\frac{q-1}{q}$.
    \item The probability that $V$ and $U$ are parallel is $\frac{q-1}{\frac{q^{\ell+1} - 1}{q-1} q} = \frac{(q-1)^2}{q(q^{\ell+1} - 1)}$.
    \item Otherwise, $U$ is random affine space of dimension $\ell$ that intersects $V$ in size $q^{\ell-1}$. The probability here is
    $1-\frac{q-1}{q^{\ell+1} -1}$.
  \end{enumerate}
  Looking at the ratios between the probability of a case in the Cayley graph and the probability of a case in the affine Grassmann grah,
  the first two are at least $1$, whereas the last one is at least $1-q^{-\ell}$. Thus,
  \[
  1-\Phi(S^{\star})\geq (1-q^{-\ell})(1-\Phi(S))\geq 1-\Phi(S) - q^{-\ell}.\qedhere
  \]
\end{proof}

Next, we consider the analogous notions of zoom-ins for sets in the Cayley graph.
\begin{definition}
Let $T$ be a set in the Cayley graph.
\begin{enumerate}
  \item For $z\in\mathbb{F}_p^k$, the zoom-in of $T$ with respect to $z$ is the set
  \[
  \sett{(s,x_1,\ldots,x_{\ell})}{z\in s+{\sf span}(x_1,\ldots,x_{\ell})}.
  \]
  \item For $z\in\mathbb{F}_p^k\setminus\set{0}$, the zoom-in of $T$ with respect to $z$ on the linear part is the set
  \[
  \sett{(s,x_1,\ldots,x_{\ell})}{z\in {\sf span}(x_1,\ldots,x_{\ell})}.
  \]
  \item For an affine hyperplane $W\subseteq\mathbb{F}_p^k$, the zoom-out of $T$ with respect to $W$ is the set
  \[
  \sett{(s,x_1,\ldots,x_{\ell})}{ s + {\sf span}(x_1,\ldots,x_{\ell})\subseteq W}.
  \]
  \item
  For a hyperplane $W\subseteq\mathbb{F}_p^k$, the zoom-in of $T$ with respect to $W$ on the linear part is the set
  \[
  \sett{(s,x_1,\ldots,x_{\ell})}{ {\sf span}(x_1,\ldots,x_{\ell})\subseteq W}.
  \]
\end{enumerate}
For each one of these cases, say for zoom-ins, we say that $T$ is $\xi$-pseudo-random with respect to it if
\[
\mu(T_{z})\defeq\frac{\card{\sett{(s,x_1,\ldots,x_{\ell})\in T}{z\in s+{\sf span}(x_1,\ldots,x_{\ell})}}}{\card{\sett{(s,x_1,\ldots,x_{\ell})}{z\in s+{\sf span}(x_1,\ldots,x_{\ell})}}}\leq \xi.
\]
\end{definition}

We now show that notions of pseudo-randomness of $S$ transfer to the same notions of pseudo-randomness for $S^{\star}$.
\begin{claim}
  If $S$ is $\xi$-pseudo-random against zoom-ins, then $S^{\star}$ is $\xi$ pseudo-random with respect to zoom-ins. Same goes for zoom-outs etc.
\end{claim}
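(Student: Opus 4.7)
The plan is to reduce the pseudo-randomness of $S^\star$ in the Cayley graph to the pseudo-randomness of $S$ in the affine Grassmann graph via a uniform covering argument. The core observation I would exploit is that the map $\pi\colon(s,x_1,\ldots,x_\ell)\mapsto s+{\sf span}(x_1,\ldots,x_\ell)$, when restricted to tuples with linearly independent $x_i$'s, is a $c_\ell$-to-$1$ surjection onto $V(W,\ell)$, where $c_\ell\defeq q^\ell\cdot N_\ell$ and $N_\ell$ is the number of ordered bases of an $\ell$-dimensional linear subspace (the same count for every such subspace). Consequently $\pi$ transports the uniform measure on linearly independent tuples to the uniform measure on $V(W,\ell)$.

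For concreteness I would carry out the zoom-in case first. Let $U_z^{\sf Cay}=\sett{(s,x_1,\ldots,x_\ell)}{z\in s+{\sf span}(x_1,\ldots,x_\ell)}$ be the Cayley zoom-in set at $z$, and let $H_z\subseteq V(W,\ell)$ denote the affine Grassmann zoom-in. Since $S$ consists only of genuine $\ell$-flats, any tuple in $S^\star$ automatically has linearly independent $x_i$'s, so $\pi$ restricted to $S^\star\cap U_z^{\sf Cay}$ is a $c_\ell$-to-$1$ surjection onto $S\cap H_z$, yielding $\card{S^\star\cap U_z^{\sf Cay}}=c_\ell\cdot\card{S\cap H_z}$. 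On the denominator side, the linearly independent tuples in $U_z^{\sf Cay}$ already account for exactly $c_\ell\cdot\card{H_z}$, and any linearly dependent tuples in $U_z^{\sf Cay}$ only enlarge $\card{U_z^{\sf Cay}}$. Combining these two facts,
\[
\mu(S^\star_z)=\frac{\card{S^\star\cap U_z^{\sf Cay}}}{\card{U_z^{\sf Cay}}}\leq\frac{c_\ell\cdot\card{S\cap H_z}}{c_\ell\cdot\card{H_z}}=\mu(S_z)\leq\xi.
\]

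The remaining three cases would follow the same template, with the only change being the zoom condition matched up between the two graphs: tuples with $s+{\sf span}(x_i)\subseteq W'$ cover $\ell$-flats contained in the affine hyperplane $W'$; tuples with ${\sf span}(x_i)\subseteq W'$ cover $\ell$-flats whose linear part lies in the linear hyperplane $W'$; and tuples with $z\in{\sf span}(x_i)$ cover $\ell$-flats whose linear part contains $z$. In each instance, restricting to linearly independent tuples produces a $c_\ell$-to-$1$ cover of the corresponding zoom set in $V(W,\ell)$, so the same two-line computation gives $\mu(S^\star_\bullet)\leq\mu(S_\bullet)\leq\xi$.

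No real obstacle arises here; the whole proof is a bookkeeping exercise. The only points needing care are verifying, for each of the four zoom types, that the Cayley zoom set restricted to linearly independent tuples is exactly the $\pi$-preimage of the corresponding affine Grassmann zoom set, and that linearly dependent tuples never contribute to the numerator (since their image is not an $\ell$-flat and hence not in $S$) while they may freely enlarge the denominator; both points are immediate from the definitions.
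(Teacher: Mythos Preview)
Your proposal is correct and is essentially the same argument as the paper's, just phrased in counting rather than probabilistic language. The paper observes that a tuple with dependent $x_i$'s cannot lie in $S^\star$, and that conditioned on linear independence and on the zoom event, $\pi$ induces the uniform distribution on the corresponding affine Grassmann zoom set; this is exactly your $c_\ell$-to-$1$ covering observation, so the two proofs coincide.
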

\begin{proof}
  Sampling $v = (s,x_1,\ldots,x_{\ell})$ from the Cayley graph conditioned on it representing an affine subspace of dimension $\ell$ and containing $z$,
  the subspace it represents is distributed uniformly among all subspaces containing $z$, hence in $S_z$ with probability $\mu(S_z)\leq \xi$. If $v$ does not
  represent an affine subspace of dimension $\ell$, we clearly have $v\not\in (S^{\star})_z$. Thus,
  \[
  \mu((S^{\star})_z) = \cProb{v}{z\in v}{v\in S^{\star}\land v\text{ is dimension $\ell$}}\leq \cProb{v}{z\in v, v\text{ is dimension $\ell$}}{v\in S^{\star}} = \mu(S_z)\leq \xi.
  \qedhere
  \]
\end{proof}

As a special case of the previous claim, we get that a good zoom-in for $S^{\star}$ (i.e., one on which the measure of this set is almost $1$) is also be good for $S$.
\begin{corollary}
  Suppose that $\mu((S^{\star})_z)\geq 1-\delta$. Then $\mu(S_z)\geq 1-\delta$.
\end{corollary}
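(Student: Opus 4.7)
The plan is to read the corollary directly off the proof of the preceding claim, which in fact establishes the pointwise inequality $\mu((S^{\star})_z)\leq\mu(S_z)$ for every $z$, not just an implication between pseudo-randomness statements. The corollary is then simply this inequality read in the opposite direction: a lower bound on $\mu((S^{\star})_z)$ forces a lower bound on $\mu(S_z)$.

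Concretely, I would sample a uniform tuple $v=(s,x_1,\ldots,x_{\ell})$ conditioned on $z\in s+{\sf span}(x_1,\ldots,x_{\ell})$ and split according to whether ${\sf span}(x_1,\ldots,x_{\ell})$ has full dimension $\ell$. Write $A_v$ for the event that $v$ represents a genuine $\ell$-flat. When $A_v$ fails the tuple is not in $S^{\star}$ at all, so in particular not in $(S^{\star})_z$; when $A_v$ holds, the affine subspace $s+{\sf span}(x_1,\ldots,x_{\ell})$ is, by symmetry of the sampling, uniformly distributed among all $\ell$-flats through $z$, so the conditional probability of landing in $S^{\star}$ equals $\mu(S_z)$. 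Combining these two observations yields
\[
\mu((S^{\star})_z)=\Pr[A_v\mid z\in s+{\sf span}(x_1,\ldots,x_{\ell})]\cdot\mu(S_z)\leq\mu(S_z),
\]
from which $\mu(S_z)\geq\mu((S^{\star})_z)\geq 1-\delta$ is immediate.

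There is no real obstacle here; this is a notational corollary. The only point worth sanity-checking is the direction of the inequality: tuples that fail to span an $\ell$-dimensional subspace are discarded from the numerator defining $\mu((S^{\star})_z)$ while still contributing to its denominator, so any mass loss happens on the Cayley side, which is precisely what makes the bound go the correct way for our application.
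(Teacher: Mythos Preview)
Your proposal is correct and matches the paper's approach: the paper simply notes that the corollary is a special case of the preceding claim, whose proof already establishes the pointwise inequality $\mu((S^{\star})_z)\leq\mu(S_z)$. Your write-up just makes that implicit step explicit.
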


\section{Decompositions}
\subsection{The Fourier decomposition}
Let $F = 1_{S^{\star}}$. We shall now think of $F\colon \mathbb{F}_{q}^k\to\{0,1\}$ as a function, and develop it according to the basis of characters.
In this context, writing $q = p^r$ where $p$ is prime, we consider the trace map ${\sf Tr}\colon \mathbb{F}_q\to\mathbb{F}_p$ defined as
${\sf Tr}(a) = \sum\limits_{i=1}^{r-1} a^{p^i}$. A character of $\mathbb{F}_q$ is then defined as $\chi_a(x) = \omega^{{\sf Tr}(ax)}$
for $a\in\mathbb{F}_q$, where $\omega$ is the $p$th root of unity. A character of $\mathbb{F}_q^k$ is indexed by $\vec{a}\in \mathbb{F}_q^k$
and is defined as $\chi_{\vec{a}}(x) = \prod\limits \chi_{a_i}(x_i) = \omega^{\sum\limits{i=1}^{k}{\sf Tr}(a_ix_i)}$.
Finally, a character of $((\mathbb{F}_q)^k)^{\ell+1}$ is indexed by $\alpha = (\alpha_0,\ldots,\alpha_{\ell})\in(\mathbb{F}_q^{k})^{\ell+1}$
and is defined as
\[
\chi_{\alpha}(s,x_1,\ldots,x_{\ell}) = \chi_{\alpha_0}(s) \prod\limits_{i=1}^{\ell}\chi_{\alpha_i}(x_i).
\]
We will use the abbreviation $x = (x_1,\ldots,x_{\ell})$, and then write
\[
F(s,x) = \sum\limits_{\alpha} \widehat{F}(\alpha) \chi_{\alpha}(s,x), \qquad\text{where }
\widehat{F}(\alpha) = \Expect{(s,x)}{F(s,x)\overline{\chi_\alpha(s,x)}}.
\]
\begin{claim}\label{claim:sym}
  Suppose we have $\alpha,\beta$ such that $\alpha_0 = \beta_0$ and
  ${\sf span}(\alpha_0,\alpha_1,\ldots,\alpha_{\ell}) = {\sf span}(\beta_0,\beta_1,\ldots,\beta_{\ell})$.
  Then $\widehat{F}(\alpha) = \widehat{F}(\beta)$.
\end{claim}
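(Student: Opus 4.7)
The plan is to exploit two natural symmetries of the function $F = 1_{S^{\star}}$ coming from the fact that $F(s, x_1, \ldots, x_\ell)$ depends only on the affine flat $s + {\sf span}(x_1, \ldots, x_\ell)$ (and vanishes when the $x_i$'s are linearly dependent, since then the flat is not $\ell$-dimensional and hence not in $S$). Specifically, I would verify that $F$ is invariant under:
\begin{enumerate}
\item $(s, x) \mapsto (s, xM)$ for any $M \in GL_\ell(\mathbb{F}_q)$ (where $(xM)_i = \sum_j M_{ji} x_j$): the column span is preserved under right multiplication by an invertible matrix, and on linearly dependent inputs $F$ is zero regardless.
\item $(s, x) \mapsto (s + \sum_i c_i x_i,\, x)$ for any $c \in \mathbb{F}_q^\ell$: translating $s$ by an element of ${\sf span}(x)$ leaves the affine flat unchanged.
\end{enumerate}

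Next I would perform a change of variables in the Fourier expectation $\widehat{F}(\alpha) = \mathbb{E}_{s,x}[F(s,x)\overline{\chi_\alpha(s,x)}]$ under each symmetry. Using ${\sf Tr}$-linearity and the identity $\alpha_0 \cdot \sum_i c_i x_i = \sum_i c_i(\alpha_0 \cdot x_i)$, the translation symmetry shows that $\widehat{F}(\alpha_0, \alpha_1, \ldots, \alpha_\ell) = \widehat{F}(\alpha_0, \alpha_1 + c_1 \alpha_0, \ldots, \alpha_\ell + c_\ell \alpha_0)$ for every $c \in \mathbb{F}_q^\ell$. Similarly, a direct computation on the character shows that the $GL_\ell$ symmetry yields $\widehat{F}(\alpha_0, \alpha_1, \ldots, \alpha_\ell) = \widehat{F}(\alpha_0, \alpha'_1, \ldots, \alpha'_\ell)$ where $\alpha'_j = \sum_i M_{ji} \alpha_i$ for some (equivalently, any) $M \in GL_\ell(\mathbb{F}_q)$.

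With these two transitivity statements in hand, I would prove the claim in two steps: first adjust $\alpha$ by a $GL_\ell$-action so that the images of $\alpha_1, \ldots, \alpha_\ell$ in the quotient $\mathbb{F}_q^k / {\sf span}(\alpha_0)$ coincide with the images of $\beta_1, \ldots, \beta_\ell$; then absorb the remaining difference (which now lies in ${\sf span}(\alpha_0)$ coordinate by coordinate) using the translation symmetry. The first step is possible because two $k \times \ell$ matrices over $\mathbb{F}_q$ lie in the same $GL_\ell$-orbit under right multiplication iff they have the same column space, and here the hypothesis ${\sf span}(\alpha_0, \alpha) = {\sf span}(\alpha_0, \beta)$ guarantees that $(\bar\alpha_i)$ and $(\bar\beta_i)$ span the same subspace of the quotient; the second step is immediate since $\alpha'_i - \beta_i \in {\sf span}(\alpha_0)$ after step one.

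The only mildly delicate point I expect is the precise index-tracking in how $\chi_\alpha$ transforms under the two symmetries, in particular handling the fact that ${\sf Tr}$ is $\mathbb{F}_p$-linear rather than $\mathbb{F}_q$-linear — but because the scalars $c_i$ get contracted via $c_i(\alpha_0 \cdot x_i) = (c_i \alpha_0) \cdot x_i$ \emph{inside} a single ${\sf Tr}$, this ultimately causes no trouble and the computation goes through cleanly. The combinatorial fact that the $GL_\ell$-orbit on $k \times \ell$ matrices is classified by the column space is standard (column echelon form), and I would simply cite it.
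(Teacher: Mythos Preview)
Your proposal is correct and is essentially a fully fleshed-out version of the paper's one-line proof: the paper simply states that $F$ is invariant under $(s,x_1,\ldots,x_{\ell})\to (s+z_0, z_1,\ldots,z_{\ell})$ with $z_1,\ldots,z_{\ell}$ an invertible linear combination of the $x_i$'s and $z_0$ a linear combination of the $x_i$'s, and leaves the Fourier-side consequence and the transitivity argument to the reader. Your two symmetries are exactly these, and your two-step reduction (first $GL_\ell$ to match modulo ${\sf span}(\alpha_0)$, then translation to absorb the remainder) is the natural way to make that implication explicit.
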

\begin{proof}
  Follows as $f$ is invariant under $(s,x_1,\ldots,x_{\ell})\rightarrow (s+z_0, z_1,\ldots,z_{\ell})$
  where $z_1,\ldots,z_{\ell}$ are linearly independent linear combinations of $x_1,\ldots,x_{\ell}$, and $z_0$ is a linear combination of $x_1,\ldots,x_{\ell}$.
\end{proof}

Next, we calculate the eigenvalues of the characters with respect to the random walk on the Cayley graph.
\begin{claim}
  Let $\alpha = (\alpha_0,\alpha_1,\ldots,\alpha_{\ell})$ be such that ${\sf dim}({\sf span}(\alpha_0,\ldots,\alpha_{\ell})) = d$. Then
  $\chi_{\alpha}$ is an eigenfunction with respect to the random walk on the Cayley graph with eigenvalue $q^{-d}$.
\end{claim}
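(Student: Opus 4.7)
The plan is a direct character computation: apply the random walk operator to $\chi_\alpha$, use the multiplicativity of characters to factor over coordinates, and then use orthogonality of characters of $\mathbb{F}_q$ to evaluate the average over the random parameters $y, b_0, \ldots, b_\ell$.

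First I would write out $\E[\chi_\alpha(s+b_0 y, x_1+b_1y,\ldots,x_\ell+b_\ell y)]$ and, using multiplicativity, pull out the factor $\chi_\alpha(s,x_1,\ldots,x_\ell)$. What remains is
\[
\E_{y,b_0,\ldots,b_\ell}\!\left[\chi_{\alpha_0}(b_0 y)\prod_{i=1}^\ell \chi_{\alpha_i}(b_i y)\right] = \E_{y}\E_{b_0,\ldots,b_\ell}\!\left[\omega^{\mathrm{Tr}\left(\sum_{i=0}^{\ell} b_i\langle \alpha_i, y\rangle\right)}\right],
\]
which, by independence of the $b_i$'s, equals $\E_y \prod_{i=0}^\ell \E_{b_i}[\omega^{\mathrm{Tr}(b_i \langle \alpha_i,y\rangle)}]$.

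Second, I would evaluate each inner factor via the standard orthogonality of characters of $\mathbb{F}_q$. For fixed $c=\langle\alpha_i,y\rangle\in\mathbb{F}_q$, if $c=0$ the expectation is $1$; otherwise $b_i c$ is uniform over $\mathbb{F}_q$, and $\sum_{z\in\mathbb{F}_q}\omega^{\mathrm{Tr}(z)}=0$ because $\mathrm{Tr}\colon\mathbb{F}_q\to\mathbb{F}_p$ is surjective and $\omega$ is a nontrivial $p$-th root of unity. Hence each factor is $\mathbf{1}[\langle\alpha_i,y\rangle=0]$, and the product over $i$ is $\mathbf{1}[y\perp \mathrm{span}(\alpha_0,\ldots,\alpha_\ell)]$.

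Third, averaging over uniform $y\in\mathbb{F}_q^k$ gives the fractional size of the annihilator of a $d$-dimensional subspace, namely $q^{k-d}/q^k = q^{-d}$. Combining, the walk operator applied to $\chi_\alpha$ returns $q^{-d}\chi_\alpha$, as claimed. There is no real obstacle here; the whole argument is a one-line character computation. The only mild subtlety is the interplay between scalar multiplication by $b_i\in\mathbb{F}_q$ and the absolute trace $\mathrm{Tr}\colon\mathbb{F}_q\to\mathbb{F}_p$, which is why one must keep $b_i\langle\alpha_i,y\rangle$ inside the trace rather than trying to pull $b_i$ outside as a scalar on $\omega$.
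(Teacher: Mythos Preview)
Your proof is correct and is essentially the same direct character computation as the paper's: both pull out the factor $\chi_\alpha(s,x)$ and reduce to evaluating $\Expect{y,b_0,\ldots,b_\ell}{\chi_{\sum_i b_i\alpha_i}(y)}$. The only cosmetic difference is the order of averaging---the paper integrates over $y$ first (getting the indicator $\mathbf{1}[\sum_i b_i\alpha_i=0]$ and then the probability $q^{-d}$ over the $b_i$'s), whereas you integrate over the $b_i$'s first (getting $\mathbf{1}[y\perp\mathrm{span}(\alpha_0,\ldots,\alpha_\ell)]$ and then $q^{-d}$ over $y$); either way the answer is the same.
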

\begin{proof}
  The eigenvalue is easily seen to be equal to $\Expect{b_0,b_1,\ldots,b_{\ell}, y}{\chi_{\sum\limits_{i=0}^{\ell} b_i\alpha_i}(y)}$.
  Note that if the dimension of ${\sf span}(\alpha_0,\ldots\alpha_{\ell})$ is $d$, then the probability that $\sum\limits_{i=0}^{\ell} b_i\alpha_i = 0$
  is $q^{-d}$. In that case, the expectation is $1$, and otherwise it is $0$.
\end{proof}

\subsection{The level decomposition}
For $i=0,1,\ldots, \ell$, define
\[
F_{{\sf lin}, i} (s,x) =
\sum\limits_{\substack{\alpha: \alpha_0\in{\sf span}(\alpha_1,\ldots,\alpha_{\ell})  \\ {\sf dim}({\sf span}(\alpha_1,\ldots,\alpha_{\ell})) = i}} \widehat{F}(\alpha)\chi_{\alpha}(s,x),
\qquad
F_{{\sf aff}, i} (s,x) =
\sum\limits_{\substack{\alpha: \alpha_0\not\in  {\sf span}(\alpha_1,\ldots,\alpha_{\ell})\\ {\sf dim}({\sf span}(\alpha_1,\ldots,\alpha_{\ell})) = i}} \widehat{F}(\alpha)\chi_{\alpha}(s,x),
\]
and for simplicity $F_i(s,x) = F_{{\sf lin}, i} (s,x) + F_{{\sf aff}, i-1} (s,x)$. Clearly
\[
F(s,x) = \sum\limits_{i=0}^{\ell} F_i(s,x).
\]

Denoting by $H$ the normalized adjacency operator of the Cayley graph, we have that
$H F(s,x) = \sum\limits_{i=0}^{\ell} q^{-i} F_i$, and so
\[
1-\Phi(S^{\star}) = \frac{1}{\mu(S^{\star})}\inner{F}{H F} =
\frac{1}{\mu(S^{\star})}
\left(\norm{F_{0}}_2^2 + \sum\limits_{i=0}^{\ell} q^{-i}\norm{F_{i}}_2^2 + q^{-\ell-1} \norm{F_{{\sf aff},\ell}}_2^2\right).
\]
As $1-\Phi(S^{\star})\geq \frac{1}{q} - \frac{1}{q^{\ell}}$ from Claim~\ref{claim:exp_lb}, $\norm{F_{0}}_2 = \mu(S^{\star})$
and $\sum\limits_{i\geq 2}\norm{F_{i}}_2^2  = \mu(S^{\star}) - \mu(S^{\star})^2 - \norm{F_1}_2^2$
by Parseval, we get that
\[
\frac{1}{q} - \frac{1}{q^{\ell}}
\leq \frac{1}{\mu(S^{\star})}\left(\mu(S^{\star})^2 + \frac{1}{q}\norm{F_{1}}_2^2 + \frac{1}{q^2}(\mu(S^{\star}) - \mu(S^{\star})^2 - \norm{F_{1}}_2^2)\right).
\]
Rearranging we get
\[
\frac{1}{q} - \frac{1}{q^2} - \frac{1}{q^{\ell}}
\leq \frac{1}{\mu(S^{\star})}\left(\frac{1}{q} - \frac{1}{q^2}\right)\norm{F_{1}}_2^2
+\mu(S^{\star}),
\]
and so
\[
\frac{\norm{F_{1}}_2^2}{\mu(S^{\star})}
\geq 1-q^{2-\ell} - q^2\xi.
\]
We summarize this discussion with the following lemma.
\begin{lemma}\label{lem:high_wt_lvl1}
  Let $S$ be as in Theorem~\ref{thm:expansion}, and let $S^{\star}$ be the corresponding set in the Cayley graph.
  Then letting $F = 1_{S^{\star}}$ and looking at the level decomposition  above, we have
  \[
  \frac{\norm{F_{1}}_2^2}{\mu(S^{\star})}\geq 1-q^{2-\ell} - q^2\xi.
  \]
\end{lemma}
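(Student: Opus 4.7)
The plan is to prove Lemma~\ref{lem:high_wt_lvl1} by running the spectral calculation that is already outlined in the paragraphs preceding the statement, namely, combining the transferred expansion bound on $S^{\star}$ with the explicit eigenvalue decomposition for the Cayley-graph walk and Parseval.

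First, I would invoke Claim~\ref{claim:exp_lb} together with the hypothesis $1-\Phi(S) \geq 1/q$ from Theorem~\ref{thm:expansion} to conclude that
\[
1 - \Phi(S^{\star}) \;\geq\; \frac{1}{q} - q^{-\ell}.
\]
Rewriting this via $1 - \Phi(S^{\star}) = \inner{F}{HF}/\mu(S^{\star})$ gives a lower bound on $\inner{F}{HF}$ that we want to match by an upper bound in terms of the level weights.

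Next, I would use the eigenvalue computation already established in the preceding subsection: a character $\chi_{\alpha}$ with ${\sf dim}({\sf span}(\alpha_0,\alpha_1,\ldots,\alpha_\ell)) = d$ is an eigenfunction of $H$ with eigenvalue $q^{-d}$. Grouping the nonzero Fourier coefficients according to the level decomposition $F = F_0 + F_1 + \cdots + F_\ell + F_{{\sf aff},\ell}$ (where $F_i$ collects the characters of joint span-dimension $i$), this yields
\[
\inner{F}{HF} \;=\; \norm{F_0}_2^2 + \sum_{i=1}^{\ell} q^{-i}\norm{F_i}_2^2 + q^{-(\ell+1)}\norm{F_{{\sf aff},\ell}}_2^2.
\]
Since $\norm{F_0}_2^2 = \mu(S^{\star})^2$, and since for $i \geq 2$ we have $q^{-i} \leq q^{-2}$, I would upper bound the right-hand side by
\[
\mu(S^{\star})^2 + \tfrac{1}{q}\norm{F_1}_2^2 + \tfrac{1}{q^2}\bigl(\mu(S^{\star}) - \mu(S^{\star})^2 - \norm{F_1}_2^2\bigr),
\]
where I use Parseval in the form $\sum_{i=1}^{\ell}\norm{F_i}_2^2 + \norm{F_{{\sf aff},\ell}}_2^2 = \mu(S^{\star}) - \mu(S^{\star})^2$ (valid since $F$ is $\{0,1\}$-valued).

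Finally, I would chain the two bounds, divide through by $\mu(S^{\star})$, and collect terms so as to isolate $\norm{F_1}_2^2/\mu(S^{\star})$ on one side. Using $\mu(S^{\star}) = \mu(S) \leq \xi$ to absorb the stand-alone $\mu(S^{\star})$ contribution, and carrying the $q^{-\ell}$ slack from the expansion transfer, one gets a bound of the shape $\norm{F_1}_2^2/\mu(S^{\star}) \geq 1 - q^{2-\ell} - q^2\xi$ as stated. There is no genuine obstacle here; the only thing to be careful about is that the $F_{{\sf aff},\ell}$ piece lives at level $\ell+1$ (not $\ell$), so it must be accounted for separately when writing $\inner{F}{HF}$ and when applying Parseval, but its eigenvalue $q^{-(\ell+1)}$ is even smaller than $q^{-2}$ and so it is harmlessly absorbed by the same upper bound used for levels $\geq 2$.
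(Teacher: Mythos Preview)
Your proposal is correct and follows essentially the same spectral argument as the paper: transfer the expansion bound to $S^{\star}$ via Claim~\ref{claim:exp_lb}, expand $\inner{F}{HF}$ according to the eigenvalue/level decomposition, upper bound the $i\geq 2$ levels by $q^{-2}$ times their total Parseval mass, and rearrange. One minor imprecision: $\mu(S^{\star})$ is not exactly $\mu(S)$ (tuples with linearly dependent $x_i$'s are never in $S^{\star}$), but you only need $\mu(S^{\star}) \leq \mu(S) \leq \xi$, which holds and is what the paper uses.
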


\subsection{Lower bounding the fourth norm of $F_1$ and stating the upper bound}
We now move on to the heart of the argument which handles the fourth norm of $F_1$. First,
we show an easy lower bound on it:
\begin{corollary}\label{corr:4th_norm_lb}
  $\frac{\norm{F_1}^4_4}{\mu(S^{\star})}\geq \left(1-q^{2-\ell} - q^2\xi\right)^4$.
\end{corollary}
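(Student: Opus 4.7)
The plan is to upgrade the $L^2$ lower bound on $F_1$ coming from Lemma~\ref{lem:high_wt_lvl1} into an $L^4$ lower bound by exploiting the Boolean structure of $F=1_{S^{\star}}$ together with orthogonality of the level decomposition and one application of H\"older's inequality. This is essentially a standard trick in the hypercontractive toolbox: on Boolean functions the $L^{4/3}$ norm is just $\mu(S^{\star})^{3/4}$, which lets one trade an inner product $\langle F, F_1\rangle$ for the quantity $\|F_1\|_4$ one wishes to control.

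More precisely, I would first observe that because the components $F_{{\sf lin},i}, F_{{\sf aff},i}$ of the level decomposition are defined in terms of disjoint collections of characters $\chi_\alpha$, and the characters form an orthonormal system on $(\mathbb{F}_q^k)^{\ell+1}$, the subspaces spanned by different levels are mutually orthogonal. In particular
\[
\|F_1\|_2^2 \;=\; \langle F_1, F_1\rangle \;=\; \Big\langle \sum_{i=0}^{\ell} F_i,\; F_1\Big\rangle \;=\; \langle F, F_1\rangle.
\]
Next, since $F = 1_{S^{\star}}$ takes values in $\{0,1\}$, one has $\|F\|_{4/3}^{4/3} = \mathbb{E}[F] = \mu(S^{\star})$, i.e., $\|F\|_{4/3} = \mu(S^{\star})^{3/4}$. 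Applying H\"older's inequality with dual exponents $4/3$ and $4$ then yields
\[
\|F_1\|_2^2 \;=\; \langle F, F_1\rangle \;\leq\; \|F\|_{4/3}\,\|F_1\|_4 \;=\; \mu(S^{\star})^{3/4}\,\|F_1\|_4.
\]

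Finally I would combine this with Lemma~\ref{lem:high_wt_lvl1}, which gives $\|F_1\|_2^2 \geq \bigl(1 - q^{2-\ell} - q^2\xi\bigr)\mu(S^{\star})$. Substituting into the display above yields
\[
\bigl(1 - q^{2-\ell} - q^2\xi\bigr)\mu(S^{\star}) \;\leq\; \mu(S^{\star})^{3/4}\,\|F_1\|_4,
\]
hence $\|F_1\|_4 \geq \bigl(1 - q^{2-\ell} - q^2\xi\bigr)\mu(S^{\star})^{1/4}$. Raising both sides to the fourth power and dividing by $\mu(S^{\star})$ gives the claimed inequality
\[
\frac{\|F_1\|_4^4}{\mu(S^{\star})} \;\geq\; \bigl(1 - q^{2-\ell} - q^2\xi\bigr)^4.
\]
There is essentially no obstacle here: the proof is a two-line calculation once one notices that $\|F_1\|_2^2$ equals the inner product $\langle F, F_1\rangle$ and that the Boolean identity $F^2 = F$ makes the $L^{4/3}$ norm of $F$ explicit. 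The real work in this part of the paper is the upper bound on $\|F_1\|_4^4$ that is matched with this lower bound in the subsequent $4$th-moment-vs-$2$nd-moment argument.
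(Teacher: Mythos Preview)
Your proof is correct and matches the paper's argument essentially line for line: the paper also writes $\norm{F_1}_2^2 = \inner{F_1}{F_1} = \inner{F_1}{F} \leq \norm{F_1}_4\norm{F}_{4/3} = \norm{F_1}_4\,\mu(S^{\star})^{3/4}$ and then invokes Lemma~\ref{lem:high_wt_lvl1}. Your exposition is slightly more explicit about the orthogonality of levels and the Boolean identity behind $\norm{F}_{4/3}=\mu(S^{\star})^{3/4}$, but the method is identical.
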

\begin{proof}
  By H\"{o}lder's inequality we have
  \[
  \norm{F_1}_2^2 = \inner{F_1}{F_1} = \inner{F_1}{F} \leq \norm{F_1}_4\norm{F}_{4/3} = \norm{F_1}_4\mu(S^{\star})^{3/4},
  \]
  using the lower bound on the left hand side from Lemma~\ref{lem:high_wt_lvl1} establishes the claim.
\end{proof}

Next, we state the upper bound on it, and then show how the two bounds imply Theorem~\ref{thm:expansion}. The rest of the appendix is then devoted into proving
this upper bound.
\begin{lemma}\label{lem:4th_norm_ub}
  Suppose $S$ is
  \begin{enumerate}
    \item $\xi$ pseudo-random with respect to zoom-outs (as well as on its linear part),
    \item $\mu(S)\leq \xi$,
    \item $\xi$ pseudo-random zoom ins with respect to their linear part,
    \item $a$ pseudo-random with respect to zoom-ins.
  \end{enumerate}
  Then
  \[
  \norm{F_1}_4^4\leq \mu(S^{\star})a^2 + 863 q^{2}\mu(S^{\star})\xi^{1/4}.
  \]
\end{lemma}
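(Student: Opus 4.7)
The plan is to decompose $F_1$ according to one-dimensional subspaces $L \subseteq \mathbb{F}_q^k$, exploit that the contributions from different $L$'s are mutually independent to reduce the 4-norm bound to per-$L$ estimates, and then control each $\|G_L\|_\infty$ via a fiber analysis of the conditional expectation $F^{\pi_L}(y) = \mathbb{E}[F \mid \pi_L = y]$ that uses the zoom-in pseudo-randomness parameter $a$ on ``generic'' fibers through a canonical point argument.

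By Claim~\ref{claim:sym}, for each 1-dim $L = \operatorname{span}(v_L)$ the level-1 characters whose Fourier support lies in $L^{\ell+1}$ combine into $F_{1,L}(s,x) = G_L(\pi_L(s,x))$, where $\pi_L(s,x) = (\langle v_L,s\rangle, \langle v_L,x_1\rangle,\ldots,\langle v_L,x_\ell\rangle) \in \mathbb{F}_q^{\ell+1}$ and $G_L$ is the mean-zero component of $F^{\pi_L}$; thus $F_1 = \sum_L F_{1,L}$. For distinct $L \neq L'$, the joint projection $(\pi_L,\pi_{L'})$ is uniformly distributed on $(\mathbb{F}_q^{\ell+1})^2$, so the $F_{1,L}$ are mutually independent mean-zero random variables. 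Expanding $\|F_1\|_4^4 = \mathbb{E}[(\sum_L F_{1,L})^4]$ via the multinomial theorem and using independence to kill all terms in which some $L$ appears with multiplicity one leaves
\[
\|F_1\|_4^4 \leq \sum_L \|G_L\|_4^4 + 3\|F_1\|_2^4 \leq \sum_L \|G_L\|_4^4 + 3\mu(S^*)\xi.
\]

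For each $L$, I would use $\|G_L\|_4^4 \leq \|G_L\|_\infty^2 \|G_L\|_2^2$ and bound $\|G_L\|_\infty$ by analyzing $F^{\pi_L}$ on two types of fibers. On zoom-out fibers $y = (y_0,0,\ldots,0)$, the conditioning forces $A := s + \operatorname{span}(x) \subseteq W_{y_0}$, so $F^{\pi_L}(y) = \mu(S_{W_{y_0}}) \leq \xi$ by condition~(1). On generic fibers $y$ with $y_1 \neq 0$ (and analogously via the first nonzero coordinate when $y_1=0$), the crucial observation is that the canonical point $z := s - (y_0/y_1)x_1$ satisfies $\langle v_L,z\rangle = 0$ and $z \in A$, and is uniformly distributed in $W_0 := v_L^\perp$ conditional on $\pi_L = y$. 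Conditioning further on $z = z_0$, a $GL_\ell(\mathbb{F}_q)$-symmetry argument on the coefficients $(y_1,\ldots,y_\ell)$ shows that $A = z_0 + \operatorname{span}(x_1,\ldots,x_\ell)$ is uniformly distributed over dim-$\ell$ affine subspaces through $z_0$ whose linear part is not contained in $v_L^\perp$ (``normal type''). Hence, by $\mu(S_{z_0}) \leq a$ from condition~(4),
\[
F^{\pi_L}(y) \leq \mathbb{E}_{z_0}\!\left[\Pr[A \in S \mid A \ni z_0, A \text{ normal}]\right] \leq \frac{a}{1 - q^{-\ell}}.
\]
Combining the fiber bounds gives $\|G_L\|_\infty \leq a/(1-q^{-\ell}) + O(\xi)$, and summing over $L$ yields $\sum_L \|G_L\|_4^4 \leq \mu(S^*)a^2 + O(\mu(S^*)(\xi + q^{-\ell}))$ after expanding $(a/(1-q^{-\ell}))^2 = a^2 + O(aq^{-\ell})$.

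The main obstacle is establishing the canonical point property — namely, that the marginal distribution of $A$ conditional on $(z = z_0,\pi_L = y)$ is uniform over the claimed class of affine subspaces. This reduces to the symmetry of the $GL_\ell(\mathbb{F}_q)$ action on the coefficients $(y_1,\ldots,y_\ell)$ and its compatibility with the $GL(\tilde A)$ action on ordered bases of $\tilde A$, together with a consistency check across different choices of canonical point depending on which coordinate of $y$ is nonzero. Additional care is needed to absorb the low-probability events where $(x_1,\ldots,x_\ell)$ fail to be linearly independent or where the ``normal type'' correction $1/(1-q^{-\ell})$ is inflated; tracking all these sources of error and collecting constants produces the stated $863 q^{2}\mu(S^*)\xi^{1/4}$ error term.
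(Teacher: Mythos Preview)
Your independence step is the gap. Pairwise uniformity of $(\pi_L,\pi_{L'})$ gives only pairwise independence of the $F_{1,L}$, not mutual independence: whenever $L_1,L_2,L_3,L_4$ are four distinct lines whose spanning vectors satisfy a linear relation (say $v_4=v_1+v_2+v_3$), one has $\pi_{L_4}=\pi_{L_1}+\pi_{L_2}+\pi_{L_3}$, and the term $\Expect{}{F_{1,L_1}F_{1,L_2}F_{1,L_3}F_{1,L_4}}$ does \emph{not} vanish. The same happens for the $(2,1,1)$-pattern when the three lines are coplanar. There are many such tuples, and their total contribution is not absorbed by $3\norm{F_1}_2^4$; the displayed inequality $\norm{F_1}_4^4\le\sum_L\norm{G_L}_4^4+3\norm{F_1}_2^4$ is therefore unjustified.

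This is not a cosmetic issue but exactly the place where the zoom-out hypothesis (condition~(1)) must do real work. In the paper's argument the analogous cross terms---the $H_3$ case of Claim~\ref{claim:up_cl2} and the $H_2$ case of Claim~\ref{claim:up_cl4}---are handled by passing to the Fourier side and invoking the level-$1$ coefficient bound $\card{\widehat F(\alpha)}\lesssim\xi/q^{\ell}$ from Corollary~\ref{corr:up_fourier_coef}, which is a direct consequence of zoom-out pseudo-randomness. Your proposal uses condition~(1) only to bound $F^{\pi_L}$ on the degenerate fibers $(y_1,\dots,y_\ell)=0$ and never invokes a Fourier-coefficient bound, so the linearly dependent tuples remain uncontrolled. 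Your canonical-point argument for $\norm{G_L}_\infty\lesssim a$ on generic fibers is a reasonable alternative to the paper's direct observation $\norm{f_{1,{\sf aff}}}_\infty\le a$, but it addresses only the diagonal terms and does not touch the missing cross-term estimate.
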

We now show the quick derivation of Theorem~\ref{thm:expansion}.
\begin{proof}[Proof of Theorem~\ref{thm:expansion}]
  Combining Corollary~\ref{corr:4th_norm_lb} and Lemma~\ref{lem:4th_norm_ub} we get that
  \[
  a^2 + 863 q^{2} \xi^{1/4}\geq 1-4q^{2-\ell}-4q^2\xi,
  \]
  so $a\geq 1-q^{2}(867\xi^{1/4} + q^{-\ell})$ provided $\xi$ is small enough with respect to $q$ ($\xi\leq q^{-10}$ will do).
\end{proof}

\subsection{An alternative description to $F_1$}
To handle $F_1$, we shall need a different combinatorial description for $F_1$. Define
$f_{1,{\sf lin}}, f_{1,{\sf aff}}\colon\mathbb{F}_q^{\ell}\to [-1,1]$ as
\[
f_{1,{\sf lin}}(x) = \mu((S^{\star})_{{x,{\sf lin}}}) - \mu(S^{\star}),
\qquad
f_{1,{\sf aff}}(x) = \mu((S^{\star})_{{x,{\sf aff}}}) - \mu(S^{\star}).
\]

Let $\mathcal{M} = \mathbb{F}_{q}^{\ell}\setminus\set{0}$. We define the equivalence relation on $\mathcal{M}$ which is
$M\sim M'$ if $M = i M'$ for some $i\in\mathbb{F}_q$, and let $\mathcal{B}$ be the equivalency classes of this relations;
we choose a representative element from each equivalency class (arbitrarily).
\begin{claim}
  $F_1(s,x) = \sum\limits_{M\in \mathcal{B}} f_{1,{\sf lin}}(\inner{M}{x})
  +
  \sum\limits_{M\in \mathbb{F}_q^{\ell}} f_{1,{\sf aff}}(s + \inner{M}{x})$.
\end{claim}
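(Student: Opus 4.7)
The plan is a direct Fourier-analytic verification of the identity, expanding both sides in the character basis of $(\mathbb{F}_q^k)^{\ell+1}$ and matching coefficients character by character. For the LHS, $F_1 = \sum_{\alpha}\widehat{F}(\alpha)\chi_\alpha$ is supported on $\alpha$ with $\dim{\sf span}(\alpha_0,\ldots,\alpha_\ell) = 1$; by Claim~\ref{claim:sym}, on each orbit of the natural symmetry action the coefficient $\widehat{F}(\alpha)$ depends only on $\alpha_0$ and on the common line $L = {\sf span}(\alpha_0,\ldots,\alpha_\ell)$, so the orbits split cleanly into those with $\alpha_0 = 0$ (with coefficient $B(L) := \widehat{F}(0,v,0,\ldots,0)$ for any $v\in L\setminus\{0\}$) and those with $\alpha_0 = c \ne 0$ (with coefficient $A(c) := \widehat{F}(c,0,\ldots,0)$). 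For the RHS, a short calculation using $\chi_c(\inner{M}{x}) = \chi_{(M_1c,\ldots,M_\ell c)}(x)$ shows that each summand $f_{1,{\sf lin}}(\inner{M}{x})$ places $\widehat{f_{1,{\sf lin}}}(c)$ on the character $\chi_{(0,M_1c,\ldots,M_\ell c)}$, and each $f_{1,{\sf aff}}(s+\inner{M}{x})$ places $\widehat{f_{1,{\sf aff}}}(c)$ on $\chi_{(c,M_1c,\ldots,M_\ell c)}$. A bijection check then confirms that as $(M,c)$ ranges over $\mathcal{B}\times(\mathbb{F}_q^k\setminus\{0\})$ respectively $\mathbb{F}_q^\ell\times(\mathbb{F}_q^k\setminus\{0\})$, the resulting characters exhaust the Fourier support of $F_1$, each appearing exactly once: the first sum exhausts the $\alpha_0=0$ orbits, the second the $\alpha_0\ne 0$ ones (the $M=0$ term in the second sum supplying the degenerate $F_{{\sf aff},0}$ contribution).

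Coefficient matching then reduces the claim to two Fourier identities: $\widehat{f_{1,{\sf aff}}}(c)=A(c)$ and $\widehat{f_{1,{\sf lin}}}(c)=B({\sf span}(c))$ for all $c\ne 0$. For the latter one additionally needs that $\widehat{f_{1,{\sf lin}}}(c)$ depends only on the line ${\sf span}(c)$; this follows from the obvious $\mathbb{F}_q^*$-invariance of the conditioning event $z\in{\sf span}(x)$ under $z\mapsto \mu z$, matching the $\mathbb{F}_q^*$-invariance of $B({\sf span}(c))$ coming from Claim~\ref{claim:sym}. The identity for $f_{1,{\sf aff}}$ I would verify through a translation-invariance change of variables: reparametrizing the conditional sampling, $f_{1,{\sf aff}}(s)$ is seen to be (up to an additive constant) the function $\mathbb{E}_x[F(s,x)]$, whose non-trivial Fourier coefficients on $s$ alone are exactly the values $A(c)$.

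The main obstacle is the identity for $f_{1,{\sf lin}}$. To verify it, I would use Claim~\ref{claim:sym} to rewrite
\[
\widehat{F}(0, c, 0,\ldots, 0) \;=\; \frac{1}{q^\ell - 1}\, \mathbb{E}_{s,x}\!\left[F(s,x)\bigl(q^\ell \mathbf{1}_{c\perp x} - 1\bigr)\right]
\]
as a $GL_\ell$-orbit average (averaging $\widehat{F}(0,\lambda_1 c,\ldots,\lambda_\ell c)$ over $\lambda\in\mathbb{F}_q^\ell\setminus\{0\}$ and collapsing the resulting character sum), and compute $\widehat{f_{1,{\sf lin}}}(c)$ by Fourier inversion on the defining conditional probability. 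Crucially, $F$ is supported where $\dim{\sf span}(x) = \ell$, which turns the dimension-dependent factor $q^{\dim{\sf span}(x)}$ arising from the character sum over ${\sf span}(x)$ into the constant $q^\ell$ on the support of $F$. Careful bookkeeping of the normalization $\Pr[z\in{\sf span}(x)]$ in the conditioning, together with the boundary contribution at $z=0$ (where the conditioning is vacuous and $f_{1,{\sf lin}}(0)=0$), should then identify the two expressions and close the argument.
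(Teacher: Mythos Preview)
Your approach is essentially the same Fourier-analytic computation as the paper's, just organized in the reverse order. The paper starts from the character expansion of $F_1$, splits by whether $\alpha_0$ is zero or not (their ``$v=0$'' and ``$v\neq 0$'' cases), and for each piece collapses the character sums to recognize the conditional measures $\mu((S^\star)_{\cdot,{\sf lin}})$ and $\mu((S^\star)_{\cdot})$ directly --- thereby arriving at the right-hand side. It then \emph{derives} the scalar identities $\widehat{f_{1,{\sf aff}}}(\alpha)=\widehat{F}(\alpha,0,\dots,0)$ and the analogue for $f_{1,{\sf lin}}$ as a corollary (their Claim~\ref{claim:f_Fourier}), by equating coefficients in the identity just proved. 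You instead propose to establish those scalar Fourier identities first, directly from the combinatorial definitions of $f_{1,{\sf lin}}$ and $f_{1,{\sf aff}}$, and then assemble the claim via your bijection $(M,c)\leftrightarrow \alpha$. The bijection argument and the $GL_\ell$-orbit averaging trick you describe are exactly the manipulations the paper performs (in the other direction), and the crucial observation that $F$ is supported where $\dim{\sf span}(x)=\ell$ is used identically in both. So there is no genuine difference in method --- the same character-sum collapses, the same use of Claim~\ref{claim:sym}, the same bookkeeping of the $z=0$ boundary term --- only a difference in the order of presentation: the paper goes LHS $\to$ RHS in one pass, while you match coefficients after reducing to two one-variable identities.
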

\begin{proof}
  By definition, we have
  \begin{align*}
  F_{1,{\sf lin}}(s,x)
  &=
  \sum\limits_{\beta\in \mathcal{B}}
  \sum\limits_{M\in\mathcal{M}, v\in \mathbb{F}_q}
  \widehat{F}(v\beta, M_1\beta,\ldots,M_{\ell}\beta) \chi_{v\beta,M_1\beta,\ldots,M_{\ell}\beta}(s,x).
  \end{align*}
  We split this sum according to $v=0$ and $v\neq 0$.
  \paragraph{Contribution from $v=0$.}
  For $v=0$, using Claim~\ref{claim:sym} we get contribution of
  \begin{align*}
  &\sum\limits_{\beta\in\mathcal{B}}
  \sum\limits_{M\in\mathcal{M}}
  \widehat{F}(0, \beta,0,\ldots,0) \chi_{0,M_1\beta,\ldots,M_{\ell}\beta}(s,x)\\
  &= \frac{1}{q-1}\sum\limits_{\beta\in\mathbb{F}_q^{k}\setminus\set{0}}
  \Expect{s',x'}{F(s',x')\chi_{\beta}(-x_1')}
  \sum\limits_{M\in\mathcal{M}}\chi_{\beta}\left(\sum\limits_{i=1}^{\ell} M_i x_i\right)\\
  &=
  \frac{1}{q-1}
  \Expect{s',x'}{F(s',x')
  \sum\limits_{M\in\mathcal{M}}
  \sum\limits_{\beta\in\mathbb{F}_q^{k}\setminus\set{0}}\chi_{\beta}\left(-x'_1 + \sum\limits_{i=1}^{\ell} M_i x_i\right)}.
  \end{align*}
  Adding $\frac{\card{\mathcal{M}}}{q-1} \mu(S^{\star})$ to this expression amounts to also including $\beta = 0$, hence we get that
  \[
  \frac{\card{\mathcal{M}}}{q-1} F_0 + F_{1,{\sf lin}}(s,x)
  = \frac{1}{q-1}
  \Expect{s',x'}{F(s',x')
  \sum\limits_{M\in\mathcal{M}}
  \sum\limits_{\beta\in\mathbb{F}_q^{k}}\chi_{\beta}\left(-x'_1 + \sum\limits_{i=1}^{\ell} M_i x_i\right)}.
  \]

  If $-x'_1 + \sum\limits_{i=1}^{\ell} M_i x_i\neq 0$, the sum over $\beta$ is $0$ and otherwise it is $q^k$, so we get
  \begin{align*}
  \frac{q^{k}}{q-1}
  \Expect{s',x'}{F(s',x')
  \sum\limits_{M\in\mathcal{M}}
  1_{x'_1 = \sum\limits_{i=1}^{\ell} M_i x_i}
  }
  &=
  \frac{q^{k}}{{q-1}} \sum\limits_{M\in\mathcal{M}}
  \Expect{s',x'}{F(s',x') 1_{x_1' = \inner{M}{x}}}\\
  &=\frac{1}{{q-1}}\sum\limits_{M\in\mathcal{M}} \mu((S^{\star})_{\inner{M}{x},{\sf lin}})\\
  &=\sum\limits_{M\in\mathcal{B}} \mu((S^{\star})_{\inner{M}{x},{\sf lin}}).
  \end{align*}

  \paragraph{Contribution from $v\neq 0$.} For $v\neq 0$ we get from similar computations that the contribution is
  \[
  \sum\limits_{\beta\in\mathcal{B}}
  \sum\limits_{M\in\mathcal{M}, v\neq 0}
  \widehat{F}(v\beta, 0,\ldots,0) \chi_{v\beta,M_1\beta,\ldots,M_{\ell}\beta}(s,x)
  =\sum\limits_{\beta\in\mathbb{F}_q^k\setminus\{0\}}
  \sum\limits_{M\in\mathcal{M}}
  \widehat{F}(\beta, 0,\ldots,0) \chi_{\beta,M_1\beta,\ldots,M_{\ell}\beta}(s,x).
  \]
  We add to that $F_{1,{\sf aff}}$, which is the term corresponding to $M = 0$; we then add $q^{\ell} F_0$, which corresponds to taking $\beta = 0$ as well.
  Hence we get that the contribution from $v\neq 0$ plus $F_{1,{\sf aff}}(s,x) + q^{\ell}\mu(S^{\star})$ is
  \begin{align*}
  &\sum\limits_{\beta\in\mathbb{F}_q^{k}}
  \Expect{s',x'}{F(s',x')\chi_{\beta}(-s')}
  \sum\limits_{M\in\mathbb{F}_q^{\ell}}\chi_{\beta}\left(s + \sum\limits_{i=1}^{\ell} M_i x_i\right)\\
  &=
  \Expect{s',x'}{F(s',x')
  \sum\limits_{M\in \mathbb{F}_q^{\ell}}
  \sum\limits_{\beta\in\mathbb{F}_q^{k}}\chi_{\beta}\left(s + \sum\limits_{i=1}^{\ell} M_i x_i -s'\right)}.
  \end{align*}
  If $-s' + s +\sum\limits_{i=1}^{\ell} M_i x_i = 0$ we get that the sum over $\beta$ is $q^{k}$ and otherwise it is
  $0$. Hence we get
  \begin{align*}
  q^{k}\Expect{s',x'}{f(s',x')
  \sum\limits_{M\in \mathbb{F}_q^{\ell}} 1_{-s' +s+ \sum\limits_{i=1}^{\ell} M_i x_i = 0}}
  &=q^{k} \sum\limits_{M\in\mathbb{F}_q^{\ell}}\Expect{s',x'}{f(s',x')1_{\inner{M}{x} + s = s'}}\\
  &=\sum\limits_{M\in\mathbb{F}_q^{\ell}} \mu((S^{\star})_{s + \inner{M}{x}}).
  \end{align*}
  Combining all, and moving the multiples of $\mu(S^{\star})$ we have added to the other side, we get that
  \[
  F_{1,{\sf lin}}(s,x) + F_{1,{\sf aff}}(s,x) =
  \sum\limits_{M\in\mathcal{B}} f_{1,{\sf lin}}(\inner{M}{x})
  +
  \sum\limits_{M\in\mathbb{F}_q^{\ell}} f_{1,{\sf aff}}(s + \inner{M}{x}).\qedhere
  \]
\end{proof}
\section{Properties of $f_{1, {\sf lin}}$ and $f_{1, {\sf aff}}$}
\subsection{Orthogonality and symmetries}
\begin{claim}\label{claim:f_orth}
  We have
  \[
  \Expect{x\in\mathbb{F}_q^k\setminus{0}}{f_{1,{\sf lin}}(x)} = 0,
  \qquad\qquad
  \Expect{x\in\mathbb{F}_q^k}{f_{1,{\sf aff}}(x)} = 0.
  \]
\end{claim}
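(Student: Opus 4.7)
My plan is to prove both identities as direct Fubini-style swaps of expectations, once one unpacks the precise combinatorial content of the zoom-in measures at the level of a uniformly random Cayley-graph vertex $v = (s,x_1,\ldots,x_{\ell})$. Going through the derivation of the combinatorial decomposition $F_1(s,x) = \sum_{M\in\mathcal{B}} f_{1,{\sf lin}}(\inner{M}{x}) + \sum_{M\in\mathbb{F}_q^{\ell}} f_{1,{\sf aff}}(s+\inner{M}{x})$ that immediately precedes the claim, one reads off that $\mu((S^{\star})_{y,{\sf aff}})$ is the probability that $v \in S^{\star}$ conditional on the shift coordinate equaling $y$ (i.e. on $s = y$), and that $\mu((S^{\star})_{y,{\sf lin}})$ is the analogous conditional probability with $y$ taking the place of the first linear coordinate $x_1$.

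Given these identifications, both claims reduce to a one-line application of the law of total probability. For $f_{1,{\sf aff}}$, averaging $\mu((S^{\star})_{y,{\sf aff}}) = \Pr[v \in S^{\star} \mid s = y]$ over a uniform $y \in \mathbb{F}_q^k$ and pushing the $y$-expectation through recovers the unconditional probability $\Pr[v \in S^{\star}] = \mu(S^{\star})$; subtracting $\mu(S^{\star})$ yields $\Expect{y\in\mathbb{F}_q^k}{f_{1,{\sf aff}}(y)} = 0$. For $f_{1,{\sf lin}}$, the parallel computation with $x_1 = y$ in place of $s = y$ yields $\Expect{y\in\mathbb{F}_q^k}{\mu((S^{\star})_{y,{\sf lin}})} = \mu(S^{\star})$. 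The restriction to $y \neq 0$ in the stated claim is handled by the natural convention $f_{1,{\sf lin}}(0) = 0$: the conditioning event $0 \in {\sf span}(x_1,\ldots,x_{\ell})$ is vacuously true for every tuple, so no genuine linear zoom-in is defined at $y = 0$ and this point is (consistently) excluded from the average.

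There is no real obstacle here; the entire argument is pure Fubini plus linearity of expectation, and no structural property of $S$ (neither its small size nor any pseudo-randomness hypothesis) is used. The only care point is to correctly identify which marginal of the uniform Cayley-graph distribution each of the two zoom-in measures corresponds to, and this identification is already implicit in the derivation of the decomposition just above.
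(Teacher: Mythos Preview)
Your proposal is correct and takes essentially the same approach as the paper: the paper's own proof is the single sentence ``This is obvious by the definition of these functions,'' and you have simply unpacked that into the underlying total-probability/Fubini computation. Your handling of the boundary point $y=0$ for $f_{1,{\sf lin}}$ is at the same level of informality as the paper's, so there is no discrepancy to flag.
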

\begin{proof}
  This is obvious by the definition of these functions.
\end{proof}

\subsection{Second moment}
\begin{claim}\label{claim:f_2nd}
We have
  \[
  \Expect{x\in\mathbb{F}_q^k\setminus{0}}{f_{1,{\sf lin}}(x)^2} \leq \frac{\norm{F_1}_2^2}{\card{\mathcal{B}}}\leq \frac{\mu(S^{\star})}{\card{\mathcal{B}}},
  \qquad\qquad
  \Expect{x\in\mathbb{F}_q^k}{f_{1,{\sf aff}}(x)^2} \leq \frac{\norm{F_1}_2^2}{q^{\ell}}\leq \frac{\mu(S^{\star})}{q^{\ell}}.
  \]
\end{claim}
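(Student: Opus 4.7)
The plan is to expand the identity
\[
F_1(s,x) = \sum_{M \in \mathcal{B}} f_{1,{\sf lin}}(\inner{M}{x}) + \sum_{M \in \mathbb{F}_q^{\ell}} f_{1,{\sf aff}}(s + \inner{M}{x})
\]
from the preceding claim, kill all off-diagonal terms via joint-uniformity arguments, and finish with the Parseval bound $\norm{F_1}_2^2 \leq \norm{F}_2^2 = \mu(S^{\star})$. Write $F_1 = A + B$, where $A(s,x)$ is the first sum and $B(s,x)$ is the second.

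The first step is to observe that $A$ and $B$ are $L^2$-orthogonal: $A$ is a function of $x$ only, while for every fixed $x$ one has $\Expect{s}{B(s,x)} = \sum_{M}\Expect{z}{f_{1,{\sf aff}}(z)} = 0$ by Claim~\ref{claim:f_orth}. Consequently $\norm{F_1}_2^2 = \norm{A}_2^2 + \norm{B}_2^2$, and it suffices to relate $\norm{A}_2^2$ to the second moment of $f_{1,{\sf lin}}$ and $\norm{B}_2^2$ to the second moment of $f_{1,{\sf aff}}$.

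Next I would expand $\norm{A}_2^2 = \sum_{M, M' \in \mathcal{B}} \Expect{x}{f_{1,{\sf lin}}(\inner{M}{x}) f_{1,{\sf lin}}(\inner{M'}{x})}$. For distinct $M, M' \in \mathcal{B}$, the vectors $M, M' \in \mathbb{F}_q^{\ell}$ are $\mathbb{F}_q$-linearly independent---this is precisely why $\mathcal{B}$ is built by choosing one representative per $(q-1)$-orbit---so the linear map $x \mapsto (\inner{M}{x}, \inner{M'}{x})$ is surjective and the pair is jointly uniform on $\mathbb{F}_q^{k} \times \mathbb{F}_q^{k}$. Extending $f_{1,{\sf lin}}(0) \defeq 0$ (consistent with the defining formula, since $0 \in {\sf span}(x_1,\ldots,x_{\ell})$ always), Claim~\ref{claim:f_orth} gives $\Expect{z \in \mathbb{F}_q^k}{f_{1,{\sf lin}}(z)} = 0$, and every off-diagonal term factors into a product of zeros. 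The diagonal contributes $|\mathcal{B}| \cdot \Expect{z \in \mathbb{F}_q^k}{f_{1,{\sf lin}}(z)^2} = |\mathcal{B}| \cdot \frac{q^k - 1}{q^k} \Expect{z \neq 0}{f_{1,{\sf lin}}(z)^2}$. Dividing and using $\norm{A}_2^2 \leq \norm{F_1}_2^2 \leq \mu(S^{\star})$ yields the bound for $f_{1,{\sf lin}}$, up to the harmless factor $q^k/(q^k-1)$.

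For $B$ the argument is analogous and slightly cleaner. For $M \neq M' \in \mathbb{F}_q^{\ell}$, the pair $(s + \inner{M}{x}, s + \inner{M'}{x})$ is jointly uniform on $\mathbb{F}_q^k \times \mathbb{F}_q^k$: the difference $\inner{M'-M}{x}$ is uniform since $M' - M \neq 0$, and the shift $s$ makes the first coordinate uniform and independent of the difference. All cross terms therefore vanish by Claim~\ref{claim:f_orth}, while the $q^{\ell}$ diagonal terms each contribute $\Expect{z}{f_{1,{\sf aff}}(z)^2}$, giving $\norm{B}_2^2 = q^{\ell} \Expect{z}{f_{1,{\sf aff}}(z)^2}$. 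The stated bound follows by dividing and using $\norm{B}_2^2 \leq \norm{F_1}_2^2 \leq \mu(S^{\star})$. The only step requiring any real care is the joint-uniformity verifications that enable the cross-term cancellations; once these are in place the rest is direct computation.
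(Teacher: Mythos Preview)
Your proposal is correct and follows essentially the same approach as the paper: expand $\norm{F_1}_2^2$ via the combinatorial formula, kill the cross terms using Claim~\ref{claim:f_orth}, and bound each diagonal sum by $\norm{F_1}_2^2\leq\mu(S^{\star})$. You are in fact more explicit than the paper about why the off-diagonal terms within each of the two sums vanish (the paper's displayed expansion already silently drops them), and the harmless $q^k/(q^k-1)$ factor you flag is a cosmetic issue the paper's proof likewise elides.
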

\begin{proof}
  Expanding $\norm{F_1}_2^2$, it is equal to
  \[
  \Expect{s,x}{\sum\limits_{M\in\mathcal{B}}{\card{f_{1,{\sf lin}}(\inner{M}{x})}^2}
  +\sum\limits_{M\in\mathcal{B}, M'\in\mathbb{F}_q^{\ell}}f_{1,{\sf lin}}(\inner{M}{x})f_{1,{\sf aff}}(s+\inner{M}{x})
  +\sum\limits_{M'\in\mathbb{F}_q^{\ell}}\card{f_{1,{\sf aff}}(s+\inner{M}{x})}^2}.
  \]
  We note that for each $x$, $M,M'$, the expectation of $f_{1,{\sf lin}}(\inner{M}{x})f_{1,{\sf aff}}(s+\inner{M}{x})$ over $z$ is $0$ by Claim~\ref{claim:f_orth},
  hence the middle sum vanishes. The other two sums are non-negative so it follows that each one of them is at most $\norm{F_1}_2^2$ in expectation, and the claim
  follows by translating them into expectations
\end{proof}
\subsection{Fourier coefficients}
We shall now think of $f_{1,{\sf lin}}, f_{1,{\sf aff}}$ as functions from $\mathbb{F}_q^k$ to $\mathbb{R}$ and may therefore discuss their Fourier coefficients.
\begin{claim}\label{claim:f_Fourier}
  Let $\alpha\in \mathbb{F}_q^k$ index a Fourier coefficient. Then
  \[
  \widehat{f_{1,{\sf lin}}}(\alpha) = \frac{1}{q-1}\widehat{F}(0,\alpha,\ldots,\alpha),
  \qquad\qquad
  \widehat{f_{1,{\sf aff}}}(\alpha) = \widehat{F}(\alpha,0,\ldots,0).
  \]
\end{claim}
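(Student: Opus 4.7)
The plan is to compute each Fourier coefficient directly from its definition, using character orthogonality together with the symmetries of $F$ recorded in Claim~\ref{claim:sym}.

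For the affine identity, I would first observe that by the shift invariance $(s, x_1, \ldots, x_\ell) \mapsto (s + \sum_i c_i x_i, x_1, \ldots, x_\ell)$ of $F$ from Claim~\ref{claim:sym}, the conditional average defining $\mu((S^{\star})_{x, {\sf aff}})$ collapses to the simpler expectation $\mathbb{E}_{x_1, \ldots, x_\ell}[F(x, x_1, \ldots, x_\ell)]$, i.e.\ fixing $s = x$. Thus $f_{1, {\sf aff}}(x) = \mathbb{E}_{x_1, \ldots, x_\ell}[F(x, x_1, \ldots, x_\ell)] - \mu(S^{\star})$, and pairing with $\overline{\chi_\alpha(x)}$ the constant term drops for $\alpha \neq 0$; what remains is exactly the definition of $\widehat{F}(\alpha, 0, \ldots, 0)$.

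For the linear identity I would run the analogous reduction. Using the $\mathrm{GL}_\ell$-type action on $(x_1, \ldots, x_\ell)$ afforded by Claim~\ref{claim:sym}, the conditional event $z \in {\sf span}(x_1, \ldots, x_\ell)$ collapses to fixing $x_1 = z$; averaging $F$ over $s$ and $x_2, \ldots, x_\ell$ and computing the Fourier coefficient against $\overline{\chi_\alpha(z)}$ yields $\widehat{F}(0, \alpha, 0, \ldots, 0)$. A second invocation of Claim~\ref{claim:sym}, noting that $(0, \alpha, 0, \ldots, 0)$ and $(0, \alpha, \ldots, \alpha)$ share $\alpha_0 = 0$ and have the same span, identifies this with $\widehat{F}(0, \alpha, \ldots, \alpha)$. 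The factor $\tfrac{1}{q-1}$ then arises because the conditioning event $z \in {\sf span}$ is invariant under $z \mapsto cz$ for $c \in \mathbb{F}_q \setminus \{0\}$, and this same $(q-1)$-fold scalar redundancy was already absorbed when passing from a sum over $\mathcal{M}$ to a sum over $\mathcal{B}$ in the decomposition of $F_1$ preceding the claim; it must be reinstated in the Fourier identity.

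The main obstacle is keeping the combinatorial normalizations consistent. The scalar invariance of $\mu((S^{\star})_{z, {\sf lin}})$ and the symmetries captured by Claim~\ref{claim:sym} must be invoked in sync, since together they pin down the precise factor of $\tfrac{1}{q-1}$ in the linear identity and the absence of such a factor in the affine one. Once that bookkeeping is handled, the rest of the computation is routine character orthogonality.
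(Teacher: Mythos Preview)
Your route is genuinely different from the paper's. The paper does not revisit the definitions of $f_{1,{\sf lin}}$ and $f_{1,{\sf aff}}$ at all: it takes the combinatorial identity established in the immediately preceding claim,
\[
F_1(s,x)=\sum_{M\in\mathcal{B}} f_{1,{\sf lin}}(\inner{M}{x})+\sum_{M\in\mathbb{F}_q^{\ell}} f_{1,{\sf aff}}(s+\inner{M}{x}),
\]
expands both sides as Fourier series in $(s,x)$, and equates coefficients character by character. You instead compute each $\widehat{f_{1,\cdot}}(\alpha)$ directly from the definition of $f_{1,\cdot}$, using the symmetries of Claim~\ref{claim:sym} to collapse the conditioning in the zoom-in measures. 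Your route is more self-contained (it does not rely on the previous claim); the paper's route has the advantage that the constants are pinned down mechanically by the coefficient matching rather than by tracking normalizations in a conditional measure.

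For the affine identity your outline is sound: the shift invariance $(s,x)\mapsto(s+\sum_i c_i x_i,x)$ does reduce the zoom-in measure at $x$ to $\Expect{x_1,\ldots,x_\ell}{F(x,x_1,\ldots,x_\ell)}$ (this is precisely the reading of $\mu((S^\star)_x)$ used in the proof of the preceding claim), and pairing with $\overline{\chi_\alpha}$ then gives $\widehat{F}(\alpha,0,\ldots,0)$ directly.

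Your treatment of the $\tfrac{1}{q-1}$ in the linear identity is where the argument is not tight. In a direct computation the $\mathcal{M}$-versus-$\mathcal{B}$ indexing of the $F_1$ decomposition is irrelevant: you never touch $F_1$, only $f_{1,{\sf lin}}$. Carrying out exactly what you describe---collapse the conditioning to $x_1=z$ via the $\mathrm{GL}_\ell$ action, average over $s,x_2,\ldots,x_\ell$, pair with $\overline{\chi_\alpha(z)}$---yields $\widehat{F}(0,\alpha,0,\ldots,0)$, and one more invocation of Claim~\ref{claim:sym} turns this into $\widehat{F}(0,\alpha,\ldots,\alpha)$ with no further scalar. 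So the factor does not arise from the mechanism you name. If you want to recover the stated constant via your route you must locate it in the precise relationship between the conditional measure $\mu((S^\star)_{z,{\sf lin}})$ and the unconditioned average $\Expect{s,x_2,\ldots,x_\ell}{F(s,z,x_2,\ldots,x_\ell)}$; the appeal to the $\mathcal{B}$-indexing of $F_1$ is a red herring in a direct computation.
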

\begin{proof}
  By definition
  \begin{equation}\label{eq:equate_coeff}
  F_1(s,x)
  =\sum\limits_{M\in\mathcal{B}} f_{1,{\sf lin}}(\inner{M}{x})
  +\sum\limits_{M\in\mathbb{F}_q^{\ell}} f_{1,{\sf aff}}(s+\inner{M}{x}),
  \end{equation}
  and we expand the right hand side, as well as the left hand side, according to Fourier decomposition. The first term on the right hand side is equal to
  \[
  \frac{1}{q-1}\sum\limits_{M\in\mathcal{M}} \sum\limits_{\alpha\in\mathbb{F}_q^k} \widehat{f_{1,{\sf lin}}}(\alpha) \chi_{\alpha}(\inner{M}{x}).
  \]
  We have
  \[
  \chi_{\alpha}(\inner{M}{x})
  = \chi_{\alpha}(M_1x_1+\ldots+M_{\ell}x_{\ell})
  = \chi_{M_1\alpha}(x_1)\cdots\chi_{M_{\ell}\alpha}(x_{\ell})
  = \chi_{(0,M\alpha )}(s,x),
  \]
  hence
  \begin{equation}\label{eq_1}
  \sum\limits_{M\in\mathcal{B}} f_{1,{\sf lin}}(\inner{M}{x})=
  \frac{1}{q-1}\sum\limits_{M\in\mathcal{M}} f_{1,{\sf lin}}(\inner{M}{x})
  =\frac{1}{q-1}\sum\limits_{M\in\mathcal{M}} \sum\limits_{\alpha\in\mathbb{F}_q^k} \widehat{f_{1,{\sf lin}}}(\alpha)  \chi_{(0,\alpha M)}(s,x).
  \end{equation}
  Similarly, we get that
  \begin{equation}\label{eq_2}
  \sum\limits_{M\in\mathbb{F}_q^{\ell}} f_{1,{\sf aff}}(s+\inner{M}{x})
  =
  \sum\limits_{M\in\mathbb{F}_q^{\ell}} \sum\limits_{\alpha\in\mathbb{F}_q^k} \widehat{f_{1,{\sf aff}}}(\alpha) \chi_{\alpha}(s+\inner{M}{x})
  =
  \sum\limits_{M\in\mathbb{F}_q^{\ell}} \sum\limits_{\alpha\in\mathbb{F}_q^k} \widehat{f_{1,{\sf aff}}}(\alpha) \chi_{(\alpha,M\alpha)}(s,x).
  \end{equation}

  Finally, we have by definition that
  \[
  F_1(s,x)
  =\sum\limits_{\substack{\alpha: \alpha_0\in{\sf span}(\alpha_1,\ldots,\alpha_{\ell})  \\ {\sf dim}({\sf span}(\alpha_1,\ldots,\alpha_{\ell})) = 1}} \widehat{F}(\alpha)\chi_{\alpha}(s,x)+
  \sum\limits_{\alpha_0\in\mathbb{F}_q^k\setminus\set{0}} \widehat{F}(\alpha_0,0,\ldots,0)\chi_{\alpha_0,0,\ldots,0}(s,x).
  \]
  Expanding the first sum and using Claim~\ref{claim:sym} we get it is equal to
  \[
  \sum\limits_{\alpha_0\in\mathbb{F}_q^k\setminus\set{0}, M\in\mathcal{M}} \widehat{F}(0, M\alpha_0)\chi_{0, M\alpha_0}(s,x)
  +
  \sum\limits_{\alpha_0\in\mathbb{F}_q^k\setminus\set{0}, M\in\mathcal{M}} \widehat{F}(\alpha_0, M\alpha_0)\chi_{\alpha_0, M\alpha_0}(s,x).
  \]
  Hence
  \begin{equation}\label{eq_3}
  F_1(s,x) =
  \sum\limits_{\substack{\alpha_0\in\mathbb{F}_q^k\setminus\set{0}\\ M\in\mathcal{M}}} \widehat{F}(0, M\alpha_0)\chi_{0, M\alpha_0}(s,x)
  +\sum\limits_{\substack{\alpha_0\in\mathbb{F}_q^k\setminus\set{0}\\M\in\mathbb{F}_q^{\ell}}} \widehat{F}(\alpha_0,0,\ldots,0)\chi_{\alpha_0,0,\ldots,0}(s,x).
  \end{equation}
  We plug in~\eqref{eq_1},~\eqref{eq_2},~\eqref{eq_3} into~\eqref{eq:equate_coeff} and equate coefficients to get the statement of the claim.
\end{proof}

\begin{corollary}\label{corr:up_fourier_coef}
  Suppose $S^{\star}$ is $\xi$-pseudo-random against zoom-out as well as with respect to the linear part. Then for all $\alpha$,
  \[
  \card{\widehat{f_{1,{\sf lin}}}(\alpha)}\leq \frac{1}{(q-1)(q^{\ell}-1)}\xi,
  \qquad\qquad
  \card{\widehat{f_{1,{\sf aff}}}(\alpha)}\leq \frac{\xi}{q^{\ell+1}}.
  \]
\end{corollary}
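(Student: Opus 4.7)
The plan is to reduce each of the two Fourier coefficients identified in Claim~\ref{claim:f_Fourier} to an expectation involving the indicator of a zoom-out set on the linear part, using the symmetry of $\widehat{F}$ recorded in Claim~\ref{claim:sym}. Once in that form, the pseudo-randomness hypothesis on $S^\star$ will translate directly into the required upper bound.

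First I would tackle $\widehat{f_{1,{\sf lin}}}(\alpha) = \widehat{F}(0,\alpha,\ldots,\alpha)/(q-1)$. By Claim~\ref{claim:sym}, for every $c\in\mathbb{F}_q^\ell\setminus\set{0}$ the coefficient $\widehat{F}(0,c_1\alpha,\ldots,c_\ell\alpha)$ equals $\widehat{F}(0,\alpha,\ldots,\alpha)$, since both tuples have zeroth coordinate $0$ and their linear span is $\mathbb{F}_q\cdot\alpha$. Summing the $q^\ell-1$ identical copies and applying the orthogonality identity $\sum_{c_i\in\mathbb{F}_q}\chi_{-c_i\alpha}(x_i) = q\cdot 1_{\inner{\alpha}{x_i}=0}$ collapses the full sum over $c\in\mathbb{F}_q^\ell$ into $q^\ell\,\Expect{}{F(s,x)\cdot 1_{{\sf span}(x)\subseteq\ker\alpha}}$. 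Separating off the $c=0$ contribution $\widehat{F}(0,0,\ldots,0) = \mu(S^\star)$ then gives
\[
(q^\ell-1)\,\widehat{F}(0,\alpha,\ldots,\alpha) \;=\; q^\ell\,\Prob{}{(s,x)\in S^\star,\ {\sf span}(x)\subseteq\ker\alpha} - \mu(S^\star).
\]
Pseudo-randomness with respect to zoom-out on the linear part (applied to $\ker\alpha$) bounds the first summand by $\xi$, and $\mu(S^\star)\leq\xi$ bounds the second, so the right-hand side lies in $[-\xi,\xi]$, and dividing by $(q-1)(q^\ell-1)$ yields the first claimed bound.

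The argument for $\widehat{f_{1,{\sf aff}}}(\alpha) = \widehat{F}(\alpha,0,\ldots,0)$ proceeds analogously, except that Claim~\ref{claim:sym} now equates all $q^\ell$ coefficients $\widehat{F}(\alpha,c_1\alpha,\ldots,c_\ell\alpha)$ with $c\in\mathbb{F}_q^\ell$ (the zeroth coordinate is $\alpha$ in every case, and the span is still $\mathbb{F}_q\cdot\alpha$). The same collapsing identity leaves
\[
\widehat{F}(\alpha,0,\ldots,0) \;=\; \Expect{}{F(s,x)\,\chi_{-\alpha}(s)\,1_{{\sf span}(x)\subseteq\ker\alpha}}.
\]
I would then slice the indicator according to $a=\inner{\alpha}{s}\in\mathbb{F}_q$: the slice $T_a$ is precisely the zoom-out set of $S^\star$ with respect to the affine hyperplane $W_a=\sett{y}{\inner{\alpha}{y}=a}$, on which $\chi_{-\alpha}(s)=\omega^{-{\sf Tr}(a)}$ is constant. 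Writing $P_a = \Prob{}{(s,x)\in S^\star\cap T_a}$, the coefficient becomes $\sum_{a\in\mathbb{F}_q}\omega^{-{\sf Tr}(a)}\,P_a$, where zoom-out pseudo-randomness gives the per-slice bound $P_a\leq \xi/q^{\ell+1}$ and pseudo-randomness on the linear part gives the aggregate $\sum_a P_a\leq \xi/q^\ell$.

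The main delicate point, and the step where I expect the real work to be, is extracting the sharp $\xi/q^{\ell+1}$ bound in the last display: a naive triangle inequality over the $q$ slices yields only $\xi/q^\ell$, so the argument has to exploit the cancellation $\sum_{a\in\mathbb{F}_q}\omega^{-{\sf Tr}(a)}=0$ --- for instance by centering each $P_a$ around the common mean $\bar P = q^{-1}\sum_a P_a$ (which is itself bounded by $\xi/q^{\ell+1}$ via the aggregate bound) before applying the triangle inequality to the shifted sum $\sum_a \omega^{-{\sf Tr}(a)}(P_a - \bar P)$.
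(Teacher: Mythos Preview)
Your derivation of the two Fourier coefficients via Claim~\ref{claim:sym} and the orthogonality collapse coincides with the paper's argument, and the first inequality follows exactly as you say.

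For the second inequality you have correctly identified a real issue, but the centering you propose does not close it. After writing $\sum_a\omega^{-{\sf Tr}(a)}P_a = \sum_a\omega^{-{\sf Tr}(a)}(P_a-\bar P)$, the only available pointwise bound is still $|P_a-\bar P|\leq q^{-(\ell+1)}\xi$ (both $P_a$ and $\bar P$ lie in $[0,q^{-(\ell+1)}\xi]$), so the triangle inequality again returns $q\cdot q^{-(\ell+1)}\xi=q^{-\ell}\xi$. In fact the stated bound $\xi/q^{\ell+1}$ is not attainable from the given hypotheses alone: taking $P_a=q^{-(\ell+1)}\xi$ whenever ${\sf Tr}(a)=0$ and $P_a=0$ otherwise gives $\bigl|\sum_a\omega^{-{\sf Tr}(a)}P_a\bigr|=(q/p)\,q^{-(\ell+1)}\xi$, which for $q$ a nontrivial prime power exceeds $\xi/q^{\ell+1}$ by the factor $q/p$.

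The paper's own proof arrives at an \emph{average} over $j$ rather than a sum by invoking the inversion identity~\eqref{eq_4}, but that identity as written is off by a factor of $q$: for each fixed $z$ exactly one indicator $1_{z\in W_{\alpha,j}}$ is nonzero, so the right-hand side of~\eqref{eq_4} equals $\frac{1}{q}\chi_\alpha(z)$, not $\chi_\alpha(z)$. Correcting this turns the paper's $\Expect{j}{\cdots}$ into a sum and its triangle-inequality step likewise yields only $\xi/q^{\ell}$. Fortunately the weaker bound is harmless downstream: in Claim~\ref{claim:up_cl4} the $H_2$ contribution becomes $q^3\xi^2\mu(S^{\star})$ in place of $q\xi^2\mu(S^{\star})$, which is still absorbed into the $O(q^2\xi^{1/4}\mu(S^{\star}))$ bound of Lemma~\ref{lem:4th_norm_ub}. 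So your route is fine provided you state and use the bound $\card{\widehat{f_{1,{\sf aff}}}(\alpha)}\leq \xi/q^{\ell}$ rather than $\xi/q^{\ell+1}$.
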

\begin{proof}
  We begin with the first inequality. From Claim~\ref{claim:f_Fourier} we have
  \[
  \widehat{f_{1,{\sf lin}}}(\alpha)
  =\frac{1}{q-1}\Expect{s,x}{F(s,x)\chi_{0,\alpha,\ldots,\alpha}(s,x)}
  =\frac{1}{q-1}\Expect{s,x,M\in\mathcal{M}}{F(s,Mx)\chi_{0,\alpha,\ldots,\alpha}(s,M x)}.
  \]
  Using the symmetries of $F$ we have that this is equal to
  \[
  \frac{1}{q-1}\Expect{s,x}{F(s,x)\Expect{M\in\mathcal{M}}{\chi_{0,\alpha,\ldots,\alpha}(s,M x)}}
  =\frac{1}{(q-1)(q^{\ell}-1)}\Expect{s,x}{F(s,x)\left(\sum\limits_{M\in\mathbb{F}_q^{\ell}}{\chi_{0,\alpha,\ldots,\alpha}(s,M x)}-1\right)},
  \]
  where in the last transition we turned expectation into sum and added/substracted $M=0$. Note that the sum over $M$ is $q^{\ell}$ if
  $\inner{x_i}{\alpha} = 0$ for all $i$ and $0$ otherwise, so the last expression is equal to
  \[
  \frac{1}{(q-1)(q^{\ell}-1)}\left(\Expect{s,x}{F(s,x)q^{\ell} 1_{{\sf span}(x)\subseteq W_{\alpha}}} - \mu(S^{\star})\right),
  \]
  where $W_{\alpha}$ is the subspace $\sett{z\in\mathbb{F}_q^{k}}{\inner{z}{\alpha} = 0}$. This is equal to
  \[
  \frac{1}{(q-1)(q^{\ell} - 1)}\left(\mu((S^{\star})_{W_{\alpha}, {\sf lin}}) - \mu(S^{\star})\right).
  \]
  The result now follows from the pseudo-randomness of $S^{\star}$ with respect to zoom-outs.

  \skipi
  We now move on to the second inequality. For $\alpha\in\mathbb{F}_q^k\setminus\set{0}$ and $j\in\mathbb{F}_q$, denote
  \[
  W_{\alpha,j} = \sett{z\in\mathbb{F}_q^{k}}{\inner{z}{\alpha} = j}.
  \]
  Then
  \begin{align*}
  1_{z\in W_{\alpha,j}}
  = \sum\limits_{v\in\mathbb{F}_q}\chi_{v}(\inner{z}{\alpha} - j)
  = \sum\limits_{v\in\mathbb{F}_q}\chi_{v}(-j)\chi_{v}(\inner{z}{\alpha})
  &= \sum\limits_{v\in\mathbb{F}_q}\chi_{v}(-j)\omega^{{\sf Tr}(v(z_1\alpha_1 + \ldots + z_k \alpha_k))}\\
  &= \sum\limits_{v\in\mathbb{F}_q}\chi_{v}(-j)\prod\limits_{i=1}^{k}\omega^{{\sf Tr}(v z_i\alpha_i)}\\
  &= \sum\limits_{v\in\mathbb{F}_q}\chi_{v}(-j)\chi_{v\alpha}(z).
  \end{align*}
  We now invert this formula. We multiply this equality by $\chi_{1}(j)$ and average over $j$ to get that
  \begin{equation}\label{eq_4}
  \chi_{\alpha}(z) = \frac{1}{q}\sum\limits_{j\in\mathbb{F}_q}\chi_{1}(j)1_{z\in W_{\alpha,j}},
  \end{equation}
  and we use this equality to establish the second inequality of the lemma.
  \begin{align*}
  \widehat{f_{1,{\sf aff}}}(\alpha)
  =\widehat{F}(\alpha,0,\ldots,0)
  =\Expect{s,x}{F(s,x)\chi_{\alpha}(s)}
  &=\Expect{s,x,M}{F(s-Mx, x)\chi_{\alpha}(s)}\\
  &=\Expect{s,x}{F(s,x)\Expect{M}{\chi_{\alpha}(s+Mx)}}\\
  &=\Expect{s,x}{F(s,x)\chi_{\alpha}(s)\Expect{M}{\chi_{\alpha}(Mx)}}.
  \end{align*}
  As before, the expectation over $M$ is $1$ if $\inner{\alpha}{x_i} = 0$ for all $i$ and $0$ otherwise, so
  \[
  \widehat{f_{1,{\sf aff}}}(\alpha)
  =\Expect{s,x}{F(s,x)\chi_{\alpha}(s)1_{{\sf span}(x)\subseteq W_{\alpha,0}}}.
  \]
  Plugging in~\eqref{eq_4} now yields
  \begin{align*}
  \widehat{f_{1,{\sf aff}}}(\alpha)
  =\Expect{j\in\mathbb{F}_q}{\chi_1(j) \Expect{s,x}{F(s,x)1_{s\in W_{\alpha,j}} 1_{{\sf span}(x)\subseteq W_{\alpha,0}}}}
  &=\Expect{j\in\mathbb{F}_q}{\chi_1(j) \Expect{s,x}{F(s,x)1_{s+{\sf span}(x)\subseteq W_{\alpha,j}}}}\\
  &=\Expect{j\in\mathbb{F}_q}{\chi_1(j) q^{-(\ell+1)}\mu((S^{\star})_{W_{\alpha,j}})}.
  \end{align*}
  Taking absolute value, applying the triangle inequality and using the pseudo-randomness of $S^{\star}$ finishes the proof.
\end{proof}
\section{Proof of Lemma~\ref{lem:4th_norm_ub}}
In this section we prove Lemma~\ref{lem:4th_norm_ub}. The proof proceeds by opening up the $4$-norm and upper bounding different terms in an appropriate way.
Write
\[
g(s,x) = \sum\limits_{M\in\mathcal{B}}{f_{1,{\sf lin}}(\inner{M}{x})},
\qquad\qquad
h(s,x) = \sum\limits_{M\in\mathbb{F}_q^{\ell}}{f_{1,{\sf aff}}(s+\inner{M}{x})}.
\]
Clearly
\begin{equation}\label{eq_6}
F_1(s,x)^4 =  g(s,x)^4 + 4g(s,x)^3h(s,x) + 6g(s,x)^2h(s,x)^2 + 4g(s,x) h(s,x)^3 + h(s,x)^4,
\end{equation}
and we prove that the expectation of all but the last term is very small. As we will see, it is enough for us to upper bound
the expectation of $g(s,x)^4$ and $h(s,x)^4$, but we remark that it is possible to directly analyze each one of these terms separately
in order to establish better bounds.

\begin{claim}\label{claim:up_cl2}
  $\Expect{s,x}{g(s,x)^4} \leq \xi^2\mu(S^{\star}) + 4(q-1)^2\xi\mu(S^{\star}) +  24\frac{\xi^2}{(q-1)^2}\mu(S^{\star})$. In particular,
  we have that $\Expect{s,x}{g(s,x)^4}\leq 30q^2\xi\mu(S^{\star})$.
\end{claim}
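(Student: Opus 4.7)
The plan is to expand
\[
\Expect{s,x}{g(s,x)^4} = \sum_{M_1,M_2,M_3,M_4 \in \mathcal{B}} \Expect{x}{\prod_{i=1}^{4} f_{1,{\sf lin}}(\inner{M_i}{x})},
\]
and partition the ordered $4$-tuples by (i) their multiplicity pattern and (ii) the dimension $d={\sf dim}({\sf span}(M_1,\ldots,M_4))$ in $\mathbb{F}_q^\ell$. When $x\in(\mathbb{F}_q^k)^\ell$ is uniform, $(\inner{M_1}{x},\ldots,\inner{M_4}{x})$ is uniform on the $(dk)$-dimensional subspace of $(\mathbb{F}_q^k)^4$ cut out by the linear relations among the $M_i$. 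Combined with $f_{1,{\sf lin}}(0)=0$ and $\Expect{y\in\mathbb{F}_q^k\setminus\{0\}}{f_{1,{\sf lin}}(y)}=0$ (Claim~\ref{claim:f_orth}), so that $\Expect{y\in\mathbb{F}_q^k}{f_{1,{\sf lin}}(y)}=0$, any tuple containing an ``isolated'' $M_i$ makes the inner expectation vanish. The surviving types are thus: (a) all four $M_i$ equal ($d=1$); (b) two-pair pattern $\{M,M,M',M'\}$ ($d=2$); (c) double-plus-two-singletons with the double lying in the span of the singletons (so $d=2$); and (d) four distinct $M_i$ with $d\in\{2,3\}$, where in the $d=3$ case the unique dependency necessarily has all four coefficients nonzero.

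For type (a) the $|\mathcal{B}|$ tuples each contribute at most $\xi^{2}\Expect{y}{f_{1,{\sf lin}}(y)^{2}}\leq \xi^{2}\mu(S^{\star})/|\mathcal{B}|$ using $|f_{1,{\sf lin}}|\leq\xi$ and Claim~\ref{claim:f_2nd}, summing to $\xi^{2}\mu(S^{\star})$ (the first term of the claim). For type (b) the $3|\mathcal{B}|(|\mathcal{B}|-1)$ tuples each factorise to $\Expect{y}{f_{1,{\sf lin}}(y)^{2}}^{2}\leq(\mu(S^{\star})/|\mathcal{B}|)^2$, totalling at most $3\mu(S^{\star})^{2}$. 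For type (d) with $d=2$ (requiring $q\geq 3$), parameterising the $2k$-dimensional kernel of $(\alpha_1,\ldots,\alpha_4)\mapsto\sum_i \alpha_i\otimes M_i$ and applying Cauchy--Schwarz to the resulting bilinear Fourier sum yields a per-tuple bound of $\Expect{y}{f_{1,{\sf lin}}(y)^{2}}^{2}$; coupled with the count $O(q^{2\ell})$ of such tuples and $|\mathcal{B}|\asymp q^{\ell}/(q-1)$, this totals $O((q-1)^{2}\mu(S^{\star})^{2})$, which (together with type (b)) fits into the middle term $4(q-1)^{2}\xi\mu(S^{\star})$.

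For type (c) and type (d)-with-$d=3$ I use the $\ell^{\infty}$ Fourier bound $|\widehat{f_{1,{\sf lin}}}(\alpha)|\leq \eta \defeq \xi/((q-1)(q^\ell-1))$ from Corollary~\ref{corr:up_fourier_coef}. In type (c), write $A=aB+bC$ with $a,b\in\mathbb{F}_q^\times$; the inner expectation expands as $\sum_\delta \widehat{f_{1,{\sf lin}}^{2}}(\delta)\widehat{f_{1,{\sf lin}}}(-a\delta)\widehat{f_{1,{\sf lin}}}(-b\delta)$, with the $\delta=0$ summand vanishing because $\widehat{f_{1,{\sf lin}}}(0)=0$. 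Bounding one of the two $\widehat{f_{1,{\sf lin}}}$ factors by $\eta$, applying Cauchy--Schwarz to the remaining sum, and using $\|f_{1,{\sf lin}}^{2}\|_2\leq \xi \|f_{1,{\sf lin}}\|_2$ (from $|f_{1,{\sf lin}}|\leq\xi$) produces a per-tuple bound of $\xi\eta\mu(S^{\star})/|\mathcal{B}|$. In type (d) with $d=3$, the kernel is $1$-parameter ($\alpha_i=c_i\beta$ with all $c_i\neq 0$), so the inner expectation equals $\sum_\beta\prod_i \widehat{f_{1,{\sf lin}}}(c_i\beta)$; applying the $\ell^\infty$ bound to two factors and Parseval (after the substitution $c_i\beta\mapsto\beta$, valid because $c_i\neq 0$) to the other two gives at most $\eta^{2}\mu(S^{\star})/|\mathcal{B}|$ per tuple. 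Multiplying by the corresponding tuple counts and simplifying with the identities for $\eta$ and $|\mathcal{B}|$, both contributions come out $O(\xi^{2}\mu(S^{\star})/(q-1)^{2})$, matching the last term of the claim.

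Collecting all types and loosening the numerical constants gives the displayed inequality; the ``in particular'' bound follows from $\mu(S^{\star})\leq\xi$ and $(q-1)^{2}\leq q^{2}$. The most delicate step is the Fourier bookkeeping for types (c) and (d)-with-$d=3$: one must apply the $\ell^{\infty}$ bound of Corollary~\ref{corr:up_fourier_coef} to exactly the right factors, and invoke the fourth-moment control $\Expect{y}{f_{1,{\sf lin}}(y)^{4}}\leq\xi^{2}\Expect{y}{f_{1,{\sf lin}}(y)^{2}}$ through Parseval rather than a trivial estimate, in order to keep the dependence on $q$ polynomial rather than exponential in $k$.
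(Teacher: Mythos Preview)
Your approach is correct and essentially the same as the paper's: both expand the fourth power of $g$, partition the resulting sum by the linear dependence structure of $(M_1,\ldots,M_4)$, and bound the surviving pieces using the three ingredients $\|f_{1,{\sf lin}}\|_\infty\leq\xi$ (zoom-in on linear part), $\|f_{1,{\sf lin}}\|_2^2\lesssim\mu(S^\star)/|\mathcal{B}|$ (Claim~\ref{claim:f_2nd}), and $\|\widehat{f_{1,{\sf lin}}}\|_\infty\leq\eta$ (Corollary~\ref{corr:up_fourier_coef}).

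The main cosmetic difference is bookkeeping. The paper sums over $M_i\in\mathcal{M}$, exploits the scalar invariance $f_{1,{\sf lin}}(cy)=f_{1,{\sf lin}}(y)$, and partitions purely by $d=\dim{\sf span}(M_1,\ldots,M_4)$. Your types (b), (c) and (d)-with-$d=2$ are exactly the paper's ``$H_2$, first type'' (a $4$-tuple in a $2$-plane that admits a partition into two linearly independent pairs); the paper handles all of these at once by the AM--GM step $|f_1f_2f_3f_4|\leq\tfrac12(|f_1f_3|^2+|f_2f_4|^2)$ followed by independence, whereas you treat (c) separately via a Fourier argument. Your bound for type (c) actually comes out as $O(\xi^2\mu(S^\star)/(q-1))$ rather than $O(\xi^2\mu(S^\star)/(q-1)^2)$, so it does not literally slot into the third displayed term for large $q$; but it is dominated by the middle term $4(q-1)^2\xi\mu(S^\star)$ (since $\xi^2/(q-1)\leq\xi\leq(q-1)^2\xi$ for $q\geq 2$), so the displayed inequality and the ``in particular'' bound both go through.
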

\begin{proof}
  We open up according to the definition of $g(s,x)$:
  \begin{align*}
  g(s,x)^4
  &=\sum\limits_{M_1,M_2,M_3,M_4\in\mathcal{B}}{f_{1,{\sf lin}}(\inner{M_1}{x})\cdots f_{1,{\sf lin}}(\inner{M_4}{x})}\\
  &=\frac{1}{(q-1)^4}\sum\limits_{M_1,M_2,M_3,M_4\in\mathcal{M}}{f_{1,{\sf lin}}(\inner{M_1}{x})\cdots f_{1,{\sf lin}}(\inner{M_4}{x})}.
  \end{align*}
  We partition the last sum according to ${\sf dim}({\sf span}(M_1,\ldots,M_4))$. Denote by $H_i$ the collection of $(M_1,\ldots,M_4)$ for which
  this dimension is $i$.
  \paragraph{The contribution from $H_1$.}
  Note that the summands corresponding to $H_1$ may be written as
  \begin{align*}
  &\frac{1}{(q-1)^4}\sum\limits_{M_1\in\mathcal{M},M_2,M_3,M_4\in {\sf span}(M_1)\setminus\set{0}}{f_{1,{\sf lin}}(\inner{M_1}{x})\cdots f_{1,{\sf lin}}(\inner{M_4}{x})}
  &=\frac{1}{q-1}\sum\limits_{M\in\mathcal{M}}f_{1,{\sf lin}}(\inner{M_1}{x})^4.
  \end{align*}
  Taking expectation over $x$ we get that the contribution from $H_1$ is at most
  \[
  \frac{1}{q-1}\card{M}\Expect{z}{f_{1,{\sf lin}}(\inner{M_1}{x})^4}
  \leq \card{\mathcal{B}}\norm{f_{1,{\sf lin}}}_{\infty}^2\norm{f_{1,{\sf lin}}}_{2}^2.
  \]
  Using Claim~\ref{claim:f_2nd} we bound $\norm{f_{1,{\sf lin}}}_{2}^2\leq \frac{\mu(S^{\star})}{\card{\mathcal{B}}}$,
  and using the $\xi$ pseudo-randomness of $S^{\star}$ with respect to zoom ins on the linear part we have $\norm{f_{1,{\sf lin}}}_{\infty}\leq \xi$,
  so the contribution from $H_1$ is at most $\xi^2\mu(S^{\star})$.

  \paragraph{The contribution from $H_2$.}
  There are two cases. Either we can partition $M_1,M_2,M_3,M_4$ into two sets, such that the dimension of the space spanned by each one is $2$, or we cannot.
  The contribution of the first type is at most
  \begin{align*}
  &\frac{1}{(q-1)^4}\sum\limits_{\substack{M_1,M_2\in\mathcal{M}\text{ linearly ind}\\ M_3,M_4\in {\sf span}(M_1,M_2)\text{ linearly ind}}}
  \card{f_{1,{\sf lin}}(\inner{M_1}{x})\cdots f_{1,{\sf lin}}(\inner{M_4}{x})}\\
  &\leq \frac{2}{(q-1)^4}\sum\limits_{\substack{M_1,M_2\in\mathcal{M}\text{ linearly ind}\\ M_3,M_4\in {\sf span}(M_1,M_2)\text{ linearly ind}}}
  \card{f_{1,{\sf lin}}(\inner{M_1}{x})f_{1,{\sf lin}}(\inner{M_2}{x})}^2 + \card{f_{1,{\sf lin}}(\inner{M_3}{x}) f_{1,{\sf lin}}(\inner{M_4}{x})}^2.
  \end{align*}
  Taking expectation, the contribution from $H_2$ is at most
  \[
  4\sum\limits_{M_1,M_2\in\mathcal{M}\text{ linearly independent}}\Expect{s,x}{\card{f_{1,{\sf lin}}(\inner{M_1}{x})f_{1,{\sf lin}}(\inner{M_2}{x})}^2}.
  \]
  As $\inner{M_1}{x}$ and $\inner{M_2}{x}$ are independently uniformly distributed in $\mathbb{F}_q^k$, we get that the last expression is
  \[
  4\card{\mathcal{M}}\norm{f_{1,{\sf lin}}}_2^4
  \leq 4\card{\mathcal{M}}\left(\frac{\mu(S^{\star})}{\card{\mathcal{B}}}\right)^2
  \leq 4(q-1)^2\xi\mu(S^{\star}),
  \]
  where we used Claim~\ref{claim:f_2nd}.

  The contribution of the second type is a multiple of
  \[
  \frac{1}{(q-1)^4}\sum\limits_{\substack{M_1,M_2\in\mathcal{M}\text{ linearly independent}}}
  f_{1,{\sf lin}}(\inner{M_1}{x})f_{1,{\sf lin}}(\inner{M_2}{x})f_{1,{\sf lin}}(\inner{M_1}{x})f_{1,{\sf lin}}(\inner{M_1}{x}),
  \]
  and taking expectation the contribution of this type is proportional to
  \[
  \frac{1}{(q-1)^4}\sum\limits_{\substack{M_1,M_2\in\mathcal{M}\text{ linearly ind}}}
  \Expect{s,x}{ f_{1,{\sf lin}}(\inner{M_1}{x})^3 f_{1,{\sf lin}}(\inner{M_2}{x})},
  \]
  which is equal to $0$ as $\inner{M_1}{x}$ and $\inner{M_2}{x}$ are uniform and independent in $\mathbb{F}_q^k$, and the
  expectation of $f_{1,{\sf lin}}(\inner{M_2}{x})$ is $0$ by Claim~\ref{claim:f_orth}.

  \paragraph{The contribution from $H_3$.}
  The contribution of this case is a constant multiple, not more than $4!$, of
  \[
  \frac{1}{(q-1)^4}\sum\limits_{\substack{M_1,M_2,M_3\in\mathcal{M}\text{ linearly ind}\\ M_4\in {\sf span}(M_1,M_2,M_3)}}
  \Expect{s,x}{f_{1,{\sf lin}}(\inner{M_1}{x})f_{1,{\sf lin}}(\inner{M_2}{x})f_{1,{\sf lin}}(\inner{M_3}{x})f_{1,{\sf lin}}(\inner{M_4}{x})}.
  \]
  If $M_4\in {\sf span}(M_1,M_2)$, the contribution is shown to be $0$ as in the second type in the analysis of $H_2$. Otherwise,
  we get
  \[
  \frac{1}{(q-1)^4}\sum\limits_{\substack{M_1,M_2,M_3\in\mathcal{M}\\\text{ linearly ind}\\ j_1,j_2,j_3\in\mathbb{F}_q\setminus\{0\}}}
  \Expect{s,x}{f_{1,{\sf lin}}(\inner{M_1}{x})f_{1,{\sf lin}}(\inner{M_2}{x})f_{1,{\sf lin}}(\inner{M_3}{x})f_{1,{\sf lin}}(\inner{j_1M_1+j_2M_2+j_3M_3}{x})}.
  \]
  Taking expectation, we get that the contribution is proportional to
  \[
  \frac{1}{(q-1)^4}\card{\set{M_1,M_2,M_3\in\mathcal{M}\text{ linearly ind}}}
  \hspace{-3ex}
  \sum\limits_{j_1,j_2,j_3\in\mathbb{F}_q\setminus\set{0}}
  \Expect{u,v,w}{f_{1,{\sf lin}}(u)f_{1,{\sf lin}}(v)f_{1,{\sf lin}}(w)f_{1,{\sf lin}}(j_1u+j_2v+j_3w)}.
  \]
  Taking the proportionality constant into consideration, and taking $j_1,j_2,j_3$ that maximize this expectation, the contribution from $H_3$ is at most
  \begin{equation}\label{eq_5}
  \frac{4!}{q-1} \card{\mathcal{M}}^3
  \card{\Expect{u,v,w}{f_{1,{\sf lin}}(u)f_{1,{\sf lin}}(v)f_{1,{\sf lin}}(w)f_{1,{\sf lin}}(j_1u+j_2v+j_3w)}},
  \end{equation}
  and to upper bound the last expectation we move to the Fourier domain. A straightforward computation shows that
  \begin{align*}
  &\card{\Expect{u,v,w}{f_{1,{\sf lin}}(u)f_{1,{\sf lin}}(v)f_{1,{\sf lin}}(w)f_{1,{\sf lin}}(j_1u+j_2v+j_3w)}}\\
  &=\card{\sum\limits_{\alpha}\widehat{f_{1,{\sf lin}}}(-j_1\alpha)\widehat{f_{1,{\sf lin}}}(-j_2\alpha)\widehat{f_{1,{\sf lin}}}(-j_3\alpha)\widehat{f_{1,{\sf lin}}}(\alpha)}.
  \end{align*}
  Using Claim~\ref{corr:up_fourier_coef} we get that this is at most
  \begin{align*}
   \frac{\xi^2}{(q-1)^2(q^{\ell} - 1)^2}\sum\limits_{\alpha}\card{\widehat{f_{1,{\sf lin}}}(-j_3\alpha)\widehat{f_{1,{\sf lin}}}(\alpha)}
   &\leq \frac{\xi^2}{(q-1)^2(q^{\ell} - 1)^2}\sum\limits_{\alpha}\card{\widehat{f_{1,{\sf lin}}}(\alpha)}^2\\
   &=\frac{\xi^2}{(q-1)^2(q^{\ell} - 1)^2}\norm{f_{1,{\sf lin}}}_2^2\\
   &\leq \frac{\xi^2 \mu(S^{\star})}{(q-1)(q^{\ell} - 1)^3},
  \end{align*}
  where in the last transition we used Claim~\ref{claim:f_2nd}. Plugging this into~\eqref{eq_5} yields that the contribution from $H_3$ is at most
  \[
  24\frac{\xi^2}{(q-1)^2}\mu(S^{\star}).
  \]
  \paragraph{The contribution from $H_4$.} This is shown to be $0$ similarly to the second type in the analysis of $H_2$.
\end{proof}

Next, we upper bound the expectation of $h(s,x)^4$.
\begin{claim}\label{claim:up_cl4}
  We have
  \[
  \Expect{s,x}{h(s,x)^4} \leq \mu(S^{\star})\norm{f_{1,{\sf aff}}}_{\infty}^2 + 32\xi\mu(S^{\star}) + \xi^2 q\mu(S^{\star}).
  \]
  In particular:
  \begin{enumerate}
    \item $\Expect{s,x}{h(s,x)^4}\leq \mu(S^{\star})\norm{f_{1,{\sf aff}}}_{\infty}^2 + 33q\xi\mu(S^{\star})$;
    \item and weakening further, $\Expect{s,x}{h(s,x)^4}\leq 34q\mu(S^{\star})$.
  \end{enumerate}
\end{claim}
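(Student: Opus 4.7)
The plan is to follow the structure of Claim~\ref{claim:up_cl2}, expanding
$\Expect{s,x}{h(s,x)^4} = \sum_{M_1, \ldots, M_4 \in \mathbb{F}_q^\ell} \Expect{s,x}{\prod_{i=1}^4 f_{1,{\sf aff}}(s + \inner{M_i}{x})}$
and partitioning the $4$-tuples according to the linear dimension $r \in \{0,1,2,3\}$ of $\{M_i - M_1\}_{i \geq 2}$ (equivalently, the affine dimension of $\{M_i\}$). After the change of variable $s' = s + \inner{M_1}{x}$, each inner expectation depends only on the differences $A_i = M_i - M_1$, and the four cases play the roles of $H_1, H_2, H_3, H_4$ in that earlier claim.

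The extreme cases are immediate. For $r = 0$ (all $M_i$ coincide) there are $q^\ell$ tuples, each contributing $\Expect{s}{f_{1,{\sf aff}}(s)^4} \leq \norm{f_{1,{\sf aff}}}_\infty^2 \norm{f_{1,{\sf aff}}}_2^2$, which by Claim~\ref{claim:f_2nd} sums to the first term $\norm{f_{1,{\sf aff}}}_\infty^2 \mu(S^\star)$. For $r = 3$ the three shifts $\inner{A_i}{x}$ are uniform and mutually independent in $\mathbb{F}_q^k$, so conditional on $s$ the factors $f_{1,{\sf aff}}(s + \inner{M_i}{x})$ for $i \geq 2$ decouple, and the inner expectation factors through $\Expect{z}{f_{1,{\sf aff}}(z)} = 0$ by Claim~\ref{claim:f_orth}, contributing $0$.

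The substantive work lies in $r = 2$ and $r = 1$. For $r = 2$, the Fourier constraints reduce the admissible $(\alpha_1, \ldots, \alpha_4)$ to a one-parameter family $\alpha_i = \lambda_i \alpha$ with $\alpha \in \mathbb{F}_q^k$ and fixed $\lambda_i \in \mathbb{F}_q$; applying Corollary~\ref{corr:up_fourier_coef} to two factors and Cauchy--Schwarz combined with Claim~\ref{claim:f_2nd} to the other two gives an inner bound of order $\xi^2 \mu(S^\star)/q^{3\ell + 2}$. Multiplying by the $O(q^{3\ell+2})$ tuples of this type and by the leading $q^\ell$ yields a total of $O(\xi^2 \mu(S^\star))$, absorbed into the $\xi^2 q \mu(S^\star)$ term. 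For $r = 1$, write $A_i = a_i A$ with $A \in \mathbb{F}_q^\ell \setminus \{0\}$ and $(a_2, a_3, a_4) \in \mathbb{F}_q^3 \setminus \{0\}$. Sub-cases with at most one non-zero $a_i$ contribute $0$ (three factors collapse to $f_{1,{\sf aff}}(s)$, leaving a $\Expect{z}{f_{1,{\sf aff}}} = 0$ factor), while sub-cases with two or three non-zero $a_i$'s are handled by a Fourier computation that extracts a factor of $\widehat{f_{1,{\sf aff}}^2}$ (whose $\ell^\infty$ norm is bounded by $\norm{f_{1,{\sf aff}}}_2^2$) and applies Cauchy--Schwarz/Parseval to the remaining factor, yielding $|E| \leq \norm{f_{1,{\sf aff}}}_2^4 \leq \mu(S^\star)^2/q^{2\ell}$. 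Counting $O(q^{\ell+2})$ such tuples and using $\mu(S^\star) \leq \xi$ produces the $\xi \mu(S^\star)$ term.

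The main obstacle arises in the $r = 1$ sub-case where all three $a_i$ are non-zero. Here the Fourier kernel is parameterized by two free variables $\alpha_2, \alpha_3 \in \mathbb{F}_q^k$ with $\alpha_1, \alpha_4$ given as explicit linear combinations, and a naive application of the pointwise Fourier bound would introduce a sum $\sum_\alpha |\widehat{f_{1,{\sf aff}}}(\alpha)|$ carrying an unwanted factor of $q^k$. The remedy is to observe that the linear map $(\alpha_2, \alpha_3) \mapsto (\alpha_1, \alpha_4)$ has determinant $\pm(a_3 - a_2)/a_4$ and hence is invertible whenever $a_2 \neq a_3$; in that regime two independent Cauchy--Schwarz applications reduce the sum to $\norm{f_{1,{\sf aff}}}_2^4$ with no ambient-dimension cost, while the degenerate case $a_2 = a_3$ is handled by the same $\widehat{f_{1,{\sf aff}}^2}$ factorization used in the other sub-cases.
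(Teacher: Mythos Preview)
Your proposal is correct and follows the same overall decomposition as the paper: expand, shift by $\langle M_1,x\rangle$, and partition by $r=\dim\operatorname{span}(M_2-M_1,M_3-M_1,M_4-M_1)$. Your treatments of $r=0$, $r=2$, and $r=3$ coincide with the paper's (the paper's $H_0,H_2,H_3$).

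The only real difference is in the $r=1$ case. You go through Fourier analysis, isolating a convolution factor $\widehat{f_{1,{\sf aff}}^2}$ in the two-nonzero and $a_2=a_3$ sub-cases, and in the generic all-nonzero sub-case you observe that the linear map $(\alpha_2,\alpha_3)\mapsto(\alpha_1,\alpha_4)$ has determinant $(a_3-a_2)/a_4$ and apply Cauchy--Schwarz after the invertible change of variables. This is valid and yields $|E|\leq \norm{f_{1,{\sf aff}}}_2^4$. The paper instead avoids Fourier analysis here entirely: it uses the elementary pointwise bound $|abcd|\leq \tfrac12(a^2b^2+c^2d^2)$, pairing the four factors so that each pair involves two \emph{distinct} shifts (always possible once the all-equal sub-case is excluded), and then uses independence of $s$ and $s+az$ to get $\norm{f_{1,{\sf aff}}}_2^4$ in one line. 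Your route works but requires the extra case split on invertibility; the paper's is shorter and more elementary.

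One minor bookkeeping point: your tuple count ``$O(q^{\ell+2})$'' in the $r=1$ paragraph omits the leading $q^{\ell}$ from the sum over $M_1$; the correct total is $O(q^{2\ell+2})$, which after multiplying by $\mu(S^\star)^2/q^{2\ell}$ still gives the desired $O(q^2\xi\mu(S^\star))$.
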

\begin{proof}
  We open up according to the definition of $h(s,x)$:
  \[
  \Expect{s,x}{h(s,x)^4}
  =\sum\limits_{M_1,M_2,M_3,M_4\in\mathbb{F}_q^{\ell}}{
  \Expect{s,x}{f_{1,{\sf aff}}(s+\inner{M_1}{x})\cdots f_{1,{\sf aff}}(s+\inner{M_4}{x})}}.
  \]
  We make the change of variables $s\leftarrow s+\inner{M_1}{x}$ and get that
  \begin{align*}
  &\Expect{s,x}{h(s,x)^4}\\
  &=\hspace{-4ex}\sum\limits_{M_1,M_2,M_3,M_4\in\mathbb{F}_q^{\ell}}{
  \Expect{s,x}{f_{1,{\sf aff}}(s)f_{1,{\sf aff}}(s+\inner{M_2-M_1}{x})f_{1,{\sf aff}}(s+\inner{M_3-M_1}{x})f_{1,{\sf aff}}(s+\inner{M_4-M_1}{x})}}.
  \end{align*}
  We partition the last sum according to ${\sf dim}({\sf span}(M_2-M_1,M_3-M_1,M_4-M_1))$. For $i=0,\ldots,3$ denote by $H_i$ the collection of $(M_1,\ldots,M_4)$ for which
  this dimension is $i$.
  \paragraph{The contribution from $H_0$.}
  The contribution here is
  \[
  \sum\limits_{M_1\in\mathbb{F}_q^{\ell}}{
  \Expect{s,x}{f_{1,{\sf aff}}(s)^4}}
  \leq q^{\ell}\norm{f_{1,{\sf aff}}}_{\infty}^2\norm{f_{1,{\sf aff}}}_{2}^2.
  \]
  Using Claim~\ref{claim:f_2nd}, this is upper bounded by $\mu(S^{\star})\norm{f_{1,{\sf aff}}}_{\infty}^2$.

  \paragraph{The contribution from $H_1$}
  There are three subcases we consider. Either there are two differences, say $M_2-M_1$, $M_3-M_1$ which are $0$, in which case
  the contribution is
  \[
  \sum\limits_{M_1,M_4\in\mathbb{F}_q^{\ell}}{
  \Expect{s,x}{f_{1,{\sf aff}}(s)^3f_{1,{\sf aff}}(s+\inner{M_4-M_1}{x})}}.
  \]
  The points $s$ and $s+\inner{M_4-M_1}{x}$ are jointedly distributed uniformly on $\mathbb{F}_q^k$, so the expectation above may be broken into
  the product of two expectation, and the expectation of $f_{1,{\sf aff}}(s+\inner{M_4-M_1}{x})$ is $0$ by Claim~\ref{claim:f_orth}. Hence, the contribution
  of this sub-case is $0$.

  In the second subcase, $M_2 - M_1 = M_3-M_1 = M_4-M_1$, and the contribution here is $0$ just like in the previous subcase.
  In the last subcase, we consider $M_1,M_2,M_3$ that maximize the absolute value of the expectation and upper bound the contribution as
  \[
  q^{2\ell} q^2 \card{\Expect{s,x}{f_{1,{\sf aff}}(s)f_{1,{\sf aff}}(s+\inner{M_2-M_1}{x})f_{1,{\sf aff}}(s+\inner{M_3-M_1}{x})f_{1,{\sf aff}}(s+\inner{M_4-M_1}{x})}}.
  \]
  \begin{enumerate}
    \item  If one of the differences is $0$, say $M_2 - M_1 = 0$, then we conclude that $M_3 - M_1$ and $M_4-M_1$ are difference (otherwise we would have been in a previous
  subcase), and the contribution here is at most
  \begin{align*}
  &q^{2\ell} q^2 \card{\Expect{s,x}{f_{1,{\sf aff}}(s)^2 f_{1,{\sf aff}}(s+\inner{M_3-M_1}{x}) f_{1,{\sf aff}}(s+\inner{M_4-M_1}{x})}}\\
  &\leq 2 \cdot q^{2\ell} q^2
  \Big|\Expect{s,x}{f_{1,{\sf aff}}(s)^2 f_{1,{\sf aff}}(s+\inner{M_3-M_1}{x})^2}
  +\Expect{s,x}{f_{1,{\sf aff}}(s)^2f_{1,{\sf aff}}(s+\inner{M_4-M_1}{x})^2}\Big|.
  \end{align*}
  Each one of these expectations is equal to $\norm{f_{1,{\sf aff}}}_2^4$, so we get an upper bound of
  \[
  4 \cdot q^{2\ell} q^2 \norm{f_{1,{\sf aff}}}_2^4
  \leq 4\cdot q^{2\ell} q^2\left(\frac{\mu(S^{\star})}{q^{\ell}}\right)^2
  \leq 4q^2\xi \mu(S^{\star}),
  \]
  where we used Claim~\ref{claim:f_2nd}.
    \item Otherwise, all three differences are non $0$ and at least two are different, say $M_3-M_1\neq M_4-M_1$.
    We thus bound the contribution by
    \begin{align*}
    &q^{2\ell} q^2\Big|\E_{s,x}\Big[f_{1,{\sf aff}}(s)f_{1,{\sf aff}}(s+\inner{M_2-M_1}{x})\\
    &\qquad\qquad\cdot f_{1,{\sf aff}}(s+\inner{M_3-M_1}{x}) f_{1,{\sf aff}}(s+\inner{M_4-M_1}{x})\Big]\Big|\\
    &\leq 2q^{2\ell} q^2\Big|\E_{s,x}\Big[f_{1,{\sf aff}}(s)^2f_{1,{\sf aff}}(s+\inner{M_2-M_1}{x})^2\\
    &\qquad\qquad\qquad+ f_{1,{\sf aff}}(s+\inner{M_3-M_1}{x})^2 f_{1,{\sf aff}}(s+\inner{M_4-M_1}{x})^2\Big]\Big|.
  \end{align*}
  The last expectation is equal to $2\norm{f_{1,{\sf aff}}}_{2}^4$, so we get contribution of $4q^{2\ell} q^2\left(\frac{\mu(S^{\star})}{q^{\ell}}\right)^2
  \leq 4q^2\xi\mu(S^{\star})$.
  \end{enumerate}

%

  \paragraph{The contribution from $H_2$.}
  Let $M_1,M_2,M_3,M_4$ that maximize this case. Then we need to bound
  \[
  q^{3\ell} q^3 \card{\Expect{s,x}{f_{1,{\sf aff}}(s)f_{1,{\sf aff}}(s+\inner{M_2-M_1}{x})f_{1,{\sf aff}}(s+\inner{M_3-M_1}{x})f_{1,{\sf aff}}(s+\inner{M_4-M_1}{x})}}.
  \]
  Suppose without loss of generality $M_2-M_1, M_3-M_1$ constitute a basis for ${\sf span}(M_2-M_1,M_3-M_1, M_4-M_1)$. Let $j_3,j_2$ be such that
  $M_4-M_1 = j_3(M_3-M_1) + j_2(M_2-M_1)$, and make the change of variables $u = s+\inner{M_2-M_1}{x}$, $w=s+\inner{M_3-M_1}{x}$ and note that
  $(s,u,v)$ are distributed uniformly on $(\mathbb{F}_q^{k})^3$. Thus, the above expectation is
  \[
  \Expect{s,u,w}{f_{1,{\sf aff}}(s)f_{1,{\sf aff}}(u)f_{1,{\sf aff}}(w)f_{1,{\sf aff}}((1-j_3-j_2)s + j_2u + j_3w)}.
  \]
  If $j_2 = 0$, $j_3 = 0$ or $j_2+j_3 = 1$, then this expectation is $0$. Indeed, say $j_2 = 0$, then $u$ only appears in the second term and is thus
  independent of the rest, and by Claim~\ref{claim:f_orth} its expectation is $0$. We thus assume otherwise, and move to the Fourier domain. A straightforward
  computation shows that
  \[
  \sum\limits_{\alpha}{\widehat{f_{1,{\sf aff}}}((j_2+j_3-1)\alpha)\widehat{f_{1,{\sf aff}}}(-j_2\alpha)\widehat{f_{1,{\sf aff}}}(-j_3\alpha)\widehat{f_{1,{\sf aff}}}(\alpha)}.
  \]
  Taking absolute value, the absolute value of this sum is at most
  \[
  \norm{\widehat{f_{1,{\sf aff}}}}_{\infty}^2
  \card{\sum\limits_{\alpha}{\widehat{f_{1,{\sf aff}}}(-j_3\alpha)\widehat{f_{1,{\sf aff}}}(\alpha)}}
  \leq\norm{\widehat{f_{1,{\sf aff}}}}_{\infty}^2
  \sum\limits_{\alpha}{\widehat{f_{1,{\sf aff}}}(\alpha)^2}
  \leq\norm{\widehat{f_{1,{\sf aff}}}}_{\infty}^2\norm{\widehat{f_{1,{\sf aff}}}}_{2}^2.
  \]
  Using Claim~\ref{claim:f_2nd} and Corollary~\ref{corr:up_fourier_coef} we may bound this by $\frac{\xi^2}{q^{3\ell+2}}\mu(S^{\star})$,
  and plugging this above we get that the contribution from $H_2$ is at most
  \[
  q^{3\ell} q^3\frac{\xi^2}{q^{3\ell+2}}\mu(S^{\star})
  =\xi^2 q\mu(S^{\star}).
  \]
  \paragraph{The contribution from $H_3$.}
  In this case, the joint distribution of $s$, $s+\inner{M_2-M_1}{x}$, $s+\inner{M_3-M_1}{x}$, $s+\inner{M_4-M_1}{x}$ is
  uniform over $(\mathbb{F}_q^k)^4$, so the contribution is $0$ by Claim~\ref{claim:f_orth}.
\end{proof}

\begin{claim}\label{claim:up_cl6}
  $\Expect{s,x}{4g(s,x)^3h(s,x) + 6g(s,x)^2h(s,x)^2 + 4g(s,x)h(s,x)^3}\leq 800q^2\xi^{1/4}\mu(S^{\star})$.
\end{claim}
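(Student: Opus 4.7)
The plan is to bound each of the three cross terms by a direct application of Hölder's inequality, reducing everything to the already-proven bounds on $\Expect{s,x}{g(s,x)^4}$ and $\Expect{s,x}{h(s,x)^4}$ from Claims~\ref{claim:up_cl2} and~\ref{claim:up_cl4}. Concretely, I would apply Hölder with the conjugate exponent pairs $(4/3, 4)$, $(2,2)$, and $(4, 4/3)$ respectively to write
\[
\Expect{s,x}{g^3 h} \leq \Expect{s,x}{g^4}^{3/4} \Expect{s,x}{h^4}^{1/4},
\]
\[
\Expect{s,x}{g^2 h^2} \leq \Expect{s,x}{g^4}^{1/2} \Expect{s,x}{h^4}^{1/2},
\]
\[
\Expect{s,x}{g h^3} \leq \Expect{s,x}{g^4}^{1/4} \Expect{s,x}{h^4}^{3/4}.
\]

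Next I would plug in the bounds $\Expect{s,x}{g^4}\leq 30q^2\xi\mu(S^{\star})$ from Claim~\ref{claim:up_cl2} and $\Expect{s,x}{h^4}\leq 34q\mu(S^{\star})$ from Claim~\ref{claim:up_cl4} (the weaker form of the second bullet). This yields
\[
4\Expect{s,x}{g^3 h} \lesssim q^{7/4}\xi^{3/4}\mu(S^{\star}),
\qquad
6\Expect{s,x}{g^2 h^2} \lesssim q^{3/2}\xi^{1/2}\mu(S^{\star}),
\qquad
4\Expect{s,x}{g h^3} \lesssim q^{5/4}\xi^{1/4}\mu(S^{\star}),
\]
with explicit constants arising from the $4,6,4$ prefactors and the powers of $30$ and $34$.

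Among the three resulting terms, the one with the highest power of $\xi$ in the denominator is $\Expect{s,x}{gh^3}$, and it carries the $\xi^{1/4}$ dependence we are targeting. Since by hypothesis $\xi$ is small (we are working in the regime $\xi \leq q^{-M}$ with $M$ a large constant), the $\xi^{1/2}$ and $\xi^{3/4}$ terms are dominated by the $\xi^{1/4}$ term, and a short arithmetic verification shows that the sum is bounded by $800 q^2 \xi^{1/4}\mu(S^{\star})$ with plenty of slack. I expect no real obstacle here: the only thing that requires care is tracking the constants so that the final coefficient $800$ is not exceeded, which amounts to checking that $124 q^{7/4}\xi^{3/4} + 192 q^{3/2}\xi^{1/2} + 132 q^{5/4}\xi^{1/4} \leq 800 q^2 \xi^{1/4}$, i.e.~that $132 + o_q(1) \leq 800 q^{3/4}$, which holds for all prime powers $q\geq 2$.
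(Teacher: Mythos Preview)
Your proposal is correct and follows essentially the same approach as the paper: apply H\"older's inequality to each cross term to reduce to $\norm{g}_4$ and $\norm{h}_4$, then plug in the bounds from Claims~\ref{claim:up_cl2} and~\ref{claim:up_cl4}. The only cosmetic difference is that the paper bounds $\xi^{3/4},\xi^{1/2}\leq \xi^{1/4}$ directly (valid since $\xi\leq 1$) rather than appealing to the regime $\xi\leq q^{-M}$, so your invocation of that hypothesis is unnecessary but harmless.
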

\begin{proof}
Using Holder's inequality, we have
\[
\Expect{s,x}{4g(s,x)^3h(s,x)+6g(s,x)^2h(s,x)^2+4g(s,x)h(s,x)^3}\leq
4\norm{g}_4^3\norm{h}_4 + 6\norm{g}_4^2\norm{h}_4^2 + 4\norm{g}_4\norm{h}_4^3.
\]
Use Claim~\ref{claim:up_cl2} and the second item of Claim~\ref{claim:up_cl4} to bound each term on the right hand side,
we get that it is at most
\[
4(30q^2\xi\mu(S^{\star}))^{3/4}(34 q\mu(S^{\star}))^{1/4}
+6(30q^2\xi\mu(S^{\star}))^{1/2}(34 q\mu(S^{\star}))^{1/2}
+4(30q^2\xi\mu(S^{\star}))^{1/4}(34 q\mu(S^{\star}))^{3/4}.
\]
Further upper bounding this we get it is at most
\[
14 \cdot 34q^2\xi^{1/4}\mu(S^{\star})
\leq 800q^2\xi^{1/4}\mu(S^{\star}).\qedhere
\]
\end{proof}

We are now ready to prove Lemma~\ref{lem:4th_norm_ub}.
\begin{proof}[Proof of Lemma~\ref{lem:4th_norm_ub}]
  Take expectation over~\eqref{eq_6} and use Claims~\ref{claim:up_cl2},~\ref{claim:up_cl4} (first item) and~\ref{claim:up_cl6} to get that
  \[
  \norm{F_1}_4^4
  \leq 30q^2\xi\mu(S^{\star}) + 800q^2\xi^{1/4}\mu(S^{\star}) + \mu(S^{\star})\norm{f_{1,{\sf aff}}}_{\infty}^2 + 33q\xi\mu(S^{\star}),
  \]
  which impllies
  \[
   \norm{F_1}_4^4
  \leq 863q^2\xi^{1/4}\mu(S^{\star}) + \mu(S^{\star})\norm{f_{1,{\sf aff}}}_{\infty}^2.
  \]
  Finally, note that $\norm{f_{1,{\sf aff}}}_{\infty}\leq a$, so we conclude that
  \[
  \norm{F_1}_4^4\leq 863q^2\xi^{1/4}\mu(S^{\star}) + a^2\mu(S^{\star}).\qedhere
  \]
\end{proof}

\end{document}